\title{A Complete and Natural Rule Set \\ for Multi-Qutrit Clifford Circuits}
\author{Sarah Meng Li\footnote{Informatics Institute, University of Amsterdam, Amsterdam, Netherlands.}\; \footnote{Department of Combinatorics \& Optimization, University of Waterloo, Waterloo, Canada.}\; \footnote{Institute for Quantum Computing, University of Waterloo, Waterloo, Canada.}\; \footnote{Perimeter Institute for Theoretical Physics, Waterloo, Canada.}\; \email{sarah.li@uva.nl}\and  Michele Mosca \footnotemark[2]\; \footnotemark[3] \;\footnotemark[4] \email{michele.mosca@uwaterloo.ca}\and  Neil J. Ross \footnote{Department of Mathematics and Statistics, Dalhousie University, Halifax, Canada.} \email{neil.jr.ross@dal.ca}\and John van de Wetering \footnotemark[1] \email{john@vdwetering.name} \and Yuming Zhao \footnote{Department of Pure Mathematics, University of Waterloo, Waterloo, Canada.}\; \footnotemark[3] \;\footnote{QMATH, Department of Mathematical Sciences, University of Copenhagen, Copenhagen, Denmark.} \email{yuming@math.ku.dk}
}
\begin{document}
\maketitle

\begin{abstract}
We present a complete set of rewrite rules for $n$-qutrit Clifford circuits where $n$ is any non-negative integer. This is the first completeness result for any fragment of quantum circuits in odd prime dimensions. We first generalize Selinger’s normal form for $n$-qubit Clifford circuits to the qutrit setting. Then, we present a rewrite system by which any Clifford circuit can be reduced to this normal form. We then simplify the rewrite rules in this procedure to a small natural set of rules, giving a clean presentation of the group of qutrit Clifford unitaries in terms of generators and relations. 
\end{abstract}

\section{Introduction}
\label{sec:introduction}
Most research on quantum computing has focused on qu\emph{bits}, two-dimensional quantum systems. In recent years, research on qu\emph{dit} quantum computing, where the fundamental state space has a higher dimension than two, has witnessed a surge in interest and activity~\cite{chi2022programmable,goss2022high,Got99,gustafson2025synthesis,nikolaeva2024scalable,Rin21,PhysRevLett.134.050601,wang2020qudits}. Qudits offer richer algebraic structures that result in unique error correction capabilities~\cite{PhysRevLett.113.230501,FG24}, lower overhead magic state distillation~\cite{PhysRevX.2.041021,PS24}, stronger quantum correlations~\cite{Bru08,Des22}, and improvements to various quantum algorithms~\cite{PhysRevA.96.012306,PhysRevA.103.032417}. On the physical side, most realizations of qubits naturally have higher energy states so they could readily encode qudits. This allows for greater information density, but comes at the cost of increased difficulty in system control~\cite{Blo21,Hrm22,Low20,Rin21,Win23,Ye18,Yur20}.

In this paper, we focus on qu\emph{trits}, the three-dimensional quantum systems. Qutrits strike a balance between enabling the benefits of working with higher-dimensional systems, while not being too complicated to prevent hardware implementations. In particular, we study the group of qutrit \emph{Clifford unitaries}, the unitaries that normalize the qutrit analogue of the Pauli group. As for qubits, the qutrit Clifford group plays a crucial role in the stabilizer formalism for ternary quantum systems, serving as the backbone for designing fault-tolerant quantum gates and protocols~\cite{aigner2025qudit,Hus12,PhysRevLett.115.030501,Got99,PhysRevA.102.042409}.

Using a small set of generators for the qutrit Clifford group---the Hadamard, S and CZ gates---we find a simple set of relations that suffice to prove all equalities between Clifford circuits. That is, we prove that the rule set in \cref{fig:rewriterules} is \emph{complete}. This is the first completeness result for a non-trivial quantum circuit fragment for higher-dimensional qudits. Our proof strategy is generic and is intended to generalize to qudits of higher dimensions.

\begin{figure}[!thb]
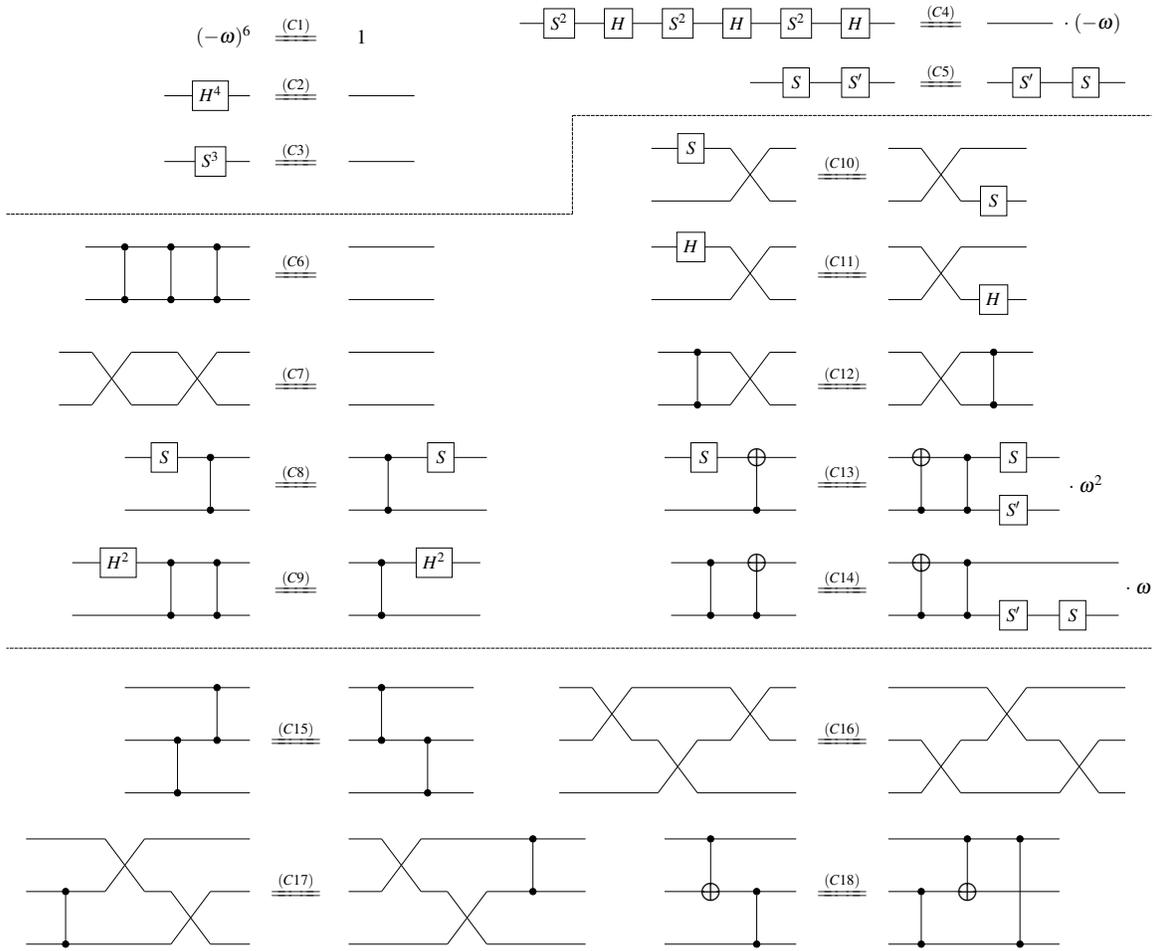

    \begin{equation*}
        \scalebox{0.7}{\tikzfig{figures/RewriteRules/RewriteRulesv2}}
    \end{equation*}
    \vspace{-.4cm}
  \caption{A complete set of rewrite rules for $n$-qutrit Clifford circuits. Note that $\omega$, $S'$, $\SWAP$, $\CX$, $\XC$, and the $\CZ$ acting on the first and third qutrits are derived generators defined in \cref{fig:summary-derived-generators}. Let $g$ be a gate and $m$ be a natural number. $g^m$ denotes $m$ copies of $g$ gate. In addition to these circuit relations, we assume the standard spatial relations which allow us to commute gates acting on disjoint subsets of qutrits. In the remainder of this paper, we use `relations' and `(rewrite) rules' interchangeably.}
  \label{fig:rewriterules}
\end{figure}

A complete graphical calculus for qutrit linear maps in the stabilizer fragment exists using the ZX-calculus~\cite{poor2023qupit,wang2018qutrit}. However, ZX-diagrams represent general linear maps, not just unitaries. Hence, this completeness result does not restrict to the unitary Clifford group in any straightforward way. In general, a completeness result through a graphical calculus for all linear maps of a certain type does not seem to easily imply a method of deriving a calculus for just the unitaries. Consider for instance that a complete ZX-calculus for universal qubit linear maps was found in 2017~\cite{jeandel2018diagrammatic,ng2017universal}, but it was not until 2022 that a complete calculus of universal quantum circuits appeared and it used very different methods~\cite{clement2023complete}. Moreover, there is no complete calculus for, for instance, Toffoli-Hadamard circuits, even though there is one for the corresponding set of linear maps~\cite{backens2023completeness}.

One way to prove completeness for a fragment of quantum circuits (i.e., a collection of unitary operators) is to define a unique normal form for each unitary and to show that every circuit can be rewritten to this normal form. This is how the completeness result for qubit Clifford circuits was established in~\cite{makary2021generators,selinger2015generators}. Our proof follows along similar lines, but with significant complications due to the increased system dimension and the additional degrees of freedom. At a high level, we can view the normal form proposed by~\cite{selinger2015generators} as encoding the \emph{stabilizer tableau} of a Clifford unitary. Each part of the normal form determines the image of Pauli generators under the action of this Clifford unitary. To prove completeness, we need to show that when a Clifford circuit is in normal form and it is composed with a Clifford generator (i.e., a Hadamard, S, or CZ gate), the resulting circuit can be rewritten into a new normal form. This process, that we call \emph{Clifford normalization}, is sketched in \cref{fig:normalization2}. Then our problem is reduced to finding rewrite rules to push Clifford generators through each part of the normal form in a systematic manner.

\vspace{-.5 cm}

\begin{figure}[!htb]
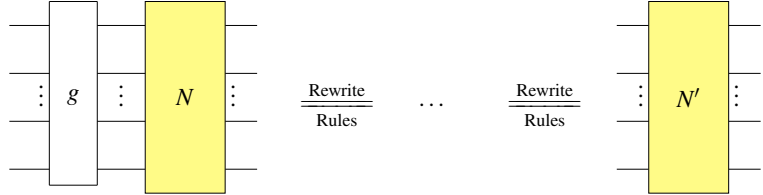

    \centering
    \[
 \scalebox{.85}{\tikzfig{figures/Normalization/normalization3}}
\]
\vspace{-.3 cm}
    \caption{$N$ is a Clifford circuit in normal form. $g$ is either an $H$ or $S$ gate acting on some qutrit, or a $\CZ$ gate acting on two adjacent qutrits. After pushing $g$ through each part of $N$, the normal form is updated to $N'$.}
    \label{fig:normalization2}
\end{figure}

The problem in generalizing this to qutrits is that the choice of each normal form component, what we call a \emph{normal box}, is not unique, but has some degrees of freedom. Different choices result in different `dirty gates' appearing when we push a generator through a normal box. \cref{fig:residual-dirty-gates-2} shows a concrete example. Making a wrong choice might produce dirty gates that cannot be properly pushed through the subsequent normal boxes. In the worst case, the Clifford normalization will not terminate. Another complication is that there are many more combinations of normal boxes and generators to check due to the increased dimension. All of this means that in the qubit case, the design space was small enough for a brute-force approach to succeed in finding a suitable way to construct the normal boxes, but with qutrits, the significantly larger design space makes such an approach much more challenging.

\begin{figure}[!htb]
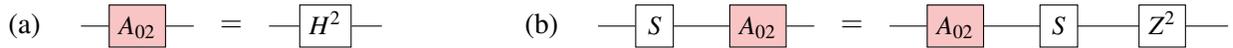

\[
\scalebox{1}{\tikzfig{figures/Normalization/A02New}}
\]

\vspace{-.3 cm}

\caption{(a) shows a single-qutrit normal box called $A_{02}$, which in this particular case is equal to $H^2$. (b) shows a \emph{box relation}, where an $S$ gate is pushed through $A_{02}$ and produces dirty gates $S$ and $Z^2$. $H^2$ and $Z^2$ denote two copies of Hadamard and Pauli $Z$ gates respectively.}
\label{fig:residual-dirty-gates-2}
\end{figure}

Instead, we identify additional symmetries that each normal box must satisfy, which then guarantee that dirty gates have the desired form for most of the box relations. This restricts the design space and allows us to verify on a higher level that our normal boxes have the right properties for the Clifford normalization to terminate (see \cref{sec:design} for the details). With the correct definitions of the normal boxes in hand, we can then write down all the relations needed to push the generators through the normal boxes, what we call the \emph{box relations}. In \cref{app:relations}, we present the $380$ box relations, which together are complete for multi-qutrit Clifford circuits. 

The second major component of this work is to reduce the $380$ box relations to a much smaller, more canonical set of rewrite rules. In the supplement~\cite{Supplement2024}, we prove that the $18$ gate relations presented in~\cref{fig:rewriterules} imply these $380$ box relations. In order to present these rules in the nicest possible fashion, we use some derived generators that are defined in~\cref{fig:summary-derived-generators}.

Overall, \cref{fig:rewriterules} consists of natural and intuitive rules regarding the order of generators, commutations of generators, and properties of the $\SWAP$ gate. The only exception to this is (C4), which can be understood as describing an Euler decomposition of the Hadamard gate. Up to a global phase, it can be reframed with some rewriting to $H=S^2(HS^2H^{-1})(S')^2$, meaning $H$ can be decomposed into a $Z$-, $X$- and then another $Z$-rotation. Note that the derived generator $S'= H^2SH^2$ in \cref{fig:summary-derived-generators} has no qubit counterpart, but it is a $Z$ rotation that we can use together with $S$ to define the Pauli $Z$ gate, $Z=(S')^2 S$. See also the discussion in Section 1.1 of \cite{Supplement2024}. The commutation of $S$ and $S'$ in (C5) is the only single-qutrit rule that has no qubit counterpart.

Generalizing from qutrits to qu\emph{pits}, where each wire carries a $p$-dimensional quantum system for an odd prime $p$, we can construct a version of \cref{fig:rewriterules} with sound equations for qupit Clifford operators. Although proving that such a generalized rule set is complete still requires substantial work, \cref{fig:rewriterules} presents a natural framework for understanding gate interactions. This, in turn, offers a promising path toward extending the qutrit Clifford completeness to other odd prime dimensions.

The rest of the paper is organized as follows. In \cref{sec:preliminaries}, we introduce the basic concepts and conventions that will be used throughout this paper. In \cref{sec:normal-form}, we define normal forms and prove that every multi-qutrit Clifford circuit admits a unique normal form. In \cref{sec:normalization-with-relations}, we describe how to rewrite any Clifford circuit to this normal form, based on which we find a complete set of rewrite rules for qutrit Clifford circuits. We end with some final remarks and open questions in \cref{sec:conclusion}.
\section{Preliminaries}
\label{sec:preliminaries}
We are interested in \emph{matrices}, which we sometimes call \emph{operators}, or \emph{gates}. We write quantum circuits from left to right, which is in the opposite order of writing the matrix multiplication.

Let $m\in \N$. We write $I_m$ for the $m\times m$ identity matrix, dropping the subscript $m$ when the dimension is clear from the context. An $n$-qutrit operator is a $3^n \times 3^n$ matrix. We say that an $n$-qutrit operator $U$ is a \emph{scalar}, if it is a scalar multiple of the identity operator, i.e., $U = \lambda I$, for some $\lambda \in \C$. For simplicity, we write $U = \lambda$ in such a case. In what follows, we use `scalar' and `phase' interchangeably.

\subsection{Clifford Groups and Circuits}
\label{subsec:unitaries}

We fix $\omega = e^{2\pi i/3}$ and $-\omega = e^{-2\pi i/6}$ as our \emph{primitive third} and \emph{sixth roots of unity} respectively. 
\begin{definition}
    The single-qutrit \emph{Pauli $X$ and $Z$ gates} are defined as follows
    \[
            X=\begin{bmatrix}
            0 & 0 & 1\\
            1 & 0 & 0\\
            0 & 1 & 0
        \end{bmatrix}
        \qquad
        \mbox{and}
        \qquad
        Z=\begin{bmatrix}
            1 & 0 & 0\\
            0 & \omega & 0\\
            0 & 0 & \omega^2
        \end{bmatrix}.
    \]        
\end{definition}

For any $m \in \N$, we write $\Z_m =\{0, 1, \ldots, m-1\}$ equipped with the standard addition modulo $m$. 
\begin{definition}
    \label{def:Pauli-groups}
    For $n \in \N$, the \emph{$n$-qutrit Pauli group} $\Pauli_n$ is defined as
    \[
    \Pauli_n = \{ \omega^c P_1 \otimes \cdots \otimes P_{n};\; c\in \Z_3 \mbox{ and } P_j = X^{a_j}Z^{b_j} \mbox{ for some } a_j,b_j \in\Z_3\}.
    \]
\end{definition}

We now define the \emph{Clifford group}, which will be the focus of this paper. As \cref{def:Clifford-unitaries} states, the $n$-qutrit Clifford group is the normalizer of the Pauli group in the $3^n\times 3^n$ unitary group $\mathcal{U}(3^n)$.

\begin{definition}
\label{def:Clifford-unitaries}
    For $n \in \N$, the \emph{$n$-qutrit Clifford group} $\Clifford_n$ is defined as 
    \[
    \Clifford_n = \{U \in \mathcal{U}({3^n});\; UPU^\dagger \in \Pauli_n  \mbox{ for all } P \in \Pauli_n\}.
    \]
\end{definition}

\begin{definition}
The single-qutrit \emph{$H$ and $S$ gates} and the two-qutrit \emph{$\CZ$ gate} are defined as follows
\[
        H=\frac{1}{\omega^2 - \omega}\begin{bmatrix}
            1 & 1 & 1\\
            1 & \omega & \omega^2\\
            1 & \omega^2 & \omega
        \end{bmatrix}, 
        \quad 
        S=\begin{bmatrix}
            \omega & 0 & 0\\
            0 & \omega & 0\\
            0 & 0 & 1
        \end{bmatrix}, 
        \quad
        \mbox{and}
        \quad
        \CZ=\diag(1, 1, 1, 1, \omega, \omega^2, 1, \omega^2, \omega).
\]
\label{def:H-S-gates}
\end{definition}

It can be verified by direct calculation that $H,S\in\Clifford_1$ and $\CZ\in\Clifford_2$. The $H$ gate is also known as the \emph{Hadamard gate}, while the $\CZ$ gate is also known as the \emph{controlled-Z gate}. We chose the global phase in the $H$ and $S$ gates to make our results simpler to state. 

Note that \cref{def:Clifford-unitaries} considers all possible global phases $e^{i\alpha}$ as being Clifford, where $\alpha\in \R$. This makes the group uncountably infinite, but $H$, $S$ and $\CZ$ as matrices have entries in the ring $\Z[1/3,\omega]$ and hence can only represent a subset of these phases. In what follows, we restrict our attention to the subset of Clifford operators that can be represented as matrices over $\Z[1/3,\omega]$. As a result, \cref{subsec:scalars} shows that the global phases can only be powers of $-\omega$. In \cref{prop:-normal-form}, we show that $-\omega$, $H$, $S$, and $\CZ$ generate $\Clifford_n$.

It is clear from \cref{def:Clifford-unitaries}, that $\Pauli_n \subseteq \Clifford_n$. In fact, $Z$ and $X$ can be expressed using $H$ and $S$ as $Z=SH^2S^2H^2$ and $X=HZH^3$. Note that when we write in text form, a juxtaposition of gates means we do the usual matrix multiplication. Since the matrices of $Z$, $S$, and $\CZ$ gates are diagonal, we sometimes refer to them as \emph{diagonal gates}.

We can also denote Clifford operators using the diagrammatic language of quantum circuits. \cref{fig:circuit-notation} shows the circuit representation of the $H$, $S$, and $\CZ$ gates (as well as that of the identity). 

\vspace{-.3 cm}

\begin{figure}[!htb]
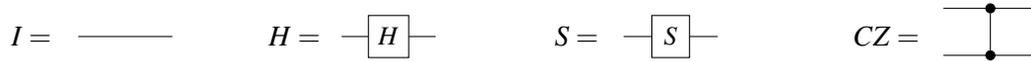

        \centering
        \[
        \scalebox{1}{\tikzfig{figures/Preliminaries/gatenotation2}}
        \]
        \vspace{-.3 cm}
        \caption{Circuit notation for the one- and two-qutrit Clifford generators.}
        \label{fig:circuit-notation}
    \end{figure}

Next, we introduce \emph{derived} generators that are summarized in~\cref{fig:summary-derived-generators}. They are useful for defining the Clifford normal forms in \cref{sec:normal-form}, designing the normal boxes in \cref{sec:design}, deriving box relations in \cref{app:relations}, and carrying out the relation reduction in the supplement~\cite{Supplement2024}. One can think of them as circuit shorthands over the generators $-\omega$, $H$, $S$, and $\CZ$ gates. $-1$ (D1) and $\omega$ (D2) are used to simplify the expression of box relations. As opposed to the qubit $H$ gate, the qutrit $H$ gate is of order 4, $H^4 = 1$. Up to the global phase $-1$, $H^2$ acts as a `negation', $H^2\ket{x} = \ket{-x}$. Here, the negation is taken modulo 3, so that $\ket{0}\mapsto \ket{0}$, $\ket{1}\mapsto \ket{2}$, and $\ket{2}\mapsto \ket{1}$. We then see that $S'$ (D3) is also a diagonal gate, and as a matrix $S' = \diag(\omega, 1, \omega)$.

\CX (D7) is short for a \emph{controlled-NOT gate}. The control and target of a \CX gate are on the top and bottom qutrit wires, respectively. Let $x,y\in \Z_3$. $\CX\ket{x,y} = \ket{x,x+y}$, where the addition is taken in $\Z_3$ and hence modulo 3. A \XC gate (D8) has an opposite control-target arrangement. 

We assume that the multi-qutrit (derived) generators only act on adjacent qutrits and we call them \emph{local gates}. This is without loss of generality, as gates on non-adjacent qutrits can be equivalently expressed using SWAP gates. In \cref{fig:summary-derived-generators}, we write a gate $G$ acting on two non-adjacent qutrits as a circuit shorthand for a local gate surrounded by \SWAP gates. (D9), (D10), and (D11) are called the \emph{remote $\CZ$, $\CX$, and $\XC$ gates}.

\begin{figure}[!htb]
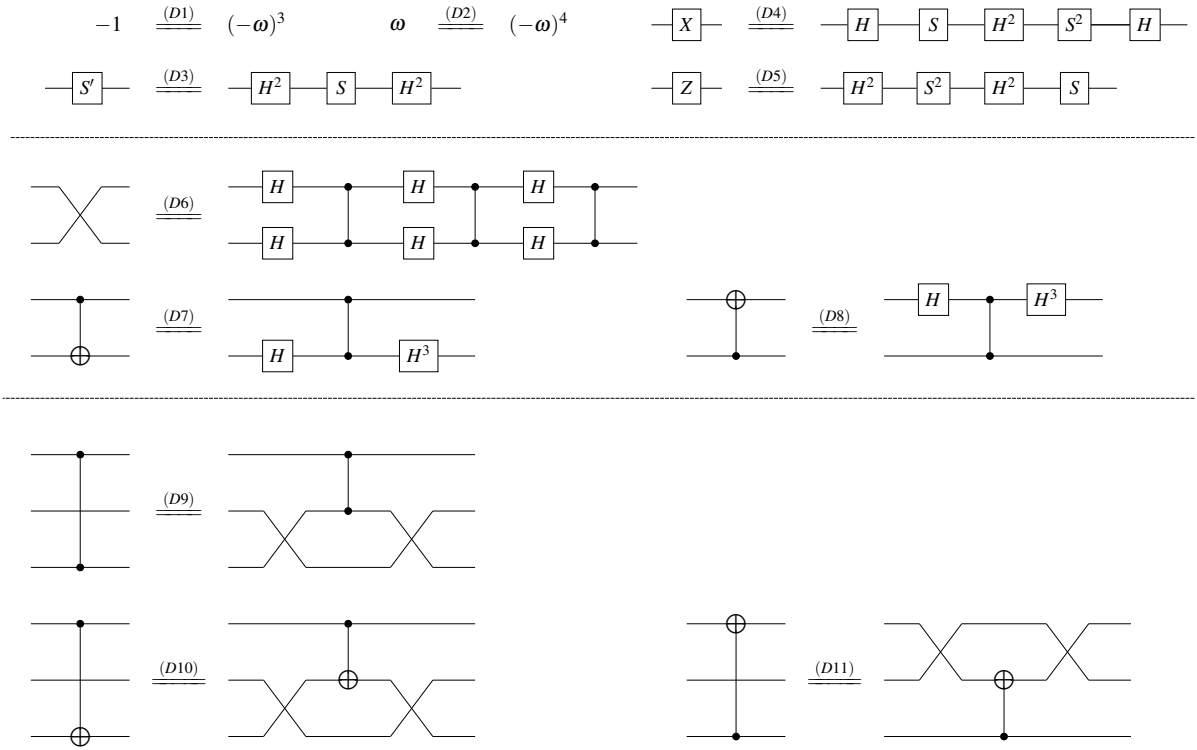

    \[
    \scalebox{.75}{\tikzfig{figures/DerivedGenerators/DerivedGeneratorsv2}}
    \]
    \vspace{-.3 cm}
    \caption{Circuit notation for the derived Clifford generators.}
    \label{fig:summary-derived-generators}
\end{figure}

Finally, we introduce the notation conventions used in the rest of this paper. For a circuit $C \in \Clifford_n$, its global phase $\lambda\in\C$ is placed at the end of the circuit, as shown in \cref{fig:circuit-conventions}.(a). Let $m \in \N$. In \cref{fig:circuit-conventions}.(b), we write $C^m$ to denote the $m$-fold composition of $C$ with itself.

\begin{figure}[!htb]
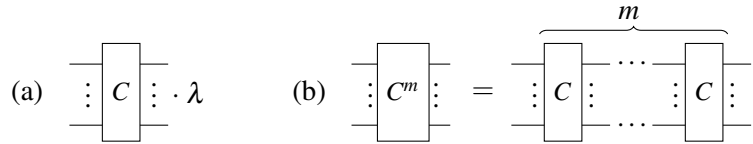

    \centering
    \tikzfig{figures/Preliminaries/m-foldgate}
    \vspace{-.1 cm}
    \caption{Notation conventions.}
    \label{fig:circuit-conventions}
\end{figure}

\vspace{-.4 cm}

\begin{remark}\label{rem:groupoid}
    Instead of viewing each $\Clifford_n$ as a separate group, we could consider the collection of all these groups as a single \emph{groupoid}: a category where every morphism is an isomorphism. In this category, the objects are $\N$, corresponding to the number of qutrits. This then has a monoidal structure given by $a\otimes b := a+b$ on objects $a,b\in\N$. In this groupoid, we get the standard coherence isomorphism $(f\otimes \text{id})\circ (\text{id}\otimes g) = (\text{id}\otimes g) \circ (f\otimes \text{id})$, allowing us to commute arbitrary gates acting on disjoint sets of qutrits. Throughout the paper, we will assume this property without further comment, as was done in~\cite{selinger2015generators}.
\end{remark}

\subsection{Clifford Conjugation on the Pauli Group}
\label{subsec:action}

By definition, a Clifford operator $C\in\Clifford_n$ maps any Pauli operator $P\in \Pauli_n$ to another Pauli operator, $CPC^\dagger \in \Pauli_n$. This conjugation operation is denoted as $C \bullet P = CPC^\dagger$, and we refer to it as \emph{the Clifford $C$'s action on the Pauli operator $P$}. Let $Q = C\bullet P$. Then $CP = QC$, and we can interpret this as `pushing $P$ through $C$ gives us $Q$'. It will be useful to introduce a circuit shorthand for this.

\begin{definition}\label{def:conjugation}
    Let $C \in \Clifford_n$ and $P,Q \in \Pauli_n$ with $Q = C\bullet P$. Write $P = P_1 \otimes \ldots \otimes P_{n}$ and $Q = Q_1 \otimes \ldots \otimes Q_{n}$ for $P_j, Q_j \in \Pauli_1$. Then the conjugation relation $Q = C\bullet P$ is denoted as
    \[
    \scalebox{.9}{\tikzfig{figures/Preliminaries/CliffordConjugation}}
    \]
\end{definition}

Following this convention, \cref{app:generator-actions} summarizes the actions of $H$, $S$, $X$, $Z$, $\CZ$, and $\SWAP$ gates on Pauli operators.

\begin{lemma}
    Let $C \in \Clifford_n$, the action of $C$ on $\Pauli_n$ is a group automorphism of $\Pauli_n$ and it fixes scalars.
    \label{lem:group-automorphism}
\end{lemma}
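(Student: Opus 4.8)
The plan is to unpack the notion of ``scalar-fixing automorphism'' into four claims---that the conjugation map $\phi_C \colon P \mapsto C \bullet P = CPC^\dagger$ (i) lands in $\Pauli_n$, (ii) is a group homomorphism, (iii) is bijective, and (iv) fixes the scalars $\omega^c I$---and to discharge each directly, leaning on the fact that $C$ is unitary and that $\Clifford_n$ is closed under inverses.

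Claim (i) is exactly the defining property of a Clifford operator from \cref{def:Clifford-unitaries}: by hypothesis $CPC^\dagger \in \Pauli_n$ for every $P \in \Pauli_n$, so $\phi_C$ is a well-defined endofunction of $\Pauli_n$. For claim (ii), I would compute $\phi_C(PQ) = C(PQ)C^\dagger = (CPC^\dagger)(CQC^\dagger) = \phi_C(P)\phi_C(Q)$, inserting $C^\dagger C = I$ between $P$ and $Q$; this is the one routine calculation.

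For claim (iii), the key observation is that $\Clifford_n$, being the normalizer of $\Pauli_n$ in $\mathcal{U}(3^n)$, is a group, so $C^{-1} = C^\dagger$ is again Clifford and $\phi_{C^\dagger}$ is likewise an endofunction of $\Pauli_n$. Then $\phi_{C^\dagger} \circ \phi_C = \mathrm{id} = \phi_C \circ \phi_{C^\dagger}$, exhibiting $\phi_C$ as a bijection and hence an automorphism; alternatively, since $\lvert \Pauli_n \rvert = 3 \cdot 9^n$ is finite and $C$ is invertible, injectivity alone forces bijectivity. Claim (iv) is immediate, since $\phi_C(\omega^c I) = \omega^c C C^\dagger = \omega^c I$.

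There is no genuine obstacle here; the only point deserving explicit mention is the step in (iii) that $\phi_{C^\dagger}$ again takes values in $\Pauli_n$, which relies on $\Clifford_n$ being closed under inverses. I would flag this rather than leave it implicit, as it is what guarantees the inverse map is well-defined on the Pauli group in the first place.
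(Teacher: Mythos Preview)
Your proof is correct and takes essentially the same approach as the paper: verify the homomorphism identity $C(PQ)C^\dagger = (CPC^\dagger)(CQC^\dagger)$ by inserting $C^\dagger C = I$, and check scalar-fixing via $C(\lambda I)C^\dagger = \lambda I$. You are in fact more thorough than the paper, which omits the bijectivity argument entirely; your explicit treatment of claim~(iii) via $\phi_{C^\dagger}$ (and the remark that this requires $\Clifford_n$ to be closed under inverses) fills a gap the paper leaves implicit.
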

\begin{proof}
It is sufficient to show that for all $P, Q \in \Pauli_n$, $C \bullet (PQ) = C(PQ) C^\dagger = (CPC^\dagger) (CQC^\dagger) = (C \bullet P)(C \bullet Q)$. For all $\lambda \in \C$, since $C\bullet \lambda = C\lambda C^\dagger = \lambda C C^\dagger = \lambda$, this automorphism fixes scalars.
\end{proof}

\begin{proposition}\label{prop:scalars}
    Let $C \in \Clifford_n$. If $C \bullet P = P$ for all $P \in \Pauli_n$, then C is a scalar (i.e.~proportional to the identity).
\end{proposition}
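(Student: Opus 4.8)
The plan is to reduce the statement to the fact that the Pauli operators span the full matrix algebra, after which $C$ must lie in its center. First I would observe that since $C$ is unitary, the hypothesis $C \bullet P = CPC^\dagger = P$ for all $P \in \Pauli_n$ is equivalent to the commutation condition $CP = PC$ for all $P \in \Pauli_n$. Thus the whole problem becomes: if $C$ commutes with every element of the Pauli group, then $C = \lambda I$ for some $\lambda \in \C$.

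The key step is to show that the single-qutrit Paulis $\{X^a Z^b : a,b \in \Z_3\}$ form a linearly independent family in $M_3(\C)$, so that their $n$-fold tensor products $\{X^{a_1}Z^{b_1} \otimes \cdots \otimes X^{a_n}Z^{b_n}\}$ form a basis of $M_{3^n}(\C)$. I would establish this by the Hilbert--Schmidt (trace) inner product: a direct computation shows $\tr(X^c Z^d) = 3$ when $c = d = 0$ and $\tr(X^c Z^d) = 0$ otherwise, since $X^c$ fails to preserve the diagonal unless $c=0$, and $\tr(Z^d) = 1 + \omega^d + \omega^{2d} = 0$ unless $d = 0$. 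Consequently $\tr\bigl((X^aZ^b)^\dagger X^{a'}Z^{b'}\bigr)$ vanishes unless $(a,b) = (a',b')$, so the nine single-qutrit Paulis are pairwise orthogonal, hence linearly independent; as there are exactly $9^n = (3^n)^2$ tensor-product operators, matching $\dim M_{3^n}(\C)$, they form a basis. Each such tensor product differs from a group element of $\Pauli_n$ only by a scalar phase $\omega^c$, so $C$ commutes with each of them as well.

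Since $C$ commutes with a spanning set of $M_{3^n}(\C)$, it commutes with \emph{every} $3^n \times 3^n$ matrix, placing $C$ in the center of the full matrix algebra. The center of $M_{3^n}(\C)$ consists precisely of scalar multiples of the identity, so $C = \lambda I$ as required. (Equivalently, one may phrase the conclusion as Schur's lemma applied to the irreducible action of $\Pauli_n$ on $(\C^3)^{\otimes n}$, irreducibility being exactly the spanning statement above.) I expect the only real content to be the trace-orthogonality computation establishing that the Paulis span the matrix algebra; the reduction to commutation and the centralizer argument are then immediate.
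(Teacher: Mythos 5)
Your proposal is correct and takes essentially the same route as the paper's proof: both reduce the hypothesis to the statement that $C$ commutes with every element of $\Pauli_n$, invoke the fact that the Pauli operators span $M_{3^n}(\C)$, and conclude that $C$ lies in the center of the full matrix algebra and is therefore scalar. The only difference is that you supply the trace-orthogonality computation justifying the spanning claim, which the paper asserts without proof.
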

\begin{proof}
    Any complex $3 \times 3$ matrix can be written in the form $\sum_{j=1}^9c_jP_j, \; c_j \in \C, \; P_j \in \{X^aZ^b; \; a, b \in \Z_3\}$. It follows that $\Pauli_n$ spans the vector space formed by $3^n\times 3^n$ complex matrices. Since $C \bullet P = P$ for all $P \in \Pauli_n$, $C \bullet M = CMC^\dagger = M$ for all operators $M$. Hence $CM = MC$ means $C$ must commute with all matrices. Then $C$ must be a scalar.
\end{proof}

\begin{corollary}
    Let $C_1,C_2 \in \Clifford_n$. Suppose for all $P\in\Pauli_n$, $C_1\bullet P = C_2\bullet P$. Then $C_1 = \lambda C_2$, $\lambda \in \C$.
    \label{cor:Clifford-stabilisers}
\end{corollary}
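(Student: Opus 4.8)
The plan is to reduce this statement to \cref{prop:scalars} by considering the single Clifford operator $C_2^\dagger C_1$. Concretely, I would first observe that $\Clifford_n$ is a group under matrix multiplication: it is by definition the normalizer of $\Pauli_n$ inside $\mathcal{U}(3^n)$, and normalizers are always subgroups, so $\Clifford_n$ is closed under composition and under taking (conjugate-transpose) inverses. In particular, since $C_2 \in \Clifford_n$ we have $C_2^\dagger = C_2^{-1} \in \Clifford_n$, and hence $C := C_2^\dagger C_1 \in \Clifford_n$ as well.

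Next I would translate the hypothesis into a fixed-point statement for $C$. Starting from $C_1 \bullet P = C_2 \bullet P$, i.e. $C_1 P C_1^\dagger = C_2 P C_2^\dagger$ for every $P \in \Pauli_n$, I would conjugate both sides by $C_2^\dagger$ on the left and $C_2$ on the right. Using $C_2^\dagger C_2 = I$ on the right-hand side, this yields $C_2^\dagger C_1 P C_1^\dagger C_2 = P$, which is exactly $C \bullet P = P$ for all $P \in \Pauli_n$.

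Now I would invoke \cref{prop:scalars}: since $C \in \Clifford_n$ fixes every Pauli operator under conjugation, $C$ must be a scalar, say $C = C_2^\dagger C_1 = \mu I$ for some $\mu \in \C$. Multiplying on the left by $C_2$ and using $C_2 C_2^\dagger = I$ gives $C_1 = \mu C_2$, so the claim holds with $\lambda = \mu$.

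There is essentially no serious obstacle here, as the result is an immediate corollary of \cref{prop:scalars}; the only point that deserves a word of care is the verification that $C_2^\dagger C_1$ genuinely lies in $\Clifford_n$, which rests on the group structure of $\Clifford_n$ (closure under products and inverses). Everything else is a routine rearrangement of the conjugation relation.
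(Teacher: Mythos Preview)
Your proposal is correct and follows essentially the same route as the paper: define $U = C_2^\dagger C_1$, observe that the hypothesis gives $U P U^\dagger = P$ for all $P \in \Pauli_n$, apply \cref{prop:scalars} to conclude $U$ is a scalar, and then rearrange to obtain $C_1 = \lambda C_2$. The only difference is that you spell out the closure of $\Clifford_n$ under products and inverses, which the paper leaves implicit.
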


\begin{proof}
    For all $P\in\Pauli_n$, $C_1 P C_1^\dagger = C_2 P C_2^\dagger$. This implies that $C_2^\dagger C_1 P C_1^\dagger C_2 = P$. Take $U = C_2^\dagger C_1$, then $U^\dagger = (C_2^\dagger C_1)^\dagger = C_1^\dagger C_2$. Hence $UPU^\dagger = P$ for all $P\in \Pauli_n$. By \cref{prop:scalars}, $U =\lambda$, for some $\lambda \in \C$. It follows that $C_1 = \lambda C_2$. This completes the proof.
\end{proof}

It is well-known that a Clifford unitary $C$ is fully determined by its action on Pauli operators. Since the Clifford conjugation is a group homomorphism, it suffices to know $C$ acts on a set of \emph{Pauli generators}. In \cref{def:Pauli-basis}, we use $\mathcal{B}_{n}$ to denote a collection of Pauli operators that has a Pauli $Z$ or $X$ acting on a single qutrit, and identities everywhere else. $\mathcal{B}_{n}$ generates $\mathcal{P}_n$ and it is used in the remainder of this paper.

\begin{definition}
    Let $n\in \N$. $Z^{(j)}$ denotes the $n$-qutrit Pauli operator $P=P_1\otimes\cdots\otimes P_{n}$ with $P_\ell = Z$ if $\ell= j$, and $P_\ell = I$ otherwise. We write $X^{(j)}$ for the analogous $n$-qutrit Pauli operator by substituting $Z$ with $X$ in the above definition. $\mathcal{B}_{n}=\{X^{(j)},Z^{(j)};\;1\leq j\leq n\}$.
    \label{def:Pauli-basis}
\end{definition}


Thus, for $1\leq j \leq n$ and $P^{(j)},Q^{(j)}\in \Pauli_n$, knowing $P^{(j)} = C\bullet Z^{(j)}$ and $Q^{(j)} = C\bullet X^{(j)}$ fully determines $C$ (up to a global phase). The collection of these $2n$ maps is often called the \emph{stabilizer tableau} of $C$~\cite{PhysRevA.70.052328,KissingerWetering2024Book}. This notion will become useful for designing the multi-qutrit Clifford normal form. 

When $n=1$, the stabilizer tableau is a $3\times 2$ matrix over $\Z_3$. \cref{fig:tableau} shows how to encode a single-qutrit Clifford operator $C=SHSH$ using a tableau. Let $P,Q\in \Pauli_1$ be the images of $Z$ and $X$ in $C$ respectively. They are encoded by the first and second columns of the tableau. The first row encodes the $\omega$-exponent in the phase of $P$ and $Q$. The second and third rows denote the participation of $Z$ and $X$ in $P$ and $Q$, respectively. This notation can be generalized to encode an $n$-qutrit Clifford operator using a $(2n+1)\times 2n$ matrix over $\Z_3$. 

\begin{figure}[!htb]
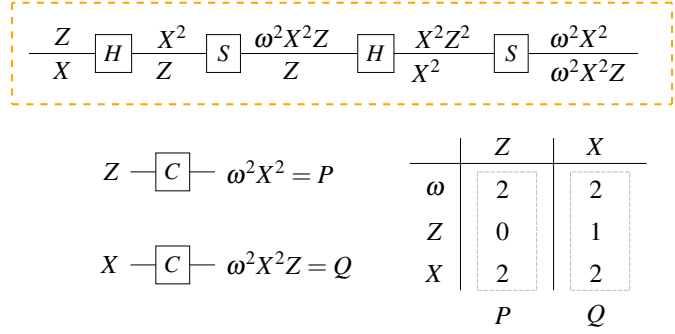

    \centering
    \[
    \scalebox{.9}{\tikzfig{figures/Preliminaries/CliffordCircuit}}
    \]

    \vspace{-.5 cm}
    
    \caption{According to \cref{app:generator-actions}, we can track the Pauli evolution in $C=SHSH$ and find the images of $Z$ and $X$. Since $Z$ is mapped to $\omega^2X^2$ and $X$ is mapped to $\omega^2X^2Z$, we can construct a $3 \times 2$ stabilizer tableau for $C$.}
    \label{fig:tableau}
\end{figure}

\begin{remark}
    By \cref{lem:group-automorphism}, Clifford conjugation on Paulis is a group automorphism. Hence, $P^{(j)}$ and $Q^{(j)}$ inherit the commuting and non-commuting relations from $Z^{(j)}$ and $X^{(j)}$. That is, $P^{(j)}Q^{(j)} = \omega Q^{(j)}P^{(j)}$, and $P^{(j)}Q^{(\ell)} = Q^{(\ell)} P^{(j)}$ when $j\neq \ell$.
    \label{rmk:automorphism}
\end{remark}
\section{A Normal Form for Multi-Qutrit Clifford Circuits}
\label{sec:normal-form}

Similar to the qubit Clifford normal form defined in~\cite{makary2021generators,selinger2015generators}, our normal form for a qutrit Clifford circuit $C\in\Clifford_n$ is based on uniquely representing its stabilizer tableau, and hence uniquely representing this operator. First, we provide intuitions for designing such a normal form. Then, we lay out the technical details of constructing this normal form.

\paragraph{Intuitions}

It will be helpful to think of this normal form as synthesizing the \emph{inverse} of $C$, and we will refer to this as the \emph{inverse synthesis}. Suppose $C^{-1}$ has a stabilizer tableau, for $1 \leq j \leq n$,

\begin{equation}
    P^{(j)} = C^{-1}\bullet Z^{(j)}, \qquad  Q^{(j)} = C^{-1}\bullet X^{(j)}.
    \label{eq:map-1}
\end{equation}

We want to find a Clifford circuit in normal form $N$ such that

\begin{equation}
    N\bullet P^{(j)} = Z^{(j)},\qquad N\bullet Q^{(j)} = X^{(j)}.
    \label{eq:map-2}
\end{equation}

Putting \eqref{eq:map-1} and \eqref{eq:map-2} together, \eqref{eq:map-3} shows that $(NC^{-1})\bullet P = P$ for all $P\in \Pauli_n$. By \cref{prop:scalars}, $NC^{-1}=\lambda$, for some $\lambda\in \C$. Hence, $N=\lambda C$. Then we find the normal form of $C$, up to some global phase. 

\begin{align}
    N\bullet P^{(j)} = NC^{-1} \bullet Z^{(j)}=Z^{(j)},\qquad N\bullet Q^{(j)} = NC^{-1} \bullet X^{(j)}=X^{(j)},\qquad 1\leq j \leq n.
    \label{eq:map-3}
\end{align}

In a nutshell, we carry out the inverse synthesis by induction. Firstly, from \eqref{eq:map-1}, we have

\[
    \scalebox{.8}{\tikzfig{figures/NormalForm/C-Cinverse}}
    \]

Next, find a circuit $T^{(n)}$ in normal form that sends $P^{(n)}$ to $Z^{(n)}$ and $Q^{(n)}$ to $X^{(n)}$. We then have

\[
    \scalebox{.85}{\tikzfig{figures/NormalForm/C-Cinverse2}}
    \]

This implies that $T^{(n)}C^{-1}$ acts as an identity on the last qutrit, so we can treat it as a Clifford operator acting on the first $n-1$ qutrits. For this `smaller' circuit, repeat the same procedure and find a circuit $T^{(n-1)}$ in normal form such that $T^{(n-1)}T^{(n)}C^{-1}$ acts as an identity on the last two qutrits. By iteratively concatenating $T^{(j)}$, from $j=n$ to $j=1$, we obtain $N = T^{(1)}\cdots T^{(n-1)}T^{(n)}$. 

\[
    \scalebox{.85}{\tikzfig{figures/NormalForm/C-Cinverse3}}
    \]

One can think of the inverse synthesis as simplifying the stabilizer tableau by performing Gaussian elimination on its bottom $2n \times 2n$ matrix. As a simple example, when $n=1$ and $C=(SHSH)^{-1}$, the inverse synthesis reduces the stabilizer tableau of $C^{-1}$ to an all-zero row followed by an identity matrix.

\[
    \scalebox{.8}{\tikzfig{figures/NormalForm/CliffordCircuit2}}
    \]

\paragraph{Technical Details}

Our goal is to map the Pauli generators to the desired outcomes by using some elementary building blocks, which we call \emph{normal boxes}. To obtain a unique normal form, we need to identify some notion of uniqueness when designing these boxes. In total, there are six \emph{types} of normal boxes. They are labelled as $A$, $B$, $C$, $D$, $E$, and $F$, as shown in \cref{fig:normal-box-auto}. Each type of normal box has different \emph{variants}, and each variant is labelled by the subscript of the box.

\begin{definition}
    For $K \in \{A, B, C, D, E, F\}$, we call a normal box of \emph{type} $K$ a \emph{$K$ normal box}. We use $K_k$ to denote a \emph{variant} of the $K$ normal box, where $k$ can be any allowed indices in the first column of \cref{fig:normal-box-auto}.
\end{definition}

In the second column of \cref{fig:normal-box-auto}, we specify the required action of each normal box on some Paulis. Take $B$ normal boxes as an example. The specification of other normal boxes can be understood analogously. For $a,b\in \Z_3$, $B_{ab}$ maps $X^aZ^b \otimes Z$ to $Z \otimes I$. The uniqueness of these required actions determines the uniqueness of each $B_{ab}$ box. However, this action does not specify a full stabilizer tableau, so it is not sufficient to determine which unitary we should use for each of these boxes. Different choices of implementation might result in different designs of the normal form. Making the `wrong' choice might make it more difficult to establish the multi-qutrit Clifford completeness. 

In the third column of \cref{fig:normal-box-auto}, we identify additional actions that some normal boxes should have on Paulis. This takes advantage of the underlying mechanics of our normal form and further restricts the design space. In \cref{sec:design}, we discuss in detail how these restrictions come into play.

 \begin{figure}[!tbh]
     \centering
      \[
\scalebox{.9}{\tikzfig{figures/normalBoxesAutomorphisms2}}
    \]
    \vspace{-0.5cm}
     \caption{There are six types of normal boxes. The colour of each box (red or green) specifies whether it belongs to the Z-part or the X-part of the normal form.}
     \label{fig:normal-box-auto}
 \end{figure}

\cref{fig:Z-and-X-Normal-Boxes} demonstrates a concrete implementation of each normal box. Most of the constructions here are given parametrically, using the index of each box. This feature plays an important role in simplifying the search for box relations, as discussed in \cref{sec:normalization-with-relations,app:relations,sec:design}. Also note that the normal box construction uses some circuit shorthands defined in \cref{fig:circuit-conventions}. In \cref{sec:other-definitions}, \cref{fig:Z-Normal-Boxes,fig:X-Normal-Boxes} illustrate the individual construction of each normal box using $H$, $S$, $\CZ$, SWAP, and $\XC$ gates. \cref{fig:auto-normal-box} presents the complete stabilizer tableaus for these boxes.

Using these normal boxes, we now define the normal form for $n$-qutrit Clifford circuits. 

\begin{figure}[!tbh]
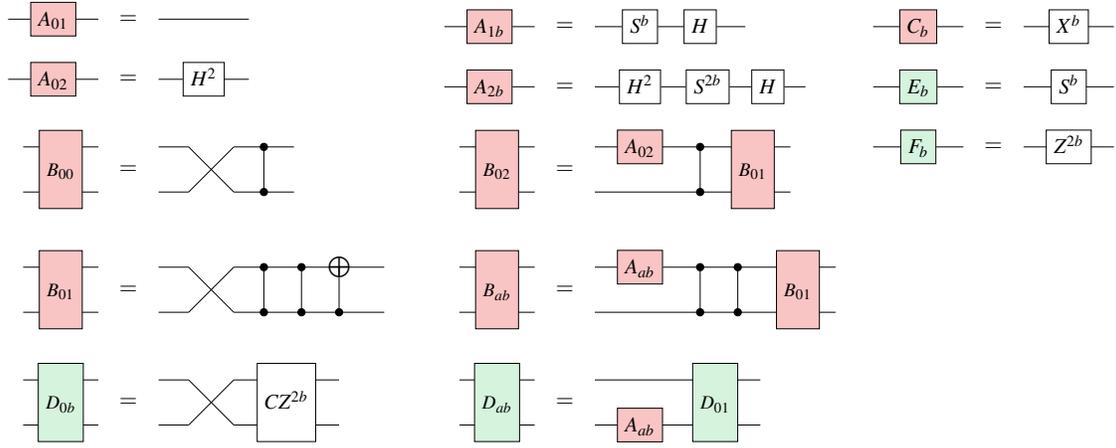

    \centering
\scalebox{0.8}{\tikzfig{figures/normalBoxesMacroSimp}}
    \caption{The concrete implementations of the Z and X normal boxes. Here $a,b\in \Z_3$ and $a\neq 0$.}
    \label{fig:Z-and-X-Normal-Boxes}
\end{figure}

\begin{definition}\label{def:Z-and-X-normal}
For $A$, $B$, $C$, $D$, $E$, and $F$ normal boxes defined in \cref{fig:Z-and-X-Normal-Boxes}:
\begin{itemize}
    \item An $n$-qutrit Clifford circuit is \emph{Z-normal} if it is of the form in \cref{fig:Z-normal}.
    \item An $n$-qutrit Clifford circuit is \emph{X-normal} if it is of the form in \cref{fig:X-normal}.
\end{itemize}
\end{definition}

\begin{figure}[!htb]
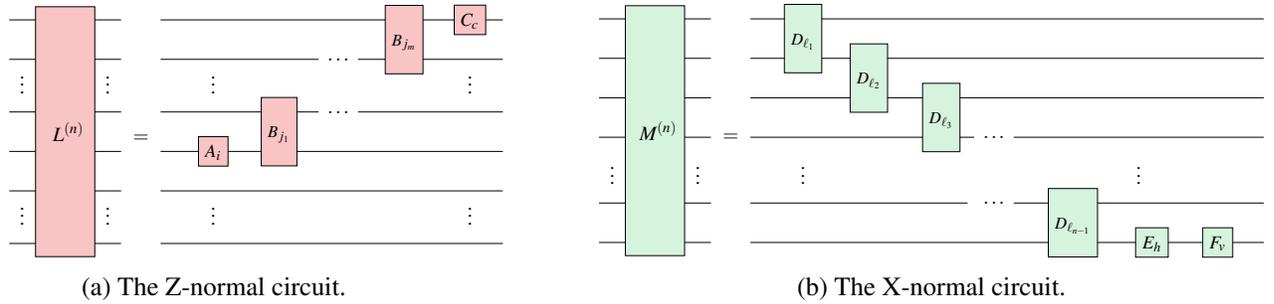

\begin{subfigure}{.35\textwidth}
  \centering
   \scalebox{0.7}{\tikzfig{figures/NormalForm/ZNormal}}
  \caption{The Z-normal circuit.}
  \label{fig:Z-normal}
\end{subfigure}%
\qquad\qquad
\begin{subfigure}{.65\textwidth}
  \centering
   \scalebox{0.7}{\tikzfig{figures/NormalForm/XNormal}}
      \caption{The X-normal circuit.}
    \label{fig:X-normal}
\end{subfigure}
\caption{The Z- and X-normal circuits are used to build the normal form for an $n$-qutrit Clifford circuit.}
\label{fig:normal-components}
\end{figure}

\begin{definition}
An $n$-qutrit Clifford circuit is \emph{normal} if it is of the form in \cref{fig:macro-normal-circuit}.
    \label{def:normal}
\end{definition}

\begin{figure}[!htb]
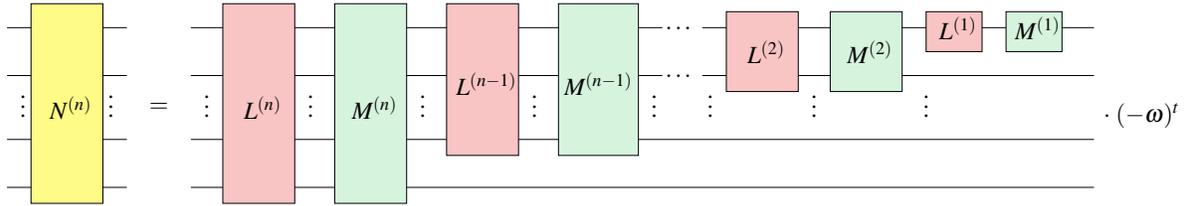

\[
\scalebox{0.85}{\tikzfig{figures/NormalForm/Normal}}
\]
\vspace{-.3cm}
    \caption{The structure of a normal form: we alternate layers of Z- and X-normal circuits, with each subsequent layer acting on one fewer qutrit. Here $t \in \Z_6$ defines the global phase. Note that we can view this normal form as being defined inductively, with the $n$-qutrit normal form adding a layer of Z- and X-normal circuits to the left of an existing $(n-1)$-qutrit normal form.}
    \label{fig:macro-normal-circuit}
\end{figure}

\begin{remark}
    When $n=1$, a normal form is composed of only $A$, $C$, $E$, and $F$ boxes. A generic construction of $N^{(1)}$ is given in \cref{fig:single-qutrit-normal}.
\end{remark}

Now that we have the normal form, we need to show that it is \emph{universal}, meaning it can represent an arbitrary Clifford circuit, and \emph{unique}, meaning that each Clifford unitary is uniquely represented by one specific instantiation of this normal form. For universality, by \cref{cor:Clifford-stabilisers}, it suffices to show that we can represent an arbitrary automorphism of $\Pauli_n$ by a normal form.

Let $\phi$ be an automorphism of $\Pauli_n$. We claim that one can construct a $Z$-normal circuit $L$ and an $X$-normal circuit $M$ such that $(ML)^{-1}$ acts as $\phi^{-1}$ on $Z^{(n)}$ and $X^{(n)}$. To obtain a normal form for $\phi$, it suffices to construct $L$ and $M$ and then to recursively iterate this procedure with the updated automorphism $\phi'=\phi(ML)^{-1}$. The proofs of these statements are analogous to those for the qubit case in~\cite{makary2021generators,selinger2015generators}, so we postpone them to \cref{subsec:existence-and-uniqueness} (the only substantial difference is that instead of Paulis $Z$ and $X$ anticommuting with each other, they now `$\omega$-anticommute': $ZX = \omega XZ$).

\begin{lemma}
  \lemznormal
\label{lem:-Z-normal-reduction}
\end{lemma}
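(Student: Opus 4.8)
The plan is to prove existence and uniqueness simultaneously by induction on $n$, constructing $L$ explicitly by reading off the \emph{variants} of its normal boxes from the data of $P$. Write $P=\omega^{c}\,P_1\otimes\cdots\otimes P_n$ with $P_j=X^{a_j}Z^{b_j}$ and $c\in\Z_3$; the hypothesis $P\notin\{I,\omega I,\omega^2 I\}$ guarantees that at least one tensor factor $P_j$ is non-identity. Recall that a $Z$-normal circuit has the fixed shape of \cref{fig:Z-normal}: two-qutrit $B$ boxes arranged to consolidate the Pauli content of the lower qutrits onto the first qutrit, capped by single-qutrit $A$ and $C$ boxes acting on qutrit $1$. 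The only freedom is the index (variant) of each box, and the specified actions in \cref{fig:normal-box-auto} (with concrete implementations in \cref{fig:Z-and-X-Normal-Boxes}) tell us exactly how each variant transforms Paulis. The strategy is to show that, for the given $P$, these variants are \emph{forced} one box at a time, and that the forced choices assemble into a circuit achieving $L\bullet P = Z\otimes I\otimes\cdots\otimes I$.

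For the base case $n=1$ we have $P=\omega^{c}X^{a}Z^{b}$ with $(a,b)\neq(0,0)$, and a single-qutrit $Z$-normal circuit consists only of an $A$ box followed by a $C$ box. By the specification in \cref{fig:normal-box-auto}, there is exactly one $A$-variant sending $X^{a}Z^{b}$ to a scalar multiple of $Z$; applying it leaves a residual phase $\omega^{c'}$ determined by $c$, and there is then exactly one $C$-variant cancelling $\omega^{c'}$, so the composite sends $P$ to $Z$. This yields existence and uniqueness at once. For the inductive step with $n>1$, I would use the bottom box of the staircase, acting on qutrits $n-1$ and $n$, to clear qutrit $n$: its variant is pinned down by $P_n$ (and the relevant component on qutrit $n-1$) through the required action $X^{a}Z^{b}\otimes Z\mapsto Z\otimes I$, taking the box to be trivial precisely when $P_n=I$. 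After this box the image Pauli is the identity on qutrit $n$, so it restricts to a non-scalar $(n-1)$-qutrit Pauli $P'$ on the first $n-1$ qutrits (non-scalar because clearing qutrit $n$ either leaves the existing content on qutrits $1,\dots,n-1$ or pushes content up onto qutrit $n-1$). The induction hypothesis supplies a unique $(n-1)$-qutrit $Z$-normal circuit reducing $P'$ to $Z\otimes I\otimes\cdots\otimes I$; prepending the chosen box gives the desired $L$.

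Uniqueness propagates through this induction because at every stage the box variant was the unique one achieving the prescribed clearing, so two $Z$-normal circuits agreeing on $P$ must agree box by box. The routine but delicate part, and the step I expect to be the main obstacle, is the phase and commutation bookkeeping specific to qutrits: since $ZX=\omega XZ$ rather than $ZX=-XZ$, pushing $P$ through each box accumulates $\omega$-phases, and one must track these carefully to confirm both that the residual phase fed to the $C$ box is exactly cancellable and that the ``dirty'' Pauli produced at each stage lands in the form the next box is built to absorb. Verifying this well-definedness is precisely where the additional symmetry constraints recorded in the third column of \cref{fig:normal-box-auto} are needed. Granting this bookkeeping, the induction closes and produces the unique $Z$-normal circuit $L$ with $L\bullet P = Z\otimes I\otimes\cdots\otimes I$.
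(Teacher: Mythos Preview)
Your sketch misreads the shape of a $Z$-normal circuit, and this makes the inductive step ill-posed. You place both the $A$ and $C$ boxes on qutrit~$1$, but in \cref{fig:Z-normal} (and as the cardinality count in \cref{subsec:cardinality} makes explicit) only the $C$ box sits on qutrit~$1$: the $A$ box sits at position $m$, the \emph{lowest} qutrit carrying a non-identity tensor factor of $P$, and the staircase of $B$ boxes runs from $(m-1,m)$ upward to $(1,2)$, with variable length $m-1$.

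This matters because the $B$-box specification $X^{a}Z^{b}\otimes Z\mapsto Z\otimes I$ keys the variant $(a,b)$ to the \emph{top} input and \emph{requires the bottom input to already be $Z$}. Your inductive step applies a $B$ box on qutrits $(n-1,n)$ to ``clear qutrit~$n$'' with variant ``pinned down by $P_n$'', but there is no $B$-variant that absorbs an arbitrary $P_n$ on its lower wire; the box simply does not fire unless qutrit~$n$ already carries~$Z$. The paper's proof avoids this by first locating $m$ and applying $A_{a_m,b_m}$ at position $m$ to manufacture the $Z$; only then do the $B$ boxes, each now seeing $P_{j}\otimes Z$ on its inputs, propagate that $Z$ upward while clearing $P_{m-1},\dots,P_1$, with $C_c$ on qutrit~$1$ absorbing the residual phase. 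No induction on $n$ is used, and when trailing tensor factors are identity no ``trivial'' $B$ box is inserted: the staircase simply starts higher, which is essential for uniqueness (a phantom identity $B$ box at $(n-1,n)$ would give two syntactically distinct $Z$-normal circuits for the same $P$).

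Two smaller remarks. First, even granting the structure, the $B$-box variant is read off the top factor, not from $P_n$. Second, the third-column symmetries of \cref{fig:normal-box-auto} play no role in this lemma; the paper uses only the required (second-column) actions, and the phase bookkeeping is immediate since each box action is specified exactly and the single leftover $\omega^{c}$ is what $C_c$ is built to cancel.
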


\begin{lemma}
  \lemxnormal
\label{lem:-X-normal-reduction}
\end{lemma}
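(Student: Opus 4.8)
The plan is to mirror the strategy used for the $Z$-normal reduction (\cref{lem:-Z-normal-reduction}), exploiting the hypothesis to pin down the first qutrit of $Q$ and then peeling off qutrits one at a time using the $X$-part (green) normal boxes, while \cref{lem:-X-normal-propagate-Z} keeps the already-established $Z$-data consistent.

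First I would unpack the hypothesis. Writing $Q = \omega^c\, X^{a_1}Z^{b_1}\otimes\cdots\otimes X^{a_n}Z^{b_n}$, a short computation using $ZX^{a} = \omega^{a}X^{a}Z$ gives $(Z\otimes I\otimes\cdots\otimes I)\,Q = \omega^{a_1}\,Q\,(Z\otimes I\otimes\cdots\otimes I)$, since the $Z$ on the first qutrit commutes with every tensor factor of $Q$ beyond the first. Hence the $\omega$-anticommutation hypothesis is \emph{exactly} the condition $a_1 = 1$: the first qutrit of $Q$ carries a genuine $X$. This pivot is what drives the reduction, playing the role that non-scalarness of $P$ played in \cref{lem:-Z-normal-reduction}.

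For existence I would induct on $n$. When $n=1$, $Q = \omega^c X Z^{b_1}$, and the single-qutrit $E$ and $F$ boxes—the only $X$-part boxes occurring at $n=1$, per the remark following \cref{def:normal}—can be chosen, according to their specified actions in \cref{fig:normal-box-auto}, to absorb the phase $\omega^c$ and the factor $Z^{b_1}$ and send $Q$ to $X$, with the index of each box read off from $c$ and $b_1$. For the inductive step I would use the two-qutrit $D$ boxes to clear the Pauli content of the remaining qutrits: because $a_1=1$, each $D$ box can be chosen (its index determined by $(a_j,b_j)$) to eliminate the factor $X^{a_j}Z^{b_j}$ of one further qutrit using the pivot, after which single-qutrit $E,F$ boxes on the pivot clean up the residual phase and $Z$-exponent. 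Throughout, \cref{lem:-X-normal-propagate-Z} guarantees the $Z$-part fixed to $Z\otimes I\otimes\cdots\otimes I$ is transported to $I\otimes\cdots\otimes I\otimes Z$ without being disturbed, so the construction is consistent with the global normal form; the composite of these boxes is by definition an $X$-normal circuit $M$ with $M\bullet Q = I\otimes\cdots\otimes I\otimes X$.

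Uniqueness I would obtain by observing that each box is, by the second and third columns of \cref{fig:normal-box-auto}, uniquely determined by the Pauli action it is required to perform, and that at every stage the required action is forced by the current exponents of $Q$, leaving no freedom in which variant to pick. As an independent check one can count: the Paulis satisfying the hypothesis are exactly those with $a_1=1$, of which there are $3\cdot 3\cdot 9^{\,n-1} = 9^n$ (free choices of $c$, $b_1$, and $(a_j,b_j)_{j\ge 2}$), matching the number $9^n$ of $X$-normal circuits implicit in \cref{cor:cardinality}; since $M\mapsto M^{-1}\bullet(I\otimes\cdots\otimes I\otimes X)$ lands in this set and the construction shows it is surjective, it is a bijection, giving uniqueness as well. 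I expect the main obstacle to be the bookkeeping inside the inductive step: verifying that the $D$ box needed at each stage genuinely exists with the prescribed action, that applying it leaves a residual Pauli still of the form handled by the hypothesis (in particular still anticommuting with the reduced $Z$-pivot), and that accumulated phases—including any $(-\omega)^t$ global phases the boxes emit—are tracked so the final Pauli is exactly $I\otimes\cdots\otimes I\otimes X$ with trivial phase. This is precisely where a careless variant choice could stall the reduction, and is why the extra symmetry constraints in the third column of \cref{fig:normal-box-auto} are needed.
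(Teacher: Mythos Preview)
Your proposal is correct and follows essentially the same construction as the paper: deduce $a_1=1$ from the $\omega$-anticommutation, cascade $D_{a_j,b_j}$ boxes to carry the $XZ^{b_1}$ pivot down to the last wire, then apply $E_{b_1}$ and $F_c$ to clear the residual $Z$-exponent and phase—the paper just does this directly rather than by induction on $n$. Two minor points: \cref{lem:-X-normal-propagate-Z} is not needed for this lemma (it enters only in \cref{lem:-normal-circuit-compositon}, where both $P$ and $Q$ must be tracked), and Pauli phases lie in $\{\omega^c:c\in\Z_3\}$, never $(-\omega)^t$, so your worry about global-phase bookkeeping is moot—the $F_c$ box absorbs $\omega^c$ exactly.
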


\begin{lemma}
  \lemxnormalpropagateZ
\label{lem:-X-normal-propagate-Z}
\end{lemma}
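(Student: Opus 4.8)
The plan is to prove the claim by directly propagating $Z^{(1)} = Z\otimes I\otimes\cdots\otimes I$ through the fixed structure of $M$. By \cref{def:Z-and-X-normal}, an $X$-normal circuit is precisely the arrangement of $D$, $E$, and $F$ normal boxes displayed in \cref{fig:X-normal}, and the action of each such box on Pauli operators is fully determined by its stabilizer tableau recorded in \cref{fig:normal-box-auto} (with the complete tableaus in \cref{fig:auto-normal-box}). Since Clifford conjugation is a group homomorphism (\cref{lem:group-automorphism}), computing $M\bullet Z^{(1)}$ reduces to pushing $Z^{(1)}$ successively through each box and multiplying the single-box images, so I would carry the computation out box by box.

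The key observation is that the \emph{additional actions} recorded in the third column of \cref{fig:normal-box-auto} supply exactly the $Z$-transport data we need: beyond performing its required action on the $X$-type Pauli it is designed to reduce, each box relays a Pauli $Z$ on its incoming wire to a Pauli $Z$ on its outgoing wire, introducing neither an $X$-component nor a nontrivial scalar. I would first read off these single-box identities for each of $D$, $E$, and $F$ and all of their variants from the tableaus. Crucially, these identities hold \emph{uniformly} in the variant indices $a,b\in\Z_3$ (with $a\neq0$ for the relevant boxes), which is what makes the conclusion independent of which $X$-normal circuit we take.

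I would then compose these identities along the cascade of boxes making up $M$. Because the $D$ boxes chain neighbouring qutrits together from the first wire toward the last, a $Z$ initially on qutrit $1$ is relayed across the circuit and emerges as a $Z$ on qutrit $n$, while the $E$ and $F$ boxes leave it undisturbed on their own wires. It is cleanest to organize this as an induction on $n$ mirroring the inductive description of the normal form in \cref{fig:macro-normal-circuit}: in the base case $n=1$ one checks directly that the single-qutrit $X$-normal boxes fix $Z$, and in the inductive step the outermost $D$ box transports $Z^{(1)}$ one wire along, reducing to a smaller $X$-normal circuit to which the hypothesis applies.

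The main obstacle I anticipate is the $D$ boxes: being two-qutrit and parametrized, they could in principle spread the $Z$ across both wires or attach an $X$-component or an unwanted phase. The crux of the argument is therefore the verification, from the $D$-box tableaus, that this $Z$-transport is clean for \emph{every} variant, and this is exactly where the extra symmetry imposed in the third column of \cref{fig:normal-box-auto} (explained in \cref{sec:design}) is used. As a consistency check on the final answer, \cref{rmk:automorphism} forces $M\bullet Z^{(1)}$ to $\omega$-anticommute with $M\bullet Q = X^{(n)}$; since $Z^{(n)}$ is the canonical Pauli $\omega$-anticommuting with $X^{(n)}$, this is compatible with the claim and helps fix the phase of the propagated operator.
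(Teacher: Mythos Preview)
Your proposal is correct and follows essentially the same approach as the paper: both use the additional actions in the third column of \cref{fig:normal-box-auto} to see that every $D$ box maps $Z\otimes I$ to $I\otimes Z$ and every $E$ and $F$ box fixes $Z$, so $Z^{(1)}$ is relayed wire by wire to $Z^{(n)}$. The paper's proof simply states this chain directly with a picture rather than organizing it as an induction on $n$, and it omits your $\omega$-anticommutation consistency check, but the substance is the same.
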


\begin{lemma}
    \lemnormalcircuitcompo
    \label{lem:-normal-circuit-compositon}
\end{lemma}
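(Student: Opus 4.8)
The plan is to construct $L$ and $M$ in two stages, invoking the three preceding lemmas and then chaining their actions via the identity $(ML)\bullet R = M\bullet(L\bullet R)$, which follows from $(ML)R(ML)^\dagger = M(LRL^\dagger)M^\dagger$. First I would dispose of a degenerate case: $P$ cannot be a scalar, since $P=\lambda I$ would force $PQ=\lambda Q=QP$, contradicting $PQ=\omega QP$ as $\omega\neq 1$ and $Q\neq 0$. Hence $P\notin\{I,\omega I,\omega^2 I\}$, so \cref{lem:-Z-normal-reduction} applies and yields a unique $Z$-normal circuit $L$ with $L\bullet P = Z\otimes I\otimes\cdots\otimes I$.

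Next I would set $Q' := L\bullet Q$. Since Clifford conjugation is a group automorphism fixing scalars (\cref{lem:group-automorphism}), applying $L\bullet(-)$ to the relation $PQ=\omega QP$ gives $(Z\otimes I\otimes\cdots\otimes I)\,Q' = \omega\, Q'\,(Z\otimes I\otimes\cdots\otimes I)$, which is precisely the hypothesis of \cref{lem:-X-normal-reduction}. That lemma then produces a unique $X$-normal circuit $M$ with $M\bullet Q' = I\otimes\cdots\otimes I\otimes X$. Existence of the pair now follows immediately: $(ML)\bullet Q = M\bullet Q' = I\otimes\cdots\otimes I\otimes X$, while $(ML)\bullet P = M\bullet(Z\otimes I\otimes\cdots\otimes I) = I\otimes\cdots\otimes I\otimes Z$, where the last equality is exactly \cref{lem:-X-normal-propagate-Z}, valid for \emph{every} $X$-normal circuit.

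For uniqueness, suppose $(M',L')$ is another admissible pair. The subtlety is that the hypotheses only constrain the composite actions $(M'L')\bullet P$ and $(M'L')\bullet Q$, not the intermediate Pauli $L'\bullet P$ directly, so I cannot invoke the uniqueness clause of \cref{lem:-Z-normal-reduction} right away. The key step is to recover $L'\bullet P$: writing $R := L'\bullet P$, I have $M'\bullet R = (M'L')\bullet P = I\otimes\cdots\otimes I\otimes Z = M'\bullet(Z\otimes I\otimes\cdots\otimes I)$, the final equality again by \cref{lem:-X-normal-propagate-Z}. Because conjugation by the fixed unitary $M'$ is injective, this forces $R = Z\otimes I\otimes\cdots\otimes I$, so $L'\bullet P = Z\otimes I\otimes\cdots\otimes I$ and the uniqueness in \cref{lem:-Z-normal-reduction} gives $L'=L$. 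Consequently $M'\bullet(L\bullet Q) = (M'L')\bullet Q = I\otimes\cdots\otimes I\otimes X$, so $M'$ sends $Q' = L\bullet Q$ to $I\otimes\cdots\otimes I\otimes X$, and the uniqueness in \cref{lem:-X-normal-reduction} yields $M'=M$.

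I expect this ``un-propagation through $M'$'' to be the main obstacle, or at least the only step that is not pure bookkeeping: everything else is a mechanical assembly of the three lemmas, but forcing $L'=L$ requires the observation that the $X$-normal circuit's fixed behaviour on $Z\otimes I\otimes\cdots\otimes I$ (\cref{lem:-X-normal-propagate-Z}) together with injectivity of conjugation pins down $L'\bullet P$ before the uniqueness clauses can be applied. I would take care to state explicitly that $M$ and $L$ are uniquely determined \emph{as circuits} (inheriting the sense of uniqueness from \cref{lem:-Z-normal-reduction,lem:-X-normal-reduction}), not merely as unitaries up to phase.
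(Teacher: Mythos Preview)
Your proof is correct and follows essentially the same route as the paper's: apply \cref{lem:-Z-normal-reduction} to $P$, transport the $\omega$-anticommutation through $L$ to feed \cref{lem:-X-normal-reduction}, then use \cref{lem:-X-normal-propagate-Z} both to finish existence and (together with injectivity of conjugation) to recover $L'\bullet P$ in the uniqueness argument. You are simply a bit more explicit than the paper about why $P$ is non-scalar and about the injectivity step that the paper compresses into ``\cref{lem:-X-normal-propagate-Z} implies that $L'\bullet P = Z\otimes I\otimes\cdots\otimes I$''.
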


\begin{proposition}
\propnormalform
    \label{prop:-normal-form}
\end{proposition}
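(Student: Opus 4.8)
The plan is to prove existence and uniqueness simultaneously by induction on $n$, peeling off the last qutrit at each step via \cref{lem:-normal-circuit-compositon}. For the base case $n=0$ I would note that $\Pauli_0$ contains only scalars, so the only scalar-fixing automorphism is the identity, represented by the bare global phase $(-\omega)^t$; this is unique up to the choice of $t\in\Z_6$, matching the claim.

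For the inductive step I would assume the result for $n-1$ and, given a scalar-fixing automorphism $\phi$ of $\Pauli_n$, set $P = \phi^{-1}(Z^{(n)})$ and $Q = \phi^{-1}(X^{(n)})$ (note $\phi^{-1}$ is again a scalar-fixing automorphism). Since $Z^{(n)}X^{(n)} = \omega X^{(n)}Z^{(n)}$ and automorphisms preserve $\omega$-commutation (\cref{rmk:automorphism}), we get $PQ = \omega QP$, so \cref{lem:-normal-circuit-compositon} supplies a unique $X$-normal $M$ and $Z$-normal $L$ with $(ML)\bullet P = Z^{(n)}$ and $(ML)\bullet Q = X^{(n)}$. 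Writing $\psi$ for the action of $ML$, I would then form the updated automorphism $\phi' = \phi\circ\psi^{-1}$ and check that it fixes $Z^{(n)}$ and $X^{(n)}$ (indeed $\phi'(Z^{(n)}) = \phi(\psi^{-1}(Z^{(n)})) = \phi(P) = Z^{(n)}$, and similarly for $X^{(n)}$).

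The hard part --- and the step I expect to be the main obstacle --- is verifying that $\phi'$ truly descends to an automorphism of $\Pauli_{n-1}$ so that the induction can continue. For $j<n$ each of $Z^{(j)}, X^{(j)}$ commutes with both $Z^{(n)}$ and $X^{(n)}$; as $\phi'$ is an automorphism fixing $Z^{(n)}$ and $X^{(n)}$, the images $\phi'(Z^{(j)}), \phi'(X^{(j)})$ must again commute with both. Using the elementary fact that the only single-qutrit Pauli commuting with both $Z$ and $X$ is a scalar multiple of $I$, I would conclude these images carry the identity on qutrit $n$ and hence are supported on the first $n-1$ qutrits. Thus $\phi'$ restricts to a scalar-fixing automorphism $\phi'_0$ of $\Pauli_{n-1}$; the inductive hypothesis yields an $(n-1)$-qutrit normal form $N_0$ with action $\phi'_0$, unique up to $(-\omega)^s$. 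Composing (in circuit order) $L$, then $M$, then $N_0$ on the first $n-1$ qutrits gives a circuit $N$ of the form in \cref{fig:macro-normal-circuit} whose action is $\phi'\circ\psi = \phi$, establishing existence.

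For uniqueness I would take any normal form $C$ with action $\phi$ and read off its leftmost layer as a $Z$-normal $L'$ and $X$-normal $M'$ on all $n$ qutrits, with the remainder an $(n-1)$-qutrit normal form $C_0$ (identity on qutrit $n$) followed by a phase $(-\omega)^t$. Writing the action of $C$ as $\chi\circ\psi'$ with $\chi,\psi'$ the actions of $C_0$ and $M'L'$, and using that $\chi$ fixes $Z^{(n)}$ and $X^{(n)}$, I would get $\psi'(P) = \chi^{-1}(\phi(P)) = \chi^{-1}(Z^{(n)}) = Z^{(n)}$ and likewise $\psi'(Q) = X^{(n)}$. The uniqueness clause of \cref{lem:-normal-circuit-compositon} then forces $L' = L$ and $M' = M$, whence the action of $C_0$ equals $\phi'_0$; by the inductive hypothesis $C_0$ agrees with $N_0$ up to a phase $(-\omega)^s$. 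Hence $C$ and the normal form constructed above carry the identical arrangement of normal boxes and differ only by the scalar $(-\omega)^{t+s}$ collected from the two global-phase slots (consistent with \cref{cor:Clifford-stabilisers} and the scalar characterization of \cref{subsec:scalars}), giving uniqueness up to $(-\omega)^{t'}$ with $t'\in\Z_6$. This closes the induction.
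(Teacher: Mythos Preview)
Your proposal is correct and follows essentially the same approach as the paper's proof: induction on $n$, using \cref{lem:-normal-circuit-compositon} to peel off the outermost $Z$- and $X$-normal layers, then showing the residual automorphism $\phi'$ fixes $Z^{(n)},X^{(n)}$ and therefore descends to $\Pauli_{n-1}$ via the commutation argument, with uniqueness recovered from the uniqueness clause of \cref{lem:-normal-circuit-compositon} together with the inductive hypothesis. The only cosmetic difference is bookkeeping of the global phase (you separate it from $C_0$, the paper folds it into the $(n-1)$-qutrit normal form), which does not affect the argument.
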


By the existence statement of \cref{prop:-normal-form}, every automorphism of the Pauli group can be represented as a circuit over $-\omega$, $H$, $S$, and $\CZ$. Hence, all of these automorphisms can be implemented by a Clifford circuit. Conversely, as discussed in \cref{sec:preliminaries}, every Clifford operator is an automorphism of the Pauli group. Hence, \cref{prop:-normal-form} establishes that every Clifford operator admits a unique normal form. This also proves that $\Clifford_n$ is generated by $-\omega$, $H$, $S$, and $\CZ$. Moreover, since there is a bijection between Clifford operators and the normal forms, we can count the number of $n$-qutrit normal forms to compute the cardinality of $\Clifford_n$. The proof of \cref{cor:cardinality} can be found in \cref{subsec:cardinality}.

\begin{corollary}
\corcardinality
    \label{cor:cardinality}
\end{corollary}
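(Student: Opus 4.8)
The plan is to count normal forms rather than Clifford operators directly, exploiting the bijection furnished by \cref{prop:-normal-form}. By that proposition every $C \in \Clifford_n$ has a normal form, and the normal form is unique up to the scalar $(-\omega)^t$ with $t \in \Z_6$; since $-\omega$ is a primitive sixth root of unity, these six scalars are distinct, so multiplying a fixed Clifford by each yields six distinct elements of $\Clifford_n$. Hence $\lvert\Clifford_n\rvert = 6 \cdot A_n$, where $A_n$ is the number of normal forms with trivial phase slot, equivalently (via \cref{cor:Clifford-stabilisers}) the number of scalar-fixing automorphisms of $\Pauli_n$. The inductive structure of the normal form in \cref{fig:macro-normal-circuit} shows that such a normal form is determined by an independent choice, for each level $k$ with $1 \le k \le n$, of a $k$-qutrit $Z$-normal circuit and a $k$-qutrit $X$-normal circuit. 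So $A_n = \prod_{k=1}^n z_k x_k$, where $z_k$ and $x_k$ count the $k$-qutrit $Z$- and $X$-normal circuits.

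First I would count $z_k$. By \cref{lem:-Z-normal-reduction} (applied with $n$ replaced by $k$), the assignment sending a non-scalar Pauli $P$ to the unique $Z$-normal circuit $L$ with $L \bullet P = Z^{(1)}$ is a bijection: it is injective because $L$ determines $P = L^{-1} \bullet Z^{(1)}$, and surjective because every $Z$-normal $L$ arises from $L^{-1} \bullet Z^{(1)}$, which is non-scalar. The number of non-scalar $k$-qutrit Paulis is $3 \cdot 9^k - 3 = 3(9^k - 1)$, obtained by choosing a phase $\omega^c$ ($3$ choices) and an operator $\bigotimes_{j} X^{a_j} Z^{b_j}$ ($9^k$ choices) and discarding the three scalars $I, \omega I, \omega^2 I$. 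Hence $z_k = 3(9^k - 1)$.

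Next I would count $x_k$. By \cref{lem:-X-normal-reduction} the analogous assignment is a bijection between $X$-normal circuits $M$ and Paulis $Q$ that $\omega$-anticommute with $Z^{(1)}$; surjectivity follows since $Q = M^{-1} \bullet X^{(n)}$ is such a Pauli, where one uses \cref{lem:-X-normal-propagate-Z} to obtain $M \bullet Z^{(1)} = Z^{(n)}$ and hence the $\omega$-anticommutation from $Z^{(n)} X^{(n)} = \omega X^{(n)} Z^{(n)}$. Writing $Q = \omega^c \bigotimes_j X^{a_j}Z^{b_j}$ and using $Z X^{a} = \omega^{a} X^{a} Z$, one finds $Z^{(1)} Q = \omega^{a_1} Q Z^{(1)}$, so $\omega$-anticommutation is exactly the condition $a_1 = 1$. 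This pins down one of the $2k$ exponents and leaves the phase and the remaining $2k-1$ exponents free, giving $x_k = 3 \cdot 3^{2k-1} = 9^k$.

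Combining the two counts, $A_n = \prod_{k=1}^n 3(9^k-1)\, 9^k$, and therefore $\lvert\Clifford_n\rvert = 6 \prod_{k=1}^n 3(9^k - 1)\, 9^k$, as claimed. The main thing to get right is the bookkeeping of the first paragraph: that the level-$k$ $Z$- and $X$-normal choices are genuinely independent coordinates of the normal form, and that the scalar degree of freedom contributes a clean factor of $6$ disjoint from the automorphism count. Once the bijections of \cref{lem:-Z-normal-reduction,lem:-X-normal-reduction} are in hand, the two Pauli-counting steps are routine, with the only mild subtlety being the identification of the $\omega$-anticommutation constraint with the single linear condition $a_1 = 1$.
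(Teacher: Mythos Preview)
Your proof is correct and reaches the same result, but the route differs from the paper's. The paper counts normal forms by directly enumerating the box choices: for the $k$-qutrit $Z$-normal circuit it does a case analysis on where the ladder of $B$ boxes begins, counting $8$ choices for the $A$ box, $3$ for the $C$ box, and $9$ for each $B$ box, then summing $\sum_{j=0}^{k-1} 24\cdot 9^{j}=3(9^k-1)$; for the $X$-normal circuit it simply multiplies $3$ choices for $E$, $3$ for $F$, and $9$ for each of the $k-1$ $D$ boxes to get $9^k$. You instead invoke the bijections established in \cref{lem:-Z-normal-reduction,lem:-X-normal-reduction,lem:-X-normal-propagate-Z} to translate the count into a count of Pauli operators: non-scalar Paulis for $z_k$, and Paulis $\omega$-anticommuting with $Z^{(1)}$ for $x_k$. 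Your approach is more conceptual and avoids the case split entirely; the paper's approach is more hands-on and makes explicit which degrees of freedom in the normal-form syntax contribute to the count. Both are short once the normal-form machinery is in place.
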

\section{A Complete Set of Relations for Multi-Qutrit Clifford Circuits}
\label{sec:normalization-with-relations}

Here, we show how to derive a complete set of rewrite rules for multi-qutrit Clifford circuits by proving that any such circuit can be rewritten into its unique normal form. We call this the \emph{Clifford normalization} process, and it proceeds as follows. Let $C \in \Clifford_n$ be a circuit composed of gates $g_1,\ldots, g_m$. Append it on the right with the normal form of the identity circuit ($N_I$). Starting from the right-most gate $g_m$, iteratively `push' each gate of $C$ into the normal form, updating it at each step (e.g., from $N_I$ to $N'$). This process continues until all gates have been absorbed into the updated normal form $N$. Below, we sketch a simple example, while \cref{fig:normalization1} in \cref{sec:design} provides a detailed illustration of this process.

\vspace{-.3 cm}

    \[
\scalebox{.8}{\tikzfig{figures/Normalization/normalization}}
\]

\paragraph{Overview}
In \eqref{eq:identities} and \eqref{eq:n-qutrit-identity-normal}, we present the normal form of the identity operator. Since $\Clifford_n$ is generated by $-\omega$, $H$, $S$ and $\CZ$, it suffices to show how a normal form absorbs these gates and is updated to a new normal form. Locally, this reduces to demonstrating how to push a Clifford gate through each of the normal boxes specified in \cref{fig:Z-and-X-Normal-Boxes}. We call each such instance a \emph{box relation}.

\cref{fig:residual-dirty-gates}.(a) gives an abstract illustration of \emph{pushing through a normal box}, which results in an updated normal box $M'$ and \emph{residual dirty gates}, denoted as $\dir$, appearing after $M'$. $M'$ is uniquely determined by $g$ and $M$. \cref{lem:HA,lem:CZ.B.B-CZ.D.D-updated} provide examples of finding $M'$ and discuss this process in details. $\dir$ denotes a Clifford circuit, and it is determined by $g$, $M$, and $M'$ collectively. \cref{fig:residual-dirty-gates}.(b) shows a concrete example of pushing an $H$ gate through a $B$ box. 

\begin{figure}[!htb]
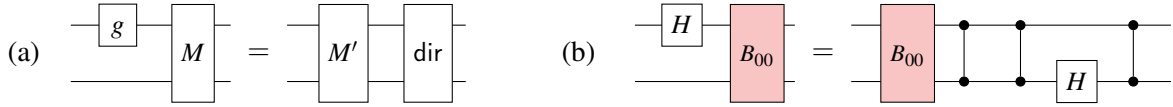

    \centering
    \[
     \scalebox{1}{\tikzfig{figures/Normalization/pushThroughNormal}}
    \]
    \vspace{-.3 cm}
    \caption{An abstract and concrete example of pushing a gate through a normal box $M$.}
\label{fig:residual-dirty-gates}
\end{figure}

In total, there are $380$ box relations, which are summarized in \cref{app:relations}. These relations cover every possible case of pushing through a normal box, making them sufficient to normalize any Clifford operator. Combined with \cref{prop:-normal-form}, any two Clifford circuits that are equal as linear maps can be rewritten into the same normal form using these box relations. This shows that the box relations are \emph{complete} for the multi-qutrit Clifford group. In \cite{Supplement2024}, we further prove that these relations can be derived from the $18$ gate relations listed in \cref{fig:rewriterules}. Together with the generating set $\{-\omega, H, S,\CZ\}$ and the derived generators in \cref{fig:summary-derived-generators}, this provides a compact and canonical presentation of the Clifford group by generators and relations.

\paragraph{Technical details}
We need to show that the Clifford normalization terminates, and when this happens, there are no more gates left on the right-hand side of the normal form. \cref{fig:dirty-normal-form} provides a schematic summary of what a normal form could look like during the normalization process. It accounts for all possible cases when pushing through a normal box. 

\begin{definition}\label{def:dirty-normal-form}
We say that a circuit is in \emph{clean normal form} if it is of the form in \cref{fig:macro-normal-circuit}.
A circuit is in \emph{dirty normal form} if it is of the form in \cref{fig:dirty-normal-form}, which has the structure of a normal form, but with any number of additional Clifford gates allowed in all the designated spots, subject to the following rules.
\begin{itemize}
    \item $H$ gates can be present at any wire labelled $\circled{1}$;
     \item $S$ gates can be present at any wire labelled $\circled{1}$,
    $\circled{2}$, $\circled{3}$, or $\circled{4}$;
    \item $X$ gates can be present at any wire labelled $\circled{2}$;
    \item $Z$ gates can be present at any wire labelled $\circled{2}$, $\circled{3}$, $\circled{4}$, or $\circled{5}$;
    \item \CZ gates can be present at any pair of adjacent wires, provided that the top wire is labelled $\circled{1}$, $\circled{2}$, or $\circled{3}$, and the bottom wire is labelled $\circled{1}$.
\end{itemize}
In the context of a dirty normal form, we will refer to the normal boxes as \emph{clean}, and all the other $H$, $S$, $X$, $Z$, and $\CZ$ gates as \emph{dirty}.
\end{definition}

\begin{figure}[!htb]
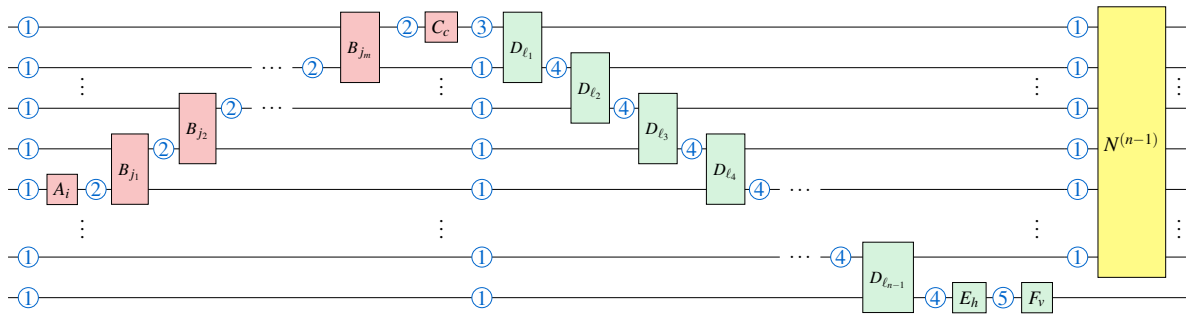

    \centering
     \scalebox{0.72}{\tikzfig{figures/Normalization/DirtyNormalForm}}
    \caption{An example of an $n$-qutrit dirty normal form. Here, $N^{(n-1)}$ is in dirty normal form subject to the same constraints as $N^{(n)}$.}
    \label{fig:dirty-normal-form}
\end{figure}

Note that at some locations, for instance those labelled by $\circled{2}$, a Pauli $X$ or $Z$ gate is allowed to appear in the dirty normal form. However, an $H$ gate is not allowed at $\circled{2}$, even though it is used to construct Pauli gates (see \cref{fig:summary-derived-generators}). The annotation in \cref{fig:dirty-normal-form} emphasizes that there can be no solitary $H$ gates at the locations labelled by $\circled{2}$, $\circled{3}$, $\circled{4}$, or $\circled{5}$. It is, however, possible for an $H$ gate to appear as `a part of' other Clifford gates. The same convention applies to the $S$ gate.

The placement of dirty gates is restricted by which gates can be properly absorbed by each normal box, as this ultimately determines whether Clifford normalization will terminate. \Cref{fig:dirty-gate-restrictions-1,fig:dirty-gate-restrictions-2,fig:dirty-gate-restrictions-3} summarize the allowed Clifford operators that may precede the $B$, $C$, $D$, $E$, and $F$ boxes. Accordingly, the third column of \cref{fig:normal-box-auto} identifies the additional symmetry each normal box must satisfy to ensure dirty gates always have the expected shape. By manually inspecting the box relations in \cref{app:relations}, we can confirm that all dirty gates follow the pattern described in \cref{def:dirty-normal-form}.

\begin{lemma}
    Any dirty normal form can be converted to its normal form by applying the box relations in \cref{app:relations}, in the left-to-right direction, a finite number of times.
    \label{lem:conversion}
\end{lemma}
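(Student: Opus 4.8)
The plan is to prove termination by attaching to each dirty normal form a well-founded measure that strictly decreases under every left-to-right application of a box relation. Following the inductive structure of \cref{fig:macro-normal-circuit}, I would linearly order the clean normal boxes from left to right as positions $1,\ldots,B$, and to each dirty gate $d$ I assign the number $r(d)$ of normal boxes lying strictly to its right. The measure of the whole dirty normal form is then the finite multiset $\{\,r(d) : d \text{ dirty}\,\}$, compared under the Dershowitz--Manna multiset ordering on $\N$, which is well-founded.

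The central structural input, which I would extract by inspecting the box relations in \cref{app:relations} exactly as the text preceding the statement indicates, is the following \emph{closure invariant}: whenever an allowed dirty gate sits immediately to the left of a normal box, the corresponding box relation (applied left to right) rewrites this configuration into an updated valid box variant followed by finitely many dirty gates, and every resulting dirty gate again occupies an allowed position of \cref{def:dirty-normal-form} and lies strictly to the right of that box. In other words, dirty gates never migrate leftward and the collection of legal configurations of \cref{fig:dirty-normal-form} is closed under the rewriting; this closure is exactly what the extra symmetry requirements recorded in the third column of \cref{fig:normal-box-auto} were imposed to guarantee. Granting this invariant, each box relation deletes one dirty gate of value $m$ and inserts finitely many dirty gates each of value strictly less than $m$, so the multiset measure strictly decreases. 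Well-foundedness then forces the rewriting to halt after finitely many steps.

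It remains to identify the terminal configuration with a \emph{clean} normal form. When no box relation is applicable, no dirty gate has a normal box to its right; but by the position constraints of \cref{def:dirty-normal-form} every dirty gate is created to the left of some box, and the invariant shows that the rightmost boxes absorb their incoming dirty gates entirely, contributing at most to the global phase $(-\omega)^t$ rather than spawning a new dirty gate. Hence no dirty gate can survive past the final box, so the terminal configuration contains no dirty gates at all and is precisely the clean normal form of \cref{fig:macro-normal-circuit}. Equivalently, one can package this as an induction on the number of qutrits $n$: first clean the leftmost $Z$- and $X$-normal layer, check via the invariant that the emerging dirty gates constitute a dirty $(n-1)$-normal form, and then invoke the inductive hypothesis.

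The main obstacle is establishing the closure invariant, i.e.\ verifying that \emph{no} box relation ever produces a dirty gate outside the allowed positions of \cref{def:dirty-normal-form}, and that the rightmost boxes leave only a phase behind. This is a genuinely case-heavy check across all $380$ box relations rather than a one-line argument, and it is exactly the place where the nonuniqueness of the normal boxes could wreck the proof: a ``wrong'' choice of box could spawn dirty gates with no subsequent box able to absorb them, as illustrated in \cref{fig:residual-dirty-gates}. The carefully chosen box definitions together with the symmetry conditions of \cref{fig:normal-box-auto} are what make the invariant hold, and once it is accepted the multiset-ordering termination argument is routine.
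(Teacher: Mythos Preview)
Your proposal is correct and follows essentially the same approach as the paper: establish the closure invariant (each box relation keeps the residual dirty gates in the allowed positions of \cref{def:dirty-normal-form}) and then prove termination via a well-founded order encoding that dirty gates move strictly rightward, the only difference being that the paper uses the lexicographic order on the tuple $(s_1,\ldots,s_t)$ with $s_i$ the number of dirty gates occurring before the $i$-th clean box, whereas you use the equivalent Dershowitz--Manna multiset order on $\{r(d)\}$. One small caveat worth noting: the paper observes that the number of clean boxes is not guaranteed to remain constant during normalization (though it is bounded by $n^2+3n$), so to be safe your $r(d)$ should be taken relative to a fixed enumeration of possible box \emph{positions} rather than the boxes actually present, lest the values attached to untouched dirty gates shift when a rewrite alters the box count.
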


\begin{proof}
We adapt the proof of Lemma 5.4 in \cite{makary2021generators}. By \cref{def:dirty-normal-form}, dirty gates always appear before clean ones. Hence, if a dirty gate remains in the circuit, it must occur immediately before a clean normal box. The left-hand side of the rewrite rules in \cref{app:relations} contains all cases of a dirty gate occurring immediately before a clean normal box. Thus, so long as there are dirty gates in the circuit, one of the rules in \cref{app:relations} can be applied. Moreover, every application of a rule maps a dirty normal form to a dirty normal form (for this it is important to check that the residual dirty gates in our relations satisfy the constraints outlined in \cref{def:dirty-normal-form}).

We now show that this process ends in finitely many rewrites. For this, we assign, to each dirty normal form, a sequence of nonnegative integer numbers. Assume that a dirty normal form has $t$ clean gates (i.e.~normal boxes), numbered $1,\ldots,t$ from left to right. Define $s =(s_1,\ldots,s_t)$ where $s_i$ is the number of dirty gates that occur before the $i$-th clean gate. A left-to-right application of one of the box relations at the clean gate $i$ decreases $s_i$ and leaves $s_1,\ldots, s_{i-1}$ invariant so that $s$ decreases in the lexicographic order. The length of the sequence of clean gates is not guaranteed to remain constant through the normalization process, but it is bounded above by the maximum possible number of clean gates. For a normal form on $n$ qutrits, this bound is given by $\sum_{i=1}^{n} 2(i+1) = n^{2}+3n$.
  Therefore, this procedure terminates after finitely many rewrites.
\end{proof}

\begin{proposition}
    Consider a Clifford circuit $C \in \Clifford_n$ expressed in terms of the generators $-\omega$, $H$, $S$, and \CZ gates. Any such circuit can be converted to its normal form by using the box relations in \cref{app:relations}, together with the equations:
    
    \begin{equation}
    \scalebox{.9}{\tikzfig{figures/Normalization/identities}} \label{eq:identities}
  \end{equation}
    \label{prop:normalization}
\end{proposition}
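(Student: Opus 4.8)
The plan is to run the Clifford normalization process exactly as sketched at the start of \cref{sec:normalization-with-relations}, reducing the whole argument to two ingredients already available: \cref{lem:conversion}, which converts any dirty normal form to a clean one using only the box relations of \cref{app:relations}, and the uniqueness half of \cref{prop:-normal-form}. First I would append the normal form $N_I$ of the identity, presented in \eqref{eq:identities}, to the right of $C = g_1\cdots g_m$; since $N_I$ equals $I$, the circuit $C$ equals $g_1\cdots g_m N_I$ with $N_I$ a clean normal form in the sense of \cref{def:dirty-normal-form}. I would then absorb the generators from right to left, maintaining the invariant that after absorbing $g_m,\dots,g_{i+1}$ the circuit reads $g_1\cdots g_i\,N$ for some clean normal form $N$.

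The inductive step is to push a single generator $g_i\in\{-\omega,H,S,\CZ\}$ into the clean normal form $N$ on its right. The key claim is that $g_iN$ is a \emph{dirty} normal form. For the gate generators I would verify, by inspecting \cref{fig:dirty-normal-form}, that the leftmost position on every wire carries the most permissive label $\circled{1}$; this accommodates a solitary $H$ or $S$ on any qutrit, and for a $\CZ$ on adjacent qutrits $j,j{+}1$ both input positions carry label $\circled{1}$, meeting the requirement that the top wire be labelled $\circled{1}$, $\circled{2}$, or $\circled{3}$ and the bottom wire $\circled{1}$. The phase generator $-\omega$ is a scalar: by \cref{rem:groupoid} it commutes to the right-hand end of $N$, where \eqref{eq:identities} lets it be folded into the global phase $(-\omega)^t$ by incrementing $t$ in $\Z_6$. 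Once $g_iN$ is seen to be a dirty normal form, \cref{lem:conversion} rewrites it back to a clean normal form $N'$ using only box relations, restoring the invariant as $g_1\cdots g_{i-1}\,N'$.

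Applying this step $m$ times absorbs every generator and terminates with a clean normal form $N^{(m)}$; termination is immediate since there are finitely many generators and each application of \cref{lem:conversion} itself halts. Every rewrite used---the equations \eqref{eq:identities}, the spatial relations of \cref{rem:groupoid}, and the box relations of \cref{app:relations}---is a true equality of operators, so $N^{(m)}$ represents the same unitary as $C$. As $N^{(m)}$ is a clean normal form and, by \cref{prop:-normal-form}, each Clifford operator has a unique normal form, $N^{(m)}$ is necessarily the normal form of $C$.

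I expect the main obstacle to be the key claim of the inductive step, namely that prepending each of the four generators to an arbitrary clean normal form always yields a legal dirty normal form. This is a finite case analysis against the explicit label structure of \cref{fig:macro-normal-circuit,fig:dirty-normal-form}, and the $\CZ$ case is the most delicate because of the two-wire adjacency together with the asymmetric top/bottom label constraint. It is precisely here that the careful design of the normal boxes pays off: the extra symmetries recorded in the third column of \cref{fig:normal-box-auto} are what guarantee that the residual dirty gates produced while running \cref{lem:conversion} never leave the allowed positions of \cref{def:dirty-normal-form}, so that the invariant can indeed be restored at every step.
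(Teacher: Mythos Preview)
Your proof is correct and follows essentially the same approach as the paper. The only difference is that the paper observes the entire word $g_1\cdots g_m N_I$ is already a dirty normal form (all the generators sit at positions labelled $\circled{1}$, where \cref{def:dirty-normal-form} allows any number of dirty gates), so a single invocation of \cref{lem:conversion} suffices rather than your $m$ iterations.
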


\begin{proof}
For $1 \leq k \leq n$, let $N_{I_k}$ be the normal form of the $k$-qutrit identity operator. First note that 
\begin{equation}\label{eq:n-qutrit-identity-normal}
    \scalebox{.7}{\tikzfig{figures/Normalization/n-qutritIdentityNormal}}.
\end{equation}

Indeed, the identity circuit can be converted to this normal form by applying \eqref{eq:identities} a finite number times. Appending the given Clifford circuit $C$ on the right with the normal form in \eqref{eq:n-qutrit-identity-normal}, we obtain a dirty normal form, which can be converted to its normal form by \cref{lem:conversion}.
\end{proof}

Together, \cref{prop:-normal-form,prop:normalization} show that qutrit Clifford operators are presented by the generators $-\omega$, $H$, $S$, and $\CZ$, the derived generators defined in \cref{fig:summary-derived-generators}, the normal boxes given in \cref{fig:Z-and-X-Normal-Boxes}, as well as the $380$ box relations listed in \cref{app:relations}. This presentation is highly redundant, so we provide a reduced set of relations as listed in \cref{fig:rewriterules}. This rule set remains complete for multi-qutrit Clifford circuits, but it offers a more intuitive framework for us to understand the qutrit Clifford gate interactions.

\begin{theorem}
    The $18$ rewrite rules listed in \cref{fig:rewriterules} are complete for qutrit Clifford circuits.
    \label{prop:reduced-relations}
\end{theorem}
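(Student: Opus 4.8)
The plan is to obtain completeness as a synthesis of three facts already in place: the existence-and-uniqueness of the normal form (\cref{prop:-normal-form}), the normalization procedure reducing any circuit to normal form (\cref{prop:normalization}), and the derivation of the $380$ box relations from the $18$ rules carried out in the supplement~\cite{Supplement2024}. First I would fix the precise meaning of completeness: for any two Clifford circuits $C_1,C_2$ built from the generators $-\omega$, $H$, $S$, $\CZ$ that denote the same $3^n\times 3^n$ matrix, there is a finite sequence of rewrites---each an instance of one of the $18$ rules in \cref{fig:rewriterules} or one of the assumed spatial relations---transforming $C_1$ into $C_2$.

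The core argument then proceeds as follows. By~\cite{Supplement2024}, each of the $380$ box relations in \cref{app:relations} is a consequence of the $18$ rules, and the elementary identities~\eqref{eq:identities} are likewise derivable from them (they concern only the scalars and the orders of the generators). Hence any rewrite justified by \cref{prop:normalization} can equally be justified using only the $18$ rules. Applying \cref{prop:normalization} to each $C_i$ yields a rewrite sequence, valid over the $18$ rules, taking $C_i$ to a circuit $N_i$ in normal form. Since each rule is an equality and may be applied in either direction, reversing the reduction of $C_2$ gives a rewrite $N_2\rightsquigarrow C_2$. It remains to show $N_1=N_2$ as literal normal-form circuits, after which concatenating $C_1\rightsquigarrow N_1 = N_2\rightsquigarrow C_2$ finishes the proof.

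To see $N_1=N_2$, I would invoke the uniqueness clause of \cref{prop:-normal-form}. Every rewrite preserves the denoted matrix exactly (the relations are genuine circuit equalities, scalars included), so $N_i$ equals $C_i$ as a matrix; hence $C_1=C_2$ forces $N_1=N_2$ as matrices. Because $C_1$ and $C_2$ induce the same Pauli automorphism, their normal boxes agree, and the only residual freedom in \cref{prop:-normal-form} is the global scalar $(-\omega)^t$ with $t\in\Z_6$. This phase bookkeeping is the one genuinely delicate point, and where I expect the subtlety to lie: \cref{prop:-normal-form} guarantees uniqueness only up to $(-\omega)^t$, so I must argue that the phase slot $t$ recorded in \cref{fig:macro-normal-circuit} is pinned down. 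This follows because $(-\omega)^t\neq 1$ for $t\neq 0$, so two normal forms that are equal as matrices cannot differ by a nontrivial phase; equality of $C_1$ and $C_2$ as matrices---rather than merely as unitaries up to phase---forces the recorded phases to coincide. Once this is settled, $N_1$ and $N_2$ are identical as circuits and the argument closes. The bulk of the technical labour already resides in the earlier propositions and in the supplement, so the present theorem is essentially the packaging that ties them together.
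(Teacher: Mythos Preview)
Your proposal is correct and follows essentially the same approach as the paper. The paper's own proof is a two-line reduction---``It suffices to show that these $18$ relations imply the $380$ box relations''---deferring entirely to the supplement; you have simply unpacked the ``why it suffices'' part (normalization via \cref{prop:normalization}, uniqueness via \cref{prop:-normal-form}, and the phase bookkeeping) that the paper leaves implicit in the paragraph preceding the theorem.
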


\begin{proof}
It suffices to show that these $18$ relations imply the $380$ box relations. Proofs can be found in the supplement to this paper~\cite{Supplement2024}.
\end{proof}

\section{Conclusion}
\label{sec:conclusion}

Generalizing from qubits to qudits, we present the first completeness result for the multi-qudit Clifford group. We consider the case when $d=3$, and generalize the methods from \cite{makary2021generators,selinger2015generators} to the qutrit setting. This is done in three steps. First, we define a normal form and show that every qutrit Clifford circuit can be uniquely written in this form. Then, we show how each Clifford circuit can be reduced to this normal form and find a set of $380$ box relations that suffice for this process. Finally, we prove that these box relations can be reduced to the $18$ gate relations in \cref{fig:rewriterules}. This complete set of rewrite rules offers a clear and intuitive framework for understanding the qutrit Clifford gate interactions. 

Notably, our generalization of the normal form from the qubit setting is far from obvious. In~\cite{makary2021generators,selinger2015generators}, the design space of the normal form is small enough that a working decomposition of the normal form could be found through brute force. In the qutrit case, however, the increased degrees of freedom make this approach significantly more tedious and labour-intensive. We therefore adapt the previous techniques to account for the subtleties of qutrit arithmetic and operations. This involves identifying additional symmetries that the normal form should satisfy, leading to a more canonical choice of normal form components.

Several open questions remain, the first of which concerns whether the reduced rule set presented in \cref{fig:rewriterules} is minimal. We can show that all the single-qutrit Clifford relations except for (C5), the $S$-$S'$ commutation rule, are necessary. Thus, (C1) to (C5) are nearly minimal. However, we are not aware of any arguments for the multi-qutrit Clifford relations. (C6) to (C18) appear quite natural, making it difficult to determine which, if any, could be eliminated. Alternatively, a more compact rule set might be achievable by using rotation operators as the generators for the qutrit Clifford group. Building on these considerations, we plan to generalize the framework developed in this paper to Clifford circuits in other odd prime dimensions. This will require more general techniques for constructing normal forms, supported by software tools for deriving, reducing, and verifying relations.

\paragraph{Acknowledgements}
The authors would like to thank Xiaoning Bian, Maris Ozols, and Peter Selinger for enlightening discussions, as well as the anonymous reviewers of the 22nd International Conference on Quantum Physics and Logic (QPL 2025) for their insightful comments on an earlier version of this paper. SML would like to thank Boldizsár Poór, Razin A. Shaikh, and Lia Yeh for their valuable feedback and support. The circuit diagrams in this paper and its supplement~\cite{Supplement2024} were typeset using TikZiT~\cite{tikz}. 

SML and MM thank NTT Research for their financial support. This work was supported in part by Canada’s NSERC. Research at IQC is supported in part by the Government of Canada through Innovation, Science and Economic Development Canada. Research at Perimeter Institute is supported in part by the Government of Canada through the Department of Innovation, Science and Economic Development Canada and by the Province of Ontario through the Ministry of Colleges and Universities. YZ is supported by VILLUM FONDEN via QMATH Centre of Excellence grant number 10059 and Villum Young Investigator grant number 37532.

\bibliographystyle{eptcs}
\bibliography{qutrit}

\begin{thebibliography}{10}
\providecommand{\bibitemdeclare}[2]{}
\providecommand{\surnamestart}{}
\providecommand{\surnameend}{}
\providecommand{\urlprefix}{Available at }
\providecommand{\url}[1]{\texttt{#1}}
\providecommand{\href}[2]{\texttt{#2}}
\providecommand{\urlalt}[2]{\href{#1}{#2}}
\providecommand{\doi}[1]{doi:\urlalt{https://doi.org/#1}{#1}}
\providecommand{\eprint}[1]{arXiv:\urlalt{https://arxiv.org/abs/#1}{#1}}
\providecommand{\bibinfo}[2]{#2}

\bibitemdeclare{article}{PhysRevA.70.052328}
\bibitem{PhysRevA.70.052328}
\bibinfo{author}{Scott \surnamestart Aaronson\surnameend} \& \bibinfo{author}{Daniel \surnamestart Gottesman\surnameend} (\bibinfo{year}{2004}): \emph{\bibinfo{title}{Improved Simulation of Stabilizer Circuits}}.
\newblock {\slshape \bibinfo{journal}{Phys. Rev. A}} \bibinfo{volume}{70}, p. \bibinfo{pages}{052328}, \doi{10.1103/PhysRevA.70.052328}.
\newblock \urlprefix\url{https://link.aps.org/doi/10.1103/PhysRevA.70.052328}.

\bibitemdeclare{article}{aigner2025qudit}
\bibitem{aigner2025qudit}
\bibinfo{author}{Paul \surnamestart Aigner\surnameend}, \bibinfo{author}{Maria~Flors \surnamestart Mor-Ruiz\surnameend} \& \bibinfo{author}{Wolfgang \surnamestart D{\"u}r\surnameend} (\bibinfo{year}{2025}): \emph{\bibinfo{title}{Qudit Noisy Stabilizer Formalism}}.
\newblock {\slshape \bibinfo{journal}{arXiv preprint arXiv:2505.03889}}.
\newblock \urlprefix\url{https://arxiv.org/abs/2505.03889}.

\bibitemdeclare{article}{Hus12}
\bibitem{Hus12}
\bibinfo{author}{Hussain \surnamestart Anwar\surnameend}, \bibinfo{author}{Earl~T \surnamestart Campbell\surnameend} \& \bibinfo{author}{Dan~E \surnamestart Browne\surnameend} (\bibinfo{year}{2012}): \emph{\bibinfo{title}{Qutrit Magic State Distillation}}.
\newblock {\slshape \bibinfo{journal}{New Journal of Physics}} \bibinfo{volume}{14}(\bibinfo{number}{6}), p. \bibinfo{pages}{063006}, \doi{10.1088/1367-2630/14/6/063006}.
\newblock \urlprefix\url{https://dx.doi.org/10.1088/1367-2630/14/6/063006}.

\bibitemdeclare{article}{backens2023completeness}
\bibitem{backens2023completeness}
\bibinfo{author}{Miriam \surnamestart Backens\surnameend}, \bibinfo{author}{Aleks \surnamestart Kissinger\surnameend}, \bibinfo{author}{Hector \surnamestart Miller-Bakewell\surnameend}, \bibinfo{author}{John \surnamestart van~de Wetering\surnameend} \& \bibinfo{author}{Sal \surnamestart Wolffs\surnameend} (\bibinfo{year}{2023}): \emph{\bibinfo{title}{Completeness of the ZH-Calculus}}.
\newblock {\slshape \bibinfo{journal}{Compositionality}} \bibinfo{volume}{5}, \doi{10.32408/compositionality-5-5}.

\bibitemdeclare{article}{Blo21}
\bibitem{Blo21}
\bibinfo{author}{M.~S. \surnamestart Blok\surnameend}, \bibinfo{author}{V.~V. \surnamestart Ramasesh\surnameend}, \bibinfo{author}{T.~\surnamestart Schuster\surnameend}, \bibinfo{author}{K.~\surnamestart O'Brien\surnameend}, \bibinfo{author}{J.~M. \surnamestart Kreikebaum\surnameend}, \bibinfo{author}{D.~\surnamestart Dahlen\surnameend}, \bibinfo{author}{A.~\surnamestart Morvan\surnameend}, \bibinfo{author}{B.~\surnamestart Yoshida\surnameend}, \bibinfo{author}{N.~Y. \surnamestart Yao\surnameend} \& \bibinfo{author}{I.~\surnamestart Siddiqi\surnameend} (\bibinfo{year}{2021}): \emph{\bibinfo{title}{Quantum Information Scrambling on a Superconducting Qutrit Processor}}.
\newblock {\slshape \bibinfo{journal}{Phys. Rev. X}} \bibinfo{volume}{11}, p. \bibinfo{pages}{021010}, \doi{10.1103/PhysRevX.11.021010}.

\bibitemdeclare{article}{PhysRevA.96.012306}
\bibitem{PhysRevA.96.012306}
\bibinfo{author}{Alex \surnamestart Bocharov\surnameend}, \bibinfo{author}{Martin \surnamestart Roetteler\surnameend} \& \bibinfo{author}{Krysta~M. \surnamestart Svore\surnameend} (\bibinfo{year}{2017}): \emph{\bibinfo{title}{Factoring with Qutrits: Shor's Algorithm on Ternary and Metaplectic Quantum Architectures}}.
\newblock {\slshape \bibinfo{journal}{Phys. Rev. A}} \bibinfo{volume}{96}, p. \bibinfo{pages}{012306}, \doi{10.1103/PhysRevA.96.012306}.
\newblock \urlprefix\url{https://link.aps.org/doi/10.1103/PhysRevA.96.012306}.

\bibitemdeclare{article}{Bru08}
\bibitem{Bru08}
\bibinfo{author}{Nicolas \surnamestart Brunner\surnameend}, \bibinfo{author}{Stefano \surnamestart Pironio\surnameend}, \bibinfo{author}{Antonio \surnamestart Acin\surnameend}, \bibinfo{author}{Nicolas \surnamestart Gisin\surnameend}, \bibinfo{author}{Andr\'e~Allan \surnamestart M\'ethot\surnameend} \& \bibinfo{author}{Valerio \surnamestart Scarani\surnameend} (\bibinfo{year}{2008}): \emph{\bibinfo{title}{Testing the Dimension of Hilbert Spaces}}.
\newblock {\slshape \bibinfo{journal}{Phys. Rev. Lett.}} \bibinfo{volume}{100}, p. \bibinfo{pages}{210503}, \doi{10.1103/PhysRevLett.100.210503}.

\bibitemdeclare{article}{PhysRevLett.113.230501}
\bibitem{PhysRevLett.113.230501}
\bibinfo{author}{Earl~T. \surnamestart Campbell\surnameend} (\bibinfo{year}{2014}): \emph{\bibinfo{title}{Enhanced Fault-Tolerant Quantum Computing in $d$-Level Systems}}.
\newblock {\slshape \bibinfo{journal}{Phys. Rev. Lett.}} \bibinfo{volume}{113}, p. \bibinfo{pages}{230501}, \doi{10.1103/PhysRevLett.113.230501}.
\newblock \urlprefix\url{https://link.aps.org/doi/10.1103/PhysRevLett.113.230501}.

\bibitemdeclare{article}{PhysRevX.2.041021}
\bibitem{PhysRevX.2.041021}
\bibinfo{author}{Earl~T. \surnamestart Campbell\surnameend}, \bibinfo{author}{Hussain \surnamestart Anwar\surnameend} \& \bibinfo{author}{Dan~E. \surnamestart Browne\surnameend} (\bibinfo{year}{2012}): \emph{\bibinfo{title}{Magic-State Distillation in All Prime Dimensions Using Quantum Reed-Muller Codes}}.
\newblock {\slshape \bibinfo{journal}{Phys. Rev. X}} \bibinfo{volume}{2}, p. \bibinfo{pages}{041021}, \doi{10.1103/PhysRevX.2.041021}.
\newblock \urlprefix\url{https://link.aps.org/doi/10.1103/PhysRevX.2.041021}.

\bibitemdeclare{article}{chi2022programmable}
\bibitem{chi2022programmable}
\bibinfo{author}{Yulin \surnamestart Chi\surnameend}, \bibinfo{author}{Jieshan \surnamestart Huang\surnameend}, \bibinfo{author}{Zhanchuan \surnamestart Zhang\surnameend}, \bibinfo{author}{Jun \surnamestart Mao\surnameend}, \bibinfo{author}{Zinan \surnamestart Zhou\surnameend}, \bibinfo{author}{Xiaojiong \surnamestart Chen\surnameend}, \bibinfo{author}{Chonghao \surnamestart Zhai\surnameend}, \bibinfo{author}{Jueming \surnamestart Bao\surnameend}, \bibinfo{author}{Tianxiang \surnamestart Dai\surnameend}, \bibinfo{author}{Huihong \surnamestart Yuan\surnameend} et~al. (\bibinfo{year}{2022}): \emph{\bibinfo{title}{A Programmable Qudit-Based Quantum Processor}}.
\newblock {\slshape \bibinfo{journal}{Nature communications}} \bibinfo{volume}{13}(\bibinfo{number}{1}), p. \bibinfo{pages}{1166}, \doi{10.1038/s41467-022-28767-x}.

\bibitemdeclare{inproceedings}{clement2023complete}
\bibitem{clement2023complete}
\bibinfo{author}{Alexandre \surnamestart Clément\surnameend}, \bibinfo{author}{Nicolas \surnamestart Heurtel\surnameend}, \bibinfo{author}{Shane \surnamestart Mansfield\surnameend}, \bibinfo{author}{Simon \surnamestart Perdrix\surnameend} \& \bibinfo{author}{Benoît \surnamestart Valiron\surnameend} (\bibinfo{year}{2023}): \emph{\bibinfo{title}{A Complete Equational Theory for Quantum Circuits}}.
\newblock In: {\slshape \bibinfo{booktitle}{2023 38th Annual ACM/IEEE Symposium on Logic in Computer Science (LICS)}}, pp. \bibinfo{pages}{1--13}, \doi{10.1109/LICS56636.2023.10175801}.

\bibitemdeclare{article}{PhysRevLett.115.030501}
\bibitem{PhysRevLett.115.030501}
\bibinfo{author}{Hillary \surnamestart Dawkins\surnameend} \& \bibinfo{author}{Mark \surnamestart Howard\surnameend} (\bibinfo{year}{2015}): \emph{\bibinfo{title}{Qutrit Magic State Distillation Tight in Some Directions}}.
\newblock {\slshape \bibinfo{journal}{Phys. Rev. Lett.}} \bibinfo{volume}{115}, p. \bibinfo{pages}{030501}, \doi{10.1103/PhysRevLett.115.030501}.

\bibitemdeclare{article}{Des22}
\bibitem{Des22}
\bibinfo{author}{S\'ebastien \surnamestart Designolle\surnameend} (\bibinfo{year}{2022}): \emph{\bibinfo{title}{Robust Genuine High-Dimensional Steering with Many Measurements}}.
\newblock {\slshape \bibinfo{journal}{Phys. Rev. A}} \bibinfo{volume}{105}, p. \bibinfo{pages}{032430}, \doi{10.1103/PhysRevA.105.032430}.

\bibitemdeclare{article}{goss2022high}
\bibitem{goss2022high}
\bibinfo{author}{Noah \surnamestart Goss\surnameend}, \bibinfo{author}{Alexis \surnamestart Morvan\surnameend}, \bibinfo{author}{Brian \surnamestart Marinelli\surnameend}, \bibinfo{author}{Bradley~K \surnamestart Mitchell\surnameend}, \bibinfo{author}{Long~B \surnamestart Nguyen\surnameend}, \bibinfo{author}{Ravi~K \surnamestart Naik\surnameend}, \bibinfo{author}{Larry \surnamestart Chen\surnameend}, \bibinfo{author}{Christian \surnamestart J{\"u}nger\surnameend}, \bibinfo{author}{John~Mark \surnamestart Kreikebaum\surnameend}, \bibinfo{author}{David~I \surnamestart Santiago\surnameend} et~al. (\bibinfo{year}{2022}): \emph{\bibinfo{title}{High-Fidelity Qutrit Entangling Gates for Superconducting Circuits}}.
\newblock {\slshape \bibinfo{journal}{Nature communications}} \bibinfo{volume}{13}(\bibinfo{number}{1}), p. \bibinfo{pages}{7481}, \doi{10.1038/s41467-022-34851-z}.

\bibitemdeclare{inproceedings}{Got99}
\bibitem{Got99}
\bibinfo{author}{Daniel \surnamestart Gottesman\surnameend} (\bibinfo{year}{1999}): \emph{\bibinfo{title}{Fault-Tolerant Quantum Computation with Higher-Dimensional Systems}}.
\newblock In \bibinfo{editor}{Colin~P. \surnamestart Williams\surnameend}, editor: {\slshape \bibinfo{booktitle}{Quantum Computing and Quantum Communications}}, \bibinfo{publisher}{Springer Berlin Heidelberg}, \bibinfo{address}{Berlin, Heidelberg}, pp. \bibinfo{pages}{302--313}, \doi{10.1007/3-540-49208-9_27}.

\bibitemdeclare{article}{gustafson2025synthesis}
\bibitem{gustafson2025synthesis}
\bibinfo{author}{Erik~J \surnamestart Gustafson\surnameend}, \bibinfo{author}{Henry \surnamestart Lamm\surnameend}, \bibinfo{author}{Diyi \surnamestart Liu\surnameend}, \bibinfo{author}{Edison~M \surnamestart Murairi\surnameend} \& \bibinfo{author}{Shuchen \surnamestart Zhu\surnameend} (\bibinfo{year}{2025}): \emph{\bibinfo{title}{Synthesis of Single Qutrit Circuits from Clifford+R}}.
\newblock {\slshape \bibinfo{journal}{arXiv preprint arXiv:2503.20203}}.
\newblock \urlprefix\url{https://arxiv.org/abs/2503.20203}.

\bibitemdeclare{article}{Hrm22}
\bibitem{Hrm22}
\bibinfo{author}{Pavel \surnamestart Hrmo\surnameend}, \bibinfo{author}{Benjamin \surnamestart Wilhelm\surnameend}, \bibinfo{author}{Lukas \surnamestart Gerster\surnameend}, \bibinfo{author}{\surnamestart maarten~van Mourik\surnameend}, \bibinfo{author}{Marcus \surnamestart Huber\surnameend}, \bibinfo{author}{Rainer \surnamestart Blatt\surnameend}, \bibinfo{author}{Philipp \surnamestart Schindler\surnameend}, \bibinfo{author}{Thomas \surnamestart Monz\surnameend} \& \bibinfo{author}{Martin \surnamestart Ringbauer\surnameend} (\bibinfo{year}{2022}): \emph{\bibinfo{title}{Native Qudit Entanglement in a Trapped Ion Quantum Processor}}.
\newblock {\slshape \bibinfo{journal}{Nature Communications}} \bibinfo{volume}{14}, \doi{10.1038/s41467-023-37375-2}.

\bibitemdeclare{article}{PhysRevA.102.042409}
\bibitem{PhysRevA.102.042409}
\bibinfo{author}{Akalank \surnamestart Jain\surnameend} \& \bibinfo{author}{Shiroman \surnamestart Prakash\surnameend} (\bibinfo{year}{2020}): \emph{\bibinfo{title}{Qutrit and Ququint Magic States}}.
\newblock {\slshape \bibinfo{journal}{Phys. Rev. A}} \bibinfo{volume}{102}, p. \bibinfo{pages}{042409}, \doi{10.1103/PhysRevA.102.042409}.

\bibitemdeclare{inproceedings}{jeandel2018diagrammatic}
\bibitem{jeandel2018diagrammatic}
\bibinfo{author}{Emmanuel \surnamestart Jeandel\surnameend}, \bibinfo{author}{Simon \surnamestart Perdrix\surnameend} \& \bibinfo{author}{Renaud \surnamestart Vilmart\surnameend} (\bibinfo{year}{2018}): \emph{\bibinfo{title}{Diagrammatic Reasoning beyond Clifford+T Quantum Mechanics}}.
\newblock In: {\slshape \bibinfo{booktitle}{Proceedings of the 33rd Annual ACM/IEEE Symposium on Logic in Computer Science}}, \bibinfo{series}{LICS '18}, \bibinfo{publisher}{Association for Computing Machinery}, \bibinfo{address}{New York, NY, USA}, p. \bibinfo{pages}{569–578}, \doi{10.1145/3209108.3209139}.

\bibitemdeclare{misc}{tikz}
\bibitem{tikz}
\bibinfo{author}{Aleks \surnamestart Kissinger\surnameend}, \bibinfo{author}{Alexander \surnamestart Merry\surnameend}, \bibinfo{author}{Chris \surnamestart Heunen\surnameend} \& \bibinfo{author}{K.~Johan \surnamestart Paulsson\surnameend}: \emph{\bibinfo{title}{{T}ik{Z}iT}}.
\newblock \bibinfo{howpublished}{\url{https://tikzit.github.io/index.html}}.
\newblock \bibinfo{note}{Accessed: 2024-12-08}.

\bibitemdeclare{book}{KissingerWetering2024Book}
\bibitem{KissingerWetering2024Book}
\bibinfo{author}{Aleks \surnamestart Kissinger\surnameend} \& \bibinfo{author}{John \surnamestart van~de Wetering\surnameend} (\bibinfo{year}{2024}): \emph{\bibinfo{title}{{Picturing Quantum Software: An Introduction to the ZX-Calculus and Quantum Compilation}}}.
\newblock \bibinfo{publisher}{Preprint}.
\newblock \urlprefix\url{https://github.com/zxcalc/book?tab=readme-ov-file}.

\bibitemdeclare{article}{Supplement2024}
\bibitem{Supplement2024}
\bibinfo{author}{Sarah~Meng \surnamestart Li\surnameend}, \bibinfo{author}{Michele \surnamestart Mosca\surnameend}, \bibinfo{author}{Neil~J. \surnamestart Ross\surnameend}, \bibinfo{author}{John \surnamestart van~de Wetering\surnameend} \& \bibinfo{author}{Yuming \surnamestart Zhao\surnameend} (\bibinfo{year}{2025}): \emph{\bibinfo{title}{A Complete and Natural Rule Set for Multi-Qutrit Clifford Circuits: Derived Relations and Supplemental Proofs}}.
\newblock {\slshape \bibinfo{journal}{Available as an ancillary file from this paper’s arXiv page}}.

\bibitemdeclare{article}{Low20}
\bibitem{Low20}
\bibinfo{author}{Pei~Jiang \surnamestart Low\surnameend}, \bibinfo{author}{Brendan~M. \surnamestart White\surnameend}, \bibinfo{author}{Andrew~A. \surnamestart Cox\surnameend}, \bibinfo{author}{Matthew~L. \surnamestart Day\surnameend} \& \bibinfo{author}{Crystal \surnamestart Senko\surnameend} (\bibinfo{year}{2020}): \emph{\bibinfo{title}{Practical Trapped-Ion Protocols for Universal Qudit-Based Quantum Computing}}.
\newblock {\slshape \bibinfo{journal}{Phys. Rev. Res.}} \bibinfo{volume}{2}, p. \bibinfo{pages}{033128}, \doi{10.1103/PhysRevResearch.2.033128}.

\bibitemdeclare{article}{makary2021generators}
\bibitem{makary2021generators}
\bibinfo{author}{Justin \surnamestart Makary\surnameend}, \bibinfo{author}{Neil~J \surnamestart Ross\surnameend} \& \bibinfo{author}{Peter \surnamestart Selinger\surnameend} (\bibinfo{year}{2021}): \emph{\bibinfo{title}{Generators and Relations for Real Stabilizer Operators}}.
\newblock {\slshape \bibinfo{journal}{2021, Electronic Proceedings in Theoretical Computer Science, p. 14-36}}, \doi{10.4204/eptcs.343.2}.

\bibitemdeclare{article}{ng2017universal}
\bibitem{ng2017universal}
\bibinfo{author}{Kang~Feng \surnamestart Ng\surnameend} \& \bibinfo{author}{Quanlong \surnamestart Wang\surnameend} (\bibinfo{year}{2017}): \emph{\bibinfo{title}{A Universal Completion of the ZX-Calculus}}.
\newblock {\slshape \bibinfo{journal}{arXiv preprint arXiv:1706.09877}}.
\newblock \urlprefix\url{https://arxiv.org/abs/1706.09877}.

\bibitemdeclare{article}{nikolaeva2024scalable}
\bibitem{nikolaeva2024scalable}
\bibinfo{author}{Anastasiia~S \surnamestart Nikolaeva\surnameend}, \bibinfo{author}{Ilia~V \surnamestart Zalivako\surnameend}, \bibinfo{author}{Alexander~S \surnamestart Borisenko\surnameend}, \bibinfo{author}{Nikita~V \surnamestart Semenin\surnameend}, \bibinfo{author}{Kristina~P \surnamestart Galstyan\surnameend}, \bibinfo{author}{Andrey~E \surnamestart Korolkov\surnameend}, \bibinfo{author}{Evgeniy~O \surnamestart Kiktenko\surnameend}, \bibinfo{author}{Ksenia~Yu \surnamestart Khabarova\surnameend}, \bibinfo{author}{Ilya~A \surnamestart Semerikov\surnameend}, \bibinfo{author}{Aleksey~K \surnamestart Fedorov\surnameend} et~al. (\bibinfo{year}{2024}): \emph{\bibinfo{title}{Scalable Improvement of the Generalized Toffoli Gate Realization Using Trapped-Ion-Based Qutrits}}.
\newblock {\slshape \bibinfo{journal}{arXiv preprint arXiv:2407.07758}}.
\newblock \urlprefix\url{https://arxiv.org/abs/2407.07758}.

\bibitemdeclare{article}{PhysRevA.103.032417}
\bibitem{PhysRevA.103.032417}
\bibinfo{author}{Archimedes \surnamestart Pavlidis\surnameend} \& \bibinfo{author}{Emmanuel \surnamestart Floratos\surnameend} (\bibinfo{year}{2021}): \emph{\bibinfo{title}{Quantum-Fourier-Transform-Based Quantum Arithmetic with Qudits}}.
\newblock {\slshape \bibinfo{journal}{Phys. Rev. A}} \bibinfo{volume}{103}, p. \bibinfo{pages}{032417}, \doi{10.1103/PhysRevA.103.032417}.
\newblock \urlprefix\url{https://link.aps.org/doi/10.1103/PhysRevA.103.032417}.

\bibitemdeclare{article}{poor2023qupit}
\bibitem{poor2023qupit}
\bibinfo{author}{Boldizs{\'a}r \surnamestart Po{\'o}r\surnameend}, \bibinfo{author}{Robert~I \surnamestart Booth\surnameend}, \bibinfo{author}{Titouan \surnamestart Carette\surnameend}, \bibinfo{author}{John \surnamestart Van De~Wetering\surnameend} \& \bibinfo{author}{Lia \surnamestart Yeh\surnameend} (\bibinfo{year}{2023}): \emph{\bibinfo{title}{The Qupit Stabiliser ZX-Travaganza: Simplified Axioms, Normal Forms and Graph-Theoretic Simplification}}.
\newblock {\slshape \bibinfo{journal}{2023, Electronic Proceedings in Theoretical Computer Science, p. 220-264}}, \doi{10.4204/EPTCS.384.13}.

\bibitemdeclare{article}{PS24}
\bibitem{PS24}
\bibinfo{author}{Shiroman \surnamestart Prakash\surnameend} \& \bibinfo{author}{Tanay \surnamestart Saha\surnameend} (\bibinfo{year}{2025}): \emph{\bibinfo{title}{Low {O}verhead {Q}utrit {M}agic {S}tate {D}istillation}}.
\newblock {\slshape \bibinfo{journal}{{Quantum}}} \bibinfo{volume}{9}, p. \bibinfo{pages}{1768}, \doi{10.22331/q-2025-06-12-1768}.

\bibitemdeclare{article}{Rin21}
\bibitem{Rin21}
\bibinfo{author}{Martin \surnamestart Ringbauer\surnameend}, \bibinfo{author}{Michael \surnamestart Meth\surnameend}, \bibinfo{author}{Lukas \surnamestart Postler\surnameend}, \bibinfo{author}{Roman \surnamestart Stricker\surnameend}, \bibinfo{author}{Rainer \surnamestart Blatt\surnameend}, \bibinfo{author}{Philipp \surnamestart Schindler\surnameend} \& \bibinfo{author}{Thomas \surnamestart Monz\surnameend} (\bibinfo{year}{2021}): \emph{\bibinfo{title}{A Universal Qudit Quantum Processor with Trapped Ions}}.
\newblock {\slshape \bibinfo{journal}{Nature Physics}} \bibinfo{volume}{18}, pp. \bibinfo{pages}{1053 -- 1057}, \doi{10.1038/s41567-022-01658-0}.

\bibitemdeclare{article}{selinger2015generators}
\bibitem{selinger2015generators}
\bibinfo{author}{Peter \surnamestart Selinger\surnameend} (\bibinfo{year}{2015}): \emph{\bibinfo{title}{Generators and Relations for N-Qubit Clifford Operators}}.
\newblock {\slshape \bibinfo{journal}{Logical Methods in Computer Science}} \bibinfo{volume}{Volume 11, Issue 2}:\bibinfo{eid}{10}, \doi{10.2168/LMCS-11(2:10)2015}.
\newblock \urlprefix\url{https://lmcs.episciences.org/1570}.

\bibitemdeclare{article}{PhysRevLett.134.050601}
\bibitem{PhysRevLett.134.050601}
\bibinfo{author}{Vinay \surnamestart Tripathi\surnameend}, \bibinfo{author}{Noah \surnamestart Goss\surnameend}, \bibinfo{author}{Arian \surnamestart Vezvaee\surnameend}, \bibinfo{author}{Long~B. \surnamestart Nguyen\surnameend}, \bibinfo{author}{Irfan \surnamestart Siddiqi\surnameend} \& \bibinfo{author}{Daniel~A. \surnamestart Lidar\surnameend} (\bibinfo{year}{2025}): \emph{\bibinfo{title}{Qudit Dynamical Decoupling on a Superconducting Quantum Processor}}.
\newblock {\slshape \bibinfo{journal}{Phys. Rev. Lett.}} \bibinfo{volume}{134}, p. \bibinfo{pages}{050601}, \doi{10.1103/PhysRevLett.134.050601}.
\newblock \urlprefix\url{https://link.aps.org/doi/10.1103/PhysRevLett.134.050601}.

\bibitemdeclare{article}{FG24}
\bibitem{FG24}
\bibinfo{author}{Robert~Frederik \surnamestart Uy\surnameend} \& \bibinfo{author}{Dorian~A. \surnamestart Gangloff\surnameend} (\bibinfo{year}{2024}): \emph{\bibinfo{title}{Qudit-Based Quantum Error-Correcting Codes from Irreducible Representations of $SU(d)$}}.
\newblock {\slshape \bibinfo{journal}{arXiv:2410.02407}}.
\newblock \urlprefix\url{https://arxiv.org/abs/2410.02407}.

\bibitemdeclare{article}{wang2018qutrit}
\bibitem{wang2018qutrit}
\bibinfo{author}{Quanlong \surnamestart Wang\surnameend} (\bibinfo{year}{2018}): \emph{\bibinfo{title}{Qutrit ZX-Zalculus is Complete for Stabilizer Quantum Mechanics}}.
\newblock {\slshape \bibinfo{journal}{2018, Electronic Proceedings in Theoretical Computer Science, p. 58-70}}, \doi{10.4204/eptcs.266.3}.

\bibitemdeclare{article}{wang2020qudits}
\bibitem{wang2020qudits}
\bibinfo{author}{Yuchen \surnamestart Wang\surnameend}, \bibinfo{author}{Zixuan \surnamestart Hu\surnameend}, \bibinfo{author}{Barry~C \surnamestart Sanders\surnameend} \& \bibinfo{author}{Sabre \surnamestart Kais\surnameend} (\bibinfo{year}{2020}): \emph{\bibinfo{title}{Qudits and High-Dimensional Quantum Computing}}.
\newblock {\slshape \bibinfo{journal}{Frontiers in Physics}} \bibinfo{volume}{Volume 8 - 2020}, \doi{10.3389/fphy.2020.589504}.

\bibitemdeclare{article}{Win23}
\bibitem{Win23}
\bibinfo{author}{Karen \surnamestart Wintersperger\surnameend}, \bibinfo{author}{Florian \surnamestart Dommert\surnameend}, \bibinfo{author}{Thomas \surnamestart Ehmer\surnameend}, \bibinfo{author}{Andrey \surnamestart Hoursanov\surnameend}, \bibinfo{author}{Johannes \surnamestart Klepsch\surnameend}, \bibinfo{author}{Wolfgang \surnamestart Mauerer\surnameend}, \bibinfo{author}{Georg~S. \surnamestart Reuber\surnameend}, \bibinfo{author}{Thomas \surnamestart Strohm\surnameend}, \bibinfo{author}{Ming-Yu \surnamestart Yin\surnameend} \& \bibinfo{author}{Sebastian \surnamestart Luber\surnameend} (\bibinfo{year}{2023}): \emph{\bibinfo{title}{Neutral Atom Quantum Computing Hardware: Performance and End-User Perspective}}.
\newblock {\slshape \bibinfo{journal}{EPJ Quantum Technology}} \bibinfo{volume}{10}, pp. \bibinfo{pages}{1--26}, \doi{10.1140/epjqt/s40507-023-00190-1}.

\bibitemdeclare{article}{Ye18}
\bibitem{Ye18}
\bibinfo{author}{Biaoliang \surnamestart Ye\surnameend}, \bibinfo{author}{Zhen-Fei \surnamestart Zheng\surnameend}, \bibinfo{author}{Yu~\surnamestart Zhang\surnameend} \& \bibinfo{author}{Chui-Ping \surnamestart Yang\surnameend} (\bibinfo{year}{2018}): \emph{\bibinfo{title}{Circuit QED: Single-Step Realization of a Multiqubit Controlled Phase Gate with One Microwave Photonic Qubit Simultaneously Controlling n-1 Microwave Photonic Qubits}}.
\newblock {\slshape \bibinfo{journal}{Opt. Express}} \bibinfo{volume}{26}(\bibinfo{number}{23}), pp. \bibinfo{pages}{30689--30702}, \doi{10.1364/OE.26.030689}.

\bibitemdeclare{article}{Yur20}
\bibitem{Yur20}
\bibinfo{author}{M.~A. \surnamestart Yurtalan\surnameend}, \bibinfo{author}{J.~\surnamestart Shi\surnameend}, \bibinfo{author}{M.~\surnamestart Kononenko\surnameend}, \bibinfo{author}{A.~\surnamestart Lupascu\surnameend} \& \bibinfo{author}{S.~\surnamestart Ashhab\surnameend} (\bibinfo{year}{2020}): \emph{\bibinfo{title}{Implementation of a Walsh-Hadamard Gate in a Superconducting Qutrit}}.
\newblock {\slshape \bibinfo{journal}{Phys. Rev. Lett.}} \bibinfo{volume}{125}, p. \bibinfo{pages}{180504}, \doi{10.1103/PhysRevLett.125.180504}.

\end{thebibliography}

\appendix

\clearpage
\newgeometry{margin=2cm}
\section{Complementary Definitions}
\label{sec:other-definitions}

Here we provide complementary definitions that are helpful for discussing our results. 

By putting \cref{def:Z-and-X-normal,def:normal} together, \cref{fig:micro-normal-circuit} shows the normal form of an $n$-qutrit Clifford circuit when expanding the Z- and X-normal circuits in \cref{fig:normal-components}. \cref{fig:Z-Normal-Boxes,fig:X-Normal-Boxes} describe the individual construction of each normal box in terms of $H$, $S$, $\CZ$, and $\XC$ gates.



According to \cref{subsec:action}, we can describe the automorphism of each normal box by its actions on the Pauli generators. That is, consider a normal box $M$ and a Pauli generator $P$, $MPM^\dagger = Q$, for some Pauli $Q$. For our purposes, it is also convenient to describe $M$'s \emph{inverse conjugation} on Pauli generators: $M^\dagger PM = Q'$, for some Pauli $Q'$.

For all normal boxes, \cref{fig:Z-and-X-Normal-Boxes} provides a high-level overview of their constructions using parameterized definitions. \cref{fig:Z-Normal-Boxes,fig:X-Normal-Boxes} provide a microscopic picture of their constructions. Based on these two perspectives, \cref{fig:auto-normal-box} summarizes important behaviours of normal boxes. The action of $A$, $C$, $E$, and $F$ boxes on $\Pauli_1$ are described by their actions on Paulis $X$ and $Z$, where we write the (pre)image of $X$ above the qutrit wire, and that of $Z$ below it. Based on the generic parameterized constructions of $B$ and $D$ boxes, it suffices to specify the automorphism of $B_{00}$, $B_{01}$, $D_{00}$, $D_{01}$, and $D_{02}$. Also note that $D_{02} = B_{00}$ and $D_{01} = B^{-1}_{00}$.

\begin{figure}[H]
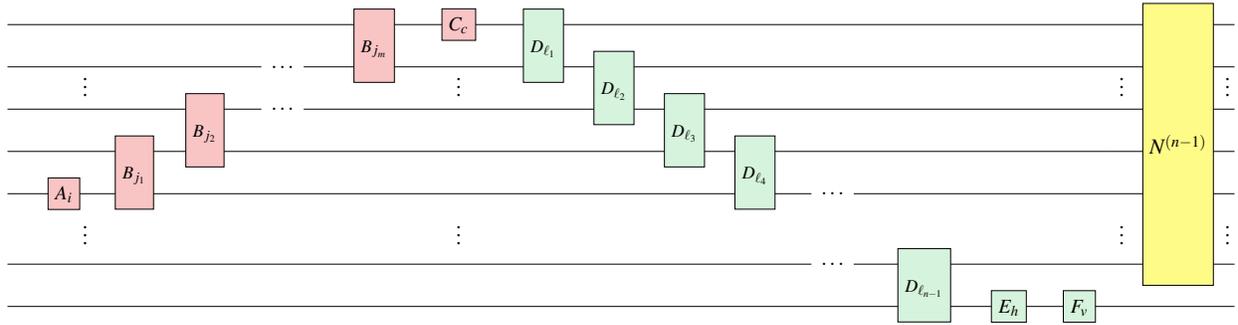

    \begin{equation*}
        \scalebox{0.75}{\tikzfig{figures/Alternatives/cleanNormalForm}}
    \end{equation*}
    \caption{A close-up view of the normal form of an $n$-qutrit Clifford circuit. $N^{(n-1)}$ has the same structure as the circuit before it, but on one fewer qutrit.} 
    \label{fig:micro-normal-circuit}
\end{figure}

\clearpage
\restoregeometry

\begin{figure}[H]
    \centering
\scalebox{.9}{\tikzfig{figures/Alternatives/ZNormalBoxes}}
    \caption{A microscopic picture of the $Z$ normal boxes.}
    \label{fig:Z-Normal-Boxes}
\end{figure}

\begin{figure}[H]
    \centering
\scalebox{.9}{\tikzfig{figures/Alternatives/XNormalBoxes}}
    \caption{A microscopic picture of the $X$ normal boxes.}
    \label{fig:X-Normal-Boxes}
\end{figure}

 \begin{figure}[H]
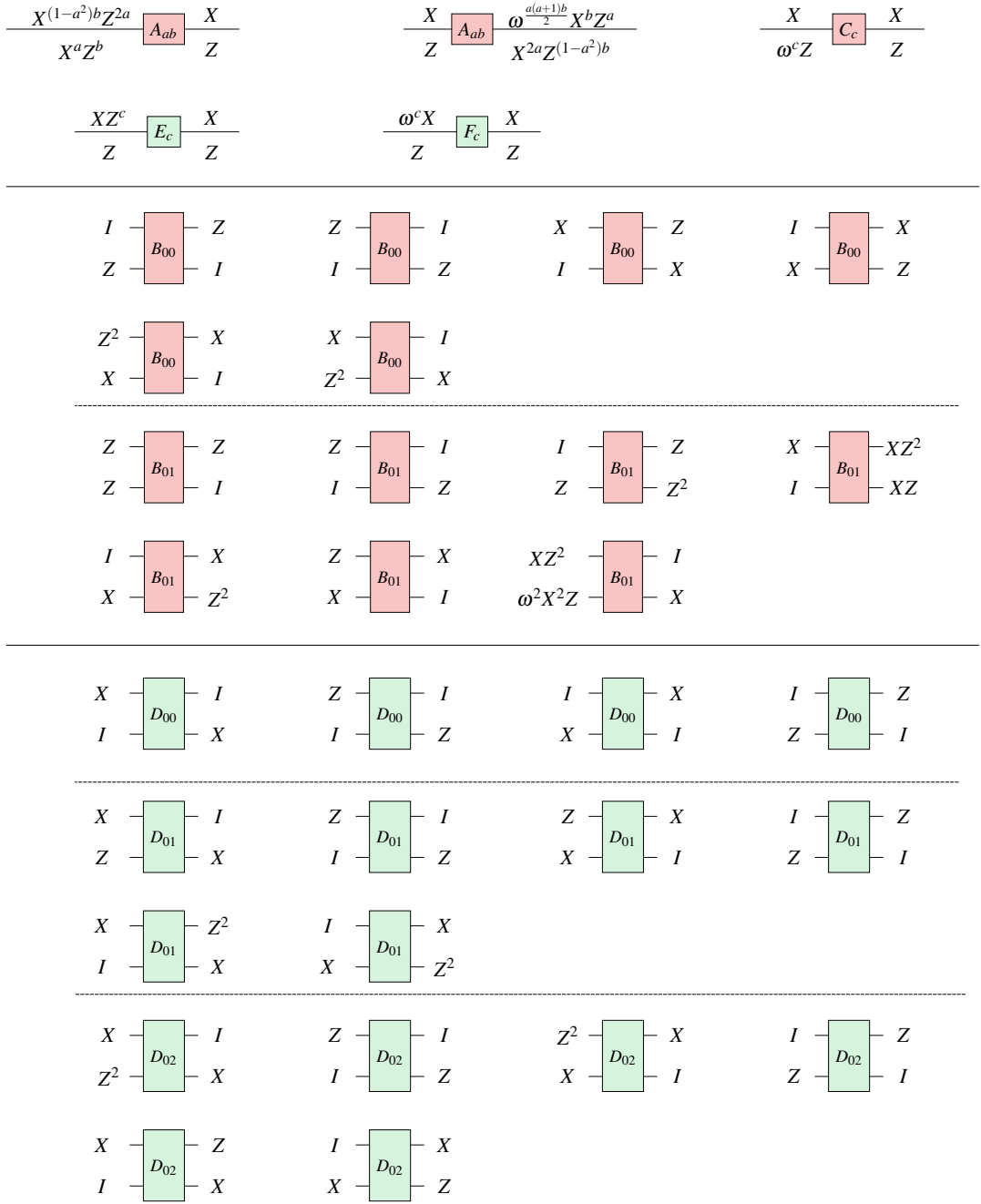

     \centering
      \[
\scalebox{0.8}{\tikzfig{figures/normalBoxAction}}
\]
     \caption{The actions of $A$, $C$, $E$, and $F$ boxes on $\Pauli_1$ are described by their actions on the single-qutrit Pauli generators $X$ and $Z$. $(a,b) \in \Z_3 \times \Z_3\setminus \{(0,0)\}$ and $c \in \Z_3$. The actions of $B_{00}$, $B_{01}$, $D_{00}$, $D_{01}$, and $D_{02}$ on $\Pauli_2$ are described by their actions on the two-qutrit Pauli generators $X\otimes I$, $I\otimes X$, $Z\otimes I$, and $I\otimes Z$. For each Pauli generator, we display both their images and preimages within a normal box. These will be useful for finding all box relations.}
     \label{fig:auto-normal-box}
 \end{figure}

\newpage
\section{Complementary Proofs}
\label{sec:proofs}
Here, we provide detailed proofs of the results presented in \cref{sec:normal-form}. Let $\lambda \in \C$. $\lambda^\dagger$ denotes the complex conjugate of $\lambda$. When $u$, $\tau$, and $\nu$ are elements of a ring $R$, we write $u\equiv_\nu \tau$ if $u$ is congruent to $\tau$ modulo $\nu$. That is, $u-\tau= r\nu$, for some $r \in R$. We write  $u\not\equiv_\nu \tau$ otherwise.

\subsection{Scalars of \texorpdfstring{$\Clifford_n$}{Cn}}
\label{subsec:scalars}

We start by showing that all Clifford scalars are of the form $(-\omega)^t$, for $t \in \Z_6$. This becomes useful for finding the cardinality of $\Clifford_n$, as shown in \cref{cor:cardinality}. To this end, we first show that the phases (complex numbers of unit norm) that belong to the ring $\Tw$ are exactly the complex numbers of the form $(-\omega)^t$, for $t \in \Z_6$. Since the complex numbers of this form all belong to $\Tw$, we only need to show that if an element of $\Tw$ has norm 1, then it is of the required form.

\begin{lemma}
    Let $a,b\in\Z$. If $a\not\equiv_3 0$, $b\not\equiv_3 0$, and $a\not\equiv_3 b$, then $a^2 -ab + b^2 \equiv_9 3$.
    \label{lem:phases}
\end{lemma}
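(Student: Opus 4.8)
The plan is to reduce this to a one-line computation by first extracting the key arithmetic consequence of the hypotheses. Observe that since $a\not\equiv_3 0$ and $b\not\equiv_3 0$, each of $a$ and $b$ is congruent to either $1$ or $2$ modulo $3$. Combined with $a\not\equiv_3 b$, this forces $\{a\bmod 3,\,b\bmod 3\}=\{1,2\}$, and hence $a+b\equiv_3 0$ and $ab\equiv_3 2$. These two congruences are the only facts about $a,b$ that I will need.

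Next I would invoke the identity $a^2-ab+b^2=(a+b)^2-3ab$. Writing $a+b=3k$ for some $k\in\Z$ (which is exactly the content of $a+b\equiv_3 0$), the first term becomes $(a+b)^2=9k^2\equiv_9 0$, so that $a^2-ab+b^2\equiv_9 -3ab$. It then remains to evaluate $-3ab$ modulo $9$. Since $ab\equiv_3 2$, write $ab=3m+2$ for some $m\in\Z$; then $-3ab=-9m-6\equiv_9 -6\equiv_9 3$. Chaining these two congruences gives $a^2-ab+b^2\equiv_9 3$, as required.

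There is no real obstacle here: the statement is elementary and the proof is a short computation once the reduction is set up correctly. The only point that requires a moment's care is recognizing that the hypotheses are exactly what is needed to conclude both $3\mid(a+b)$ (killing the $(a+b)^2$ term modulo $9$) and $ab\equiv_3 2$ (fixing the residue of the surviving $-3ab$ term); getting the residue right in the final step relies on tracking the computation modulo $9$ rather than merely modulo $3$. An alternative, equally routine, approach would be to simply check all admissible residue pairs directly, but casting it through the identity $a^2-ab+b^2=(a+b)^2-3ab$ makes the divisibility-by-$9$ transparent and avoids case analysis.
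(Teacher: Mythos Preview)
Your proof is correct and follows essentially the same approach as the paper: both extract $a+b\equiv_3 0$ and $ab\equiv_3 2$ from the hypotheses, invoke the identity $a^2-ab+b^2=(a+b)^2-3ab$, and then conclude by noting $(a+b)^2\equiv_9 0$ and $-3ab\equiv_9 3$. The only cosmetic difference is that the paper writes $3ab\equiv_9 6$ before subtracting, while you compute $-3ab\equiv_9 3$ directly.
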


\begin{proof}
Since $a\not\equiv_3 0$, $b\not\equiv_3 0$, and $a\not\equiv_3 b$, we have either $a\equiv_3 1$ and $b\equiv_3 2$, or $a\equiv_3 2$ and $b\equiv_3 1$. Regardless, we then have
\[
a+b\equiv_3 0
\quad
\mbox{ and }
\quad
ab \equiv_3 2.
\]
The above implies that
\[
(a+b)^2\equiv_9 0
\quad
\mbox{ and }
\quad
3ab \equiv_9 6.
\]
Indeed, if $a+b=3\ell$ for some integer $\ell$, then $(a+b)^2=9\ell^2$; and similarly, if $ab=2+3\ell$, then $3ab = 6+9\ell$. Hence, we get
\[
a^2 - ab +b^2 = a^2 + 2ab + b^2 - 3ab = (a+b)^2 - 3ab \equiv_9 0 - 6 \equiv_9 3,
\]
as desired.
\end{proof}

\begin{proposition}
    Let $x\in\Tw$. If $||x|| =1$, then $x=(-\omega)^t$, for $t \in \Z_6$.
    \label{prop:phases}
\end{proposition}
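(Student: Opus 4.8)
The plan is to exploit the structure of $\Tw$ as a localization of the Eisenstein integers $\Z[\omega]$ and to reduce everything to the integer equation $a^2 - ab + b^2 = 9^k$, for which \cref{lem:phases} is tailor-made. First I would write any $x \in \Tw$ in the form $x = (a + b\omega)/3^k$ with $a,b \in \Z$ and $k \in \N$, which is possible since inverting $3$ is the only denominator allowed, and a common denominator of a finite sum is a power of $3$. Recalling that $\omega + \omega^2 = -1$ and $|\omega| = 1$, the squared modulus of an Eisenstein integer is the field norm $\|a + b\omega\|^2 = a^2 - ab + b^2$. Hence the hypothesis $\|x\| = 1$ becomes the single Diophantine condition
\[
a^2 - ab + b^2 = 9^k.
\]

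Next I would run a descent on $k$. Suppose $k \geq 1$, so $9 \mid a^2 - ab + b^2$, and in particular $3 \mid a^2 - ab + b^2$. A quick check of residues modulo $3$ shows that $a^2 - ab + b^2 \equiv 0 \pmod 3$ forces either $a \equiv b \equiv 0 \pmod 3$, or else $a \not\equiv_3 0$, $b \not\equiv_3 0$, and $a \not\equiv_3 b$. The second alternative is exactly the hypothesis of \cref{lem:phases}, which yields $a^2 - ab + b^2 \equiv_9 3$, contradicting $9 \mid a^2 - ab + b^2$. Therefore $3 \mid a$ and $3 \mid b$, so writing $a = 3a'$ and $b = 3b'$ gives $x = (a' + b'\omega)/3^{k-1}$ with $(a')^2 - a'b' + (b')^2 = 9^{k-1}$. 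Iterating this reduction, I may assume $k = 0$.

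With $k = 0$ the condition is $a^2 - ab + b^2 = 1$. Completing the square as $a^2 - ab + b^2 = (a - b/2)^2 + \tfrac34 b^2 \geq \tfrac34 b^2$ bounds $|b| \leq 1$, and symmetrically $|a| \leq 1$, so a finite check leaves exactly the six pairs $(a,b) \in \{(1,0),(-1,0),(0,1),(0,-1),(1,1),(-1,-1)\}$. Each of the corresponding values $1,\ -1,\ \omega,\ -\omega,\ 1+\omega = -\omega^2,\ -1-\omega = \omega^2$ is one of the sixth roots of unity $(-\omega)^t$ with $t \in \Z_6$, which finishes the proof.

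The main obstacle is the descent step: one must be certain that a modulus-$1$ element cannot hide a nontrivial power of $3$ in its denominator. This is precisely where \cref{lem:phases} does the work, ruling out the dangerous residue class in which $a^2 - ab + b^2$ is divisible by $3$ but not by $9$; without it, the reduction $3 \mid a$, $3 \mid b$ would not be justified and the induction would break down. Everything else is routine modular arithmetic together with a bounded search over small integers.
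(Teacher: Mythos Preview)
Your argument is correct and follows essentially the same route as the paper: write $x=(a+b\omega)/3^k$, reduce to $a^2-ab+b^2=9^k$, use \cref{lem:phases} to rule out the ``mixed'' residue class and force $3\mid a,b$ (the paper phrases this as minimality of $k$ rather than descent, but it is the same argument), and then enumerate the six solutions when $k=0$. One cosmetic slip: your exponent labels at the end are swapped, since $(-\omega)^2=\omega^2$ and $(-\omega)^5=-\omega^2$, so $(1,1)\leadsto -\omega^2=(-\omega)^5$ and $(-1,-1)\leadsto \omega^2=(-\omega)^2$; this does not affect the conclusion.
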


\begin{proof}
Let $x\in \Tw$. Then $x$ can be written as $x = (a+\omega b)/3^k$, for some integers $a$, $b$, and $k$, with $k\geq 0$ and $k$ minimal. Since $||x||=1$, we then have $x^\dagger x = 1$, and thus, 
\begin{equation}
\label{eq:phases1}
(a+\omega b)^\dagger (a+\omega b) = 3^{2k}.
\end{equation}
Since $\omega^\dagger = \omega^2 = -1-\omega$, we can rewrite \eqref{eq:phases1} as
\begin{equation}
\label{eq:phases2}
3^{2k} = (a+\omega b)^\dagger (a+\omega b) = a^2 + \omega ab + \omega^\dagger ab + b^2 = a^2 -ab + b^2.
\end{equation}

If $k=0$, then \eqref{eq:phases2} becomes $a^2 - ab + b^2 =1$, whose only solutions are $a=\pm 1$ and $b=0$, $a=0$ and $b=\pm 1$, and $a = \pm 1$ and $b = \mp 1$. These six solutions correspond exactly to $x=(-\omega)^t$, for $t \in \Z_6$.

To complete the proof, we now show that there are no solutions to \eqref{eq:phases2} with $k\geq 1$. It suffices to show that $k\geq 1$ implies $a\not\equiv_3 0$, $b\not\equiv_3 0$, and $a\not\equiv_3 b$. Indeed, if this is the case, then \cref{lem:phases} shows that $a^2 -ab +b^2$ is not congruent to 0 modulo 9, which contradicts \eqref{eq:phases2}, since $3^{2k}\equiv_9 0$ when $k\geq 1$. If $k\geq 1$, then $a\not\equiv_3 0$ or $b\not\equiv_3 0$, since both $a$ and $b$ being divisible by $3$ would contradict the minimality of $k$. But if $a\equiv_3 0$, then \eqref{eq:phases2} implies that $b^2 \equiv_3 0$, so that $b\equiv_3 0$. Similarly, if $b\equiv_3 0$, then $a\equiv_3 0$. Therefore, it must be the case that both $a\not\equiv_3 0$ and $b\not\equiv_3 0$. Finally, since neither $a$ nor $b$ is congruent to 0 modulo 3, $a\equiv_3 b$ would imply
\[
a^2 -ab + b^2 \equiv_3 1 -1 + 1 \equiv_3 1,
\]
which contradicts \eqref{eq:phases2}. Thus we must also have $a\not\equiv_3 b$.
\end{proof}

\begin{corollary}
    Let $U, V \in \Clifford_n$ and $U, V$ act identically on $\Pauli_n$. Then $V = (-\omega)^t U$, $t \in \Z_6$.
    \label{cor:phases}
\end{corollary}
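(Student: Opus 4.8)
The plan is to combine \cref{cor:Clifford-stabilisers} with \cref{prop:phases}. Since $U$ and $V$ act identically on $\Pauli_n$, i.e.\ $U\bullet P = V\bullet P$ for all $P\in\Pauli_n$, \cref{cor:Clifford-stabilisers} applies directly (with $C_1 = V$ and $C_2 = U$) and yields $V = \lambda U$ for some $\lambda \in \C$. The entire task then reduces to pinning down $\lambda$: the goal is to show that $\lambda$ is a unit-modulus element of the ring $\Tw$, after which \cref{prop:phases} forces $\lambda = (-\omega)^t$.

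First I would check that $||\lambda|| = 1$. Both $U$ and $V$ are unitary, so from $V = \lambda U$ we obtain $I = VV^\dagger = \lambda\lambda^\dagger\, UU^\dagger = ||\lambda||^2\, I$, whence $||\lambda|| = 1$. This part is immediate and requires nothing beyond unitarity.

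The key step is to argue that $\lambda \in \Tw$, and this is where I would invoke the standing restriction (fixed in \cref{subsec:unitaries}) that we only consider Clifford operators whose matrices have entries in $\Tw$. Since $U$ is unitary, $U^{-1} = U^\dagger$, whose entries are the complex conjugates of those of $U$. The ring $\Tw$ is closed under complex conjugation, because $\omega^\dagger = \omega^2 = -1-\omega \in \Tw$ while $1/3$ is fixed by conjugation; hence $U^\dagger$ again has all entries in $\Tw$. Then $\lambda I = U^\dagger V$ is a product of two matrices over $\Tw$, so every entry of $\lambda I$ lies in $\Tw$, and in particular $\lambda \in \Tw$.

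Having established $\lambda \in \Tw$ with $||\lambda|| = 1$, \cref{prop:phases} gives $\lambda = (-\omega)^t$ for some $t \in \Z_6$, and therefore $V = (-\omega)^t U$, as desired. I expect the membership $\lambda \in \Tw$ to be the only genuinely delicate point: it is precisely where the restriction to $\Tw$-representable Cliffords is essential, since without it \cref{cor:Clifford-stabilisers} alone would allow $\lambda$ to be an arbitrary unit-modulus complex number rather than one of the six sixth roots of unity.
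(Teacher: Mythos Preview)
Your proof is correct and follows essentially the same route as the paper's: reduce to $V=\lambda U$ for a scalar $\lambda$, then show $\lambda$ is a unit-modulus element of $\Tw$ and invoke \cref{prop:phases}. The only cosmetic differences are that the paper re-derives the scalar conclusion directly from \cref{prop:scalars} rather than citing \cref{cor:Clifford-stabilisers}, and that you are more explicit about why $U^\dagger$ has entries in $\Tw$ (closure under conjugation), which the paper takes for granted.
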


\begin{proof}
For all $P \in \Pauli_n$, since $UPU^\dagger = VPV^\dagger$, $(V^\dagger U)P(U^\dagger V) = (V^\dagger V)P(V^\dagger V) = P$. By \cref{prop:scalars}, $V^\dagger U = \lambda$, for some $\lambda \in \C$. Recall that $\Clifford_n = \langle H, S, \CZ \rangle$. By direct computation, $H$, $S$, and $\CZ$ are unitaries over $\Z[1/3, \omega]$. This implies that $U,V \in \Clifford_n$ are unitaries over $\Z[1/3, \omega]$. Hence $\lambda \in \Z[1/3, \omega]$ and $\lambda \lambda^\dagger = 1$. By \cref{prop:phases}, $\lambda \in \{(-\omega)^t$;\; $t \in \Z_6\}$. Based on the relation (C4) in \cref{fig:rewriterules}, $(HS^2)^3 = -\omega$. Hence, $\Clifford_n$ must contain all powers of $-\omega$. This completes the proof.
\end{proof}

\subsection{Proofs of Lemmas \ref{lem:-Z-normal-reduction}, \ref{lem:-X-normal-reduction}, \ref{lem:-X-normal-propagate-Z}, \ref{lem:-normal-circuit-compositon}, and Proposition \ref{prop:-normal-form}}
\label{subsec:existence-and-uniqueness}
Following the conventions in \cref{subsec:action}, we use $C \bullet P$ to denote the conjugation of a Clifford $C \in \Clifford_n$ on a Pauli $P \in \Pauli_n$. Let $Q=C \bullet P$. We say that in $C$, $Q$ is the \emph{image} of $P$ and $P$ is the \emph{preimage} of $Q$. Since $C$ is unitary, we can also take the inverse of the circuit to get $C^\dagger \bullet Q = C^\dagger Q C = P$. 

\begin{definition}
    Let $P, Q \in \Pauli_n$ and $PQ = \omega QP$. Then we say $P$ \emph{$\omega$-anticommutes} with $Q$.
   \label{def:omega-anticommute}
\end{definition}

\begin{T1}
  \lemznormal
\end{T1} 
\begin{proof}
By \cref{def:Pauli-groups}, an $n$-qutrit Pauli operator $P$ has the form $P = \omega^c \bigotimes_{j=1}^{n}P_j$, $P_j = X^{a_j}Z^{b_j}$, for $a_j, b_j, c \in \Z_3$. Since $P \notin \{I, \omega I, \omega^2 I\}$, there exists $m \in \N$ and $1\leq m \leq n$ such that $P_m$ is not an identity. Let $m$ be the largest such index, then

\begin{equation}
    P = (\omega^cP_1) \otimes \cdots \otimes P_m \otimes I\otimes \cdots \otimes I = (\omega^cX^{a_1}Z^{b_1})\otimes \cdots \otimes (X^{a_m}Z^{b_m}) \otimes I \otimes \cdots \otimes I.
    \label{eq:P-decomposition}
\end{equation}

Next, we show how to construct a unique $Z$-normal circuit $L$ based on \eqref{eq:P-decomposition} and the unique normal box actions specified in the second column of \cref{fig:normal-box-auto}. Starting from qutrit wire $m$, $A_{a_m,b_m}$ is uniquely determined by $P_m = X^{a_m}Z^{b_m}$. As a result, this $A$ box maps $X^{a_m}Z^{b_m}$ to $Z$. Consequently, $B_{a_{m-1},b_{m-1}}$ is uniquely determined by $P_{m-1} \otimes Z =  X^{a_{m-1}}Z^{b_{m-1}} \otimes Z$. Then, this $B$ box maps $P_{m-1} \otimes Z$ to $Z \otimes I$. Continue this process by concatenating $B$ normal boxes upward, until reaching the top two qutrit wires, where $B_{a_1,b_1}$ is uniquely determined by $\omega^cP_1\otimes Z=\omega^c X^{a_{1}}Z^{b_{1}}\otimes Z$. As a result, $\omega^cP_1\otimes Z$ is mapped to $\omega^c Z \otimes I$. Finally, $C_{c}$ is uniquely determined by $\omega^c Z$, and it maps $\omega^c Z$ to $Z$. This gives us a $Z$-normal circuit $L$ that is displayed below. It is uniquely determined by $P$ and $L\bullet P = Z \otimes I \otimes \cdots \otimes I$.

\[
\scalebox{.9}{\tikzfig{figures/NormalForm/ZNormalReduction}}\qedhere
\] 
\end{proof}

Given a $Z$-normal circuit $L$, we can read off its preimage of $Z \otimes I \otimes \cdots \otimes I$ by reading the indices of each normal box in $L$ from right to left.

\begin{T2}
  \lemxnormal
\end{T2}   

\begin{proof}
By \cref{def:Pauli-groups}, an $n$-qutrit Pauli operator $Q$ has the form $Q = \omega^c \bigotimes_{j=1}^{n}Q_j$, $Q_j = X^{a_j}Z^{b_j}$, for $a_j, b_j, c \in \Z_3$. Since $(Z \otimes I \otimes \cdots \otimes I)Q = \omega Q (Z \otimes I \otimes \cdots \otimes I)$, $Q_1 =XZ^i$, for some $i \in \Z_3$. Then

\begin{equation}
    Q= Q_1 \otimes Q_2 \otimes \cdots \otimes (\omega^cQ_n)=(XZ^i) \otimes (X^{a_2}Z^{b_2}) \otimes \cdots \otimes (\omega^cX^{a_n}Z^{b_n}).
    \label{eq:Q-decomposition}
\end{equation}

Next, we show how to construct a unique $X$-normal circuit $M$ based on \eqref{eq:Q-decomposition} and the unique normal box actions specified in the second column of \cref{fig:normal-box-auto}. Starting from qutrit wires $1$ and $2$, $D_{a_2,b_2}$ is uniquely determined by $Q_1 \otimes Q_2 = (XZ^i) \otimes (X^{a_2}Z^{b_2})$. As a result, this $D$ box maps $(XZ^i) \otimes Q_2$ to $I \otimes (XZ^i)$. Consequently, $D_{a_{3},b_{3}}$ is uniquely determined by $(XZ^i) \otimes Q_3 =  (XZ^i) \otimes (X^{a_3}Z^{b_3})$. Again, this $D$ box maps $(XZ^i) \otimes Q_3$ to $I \otimes (XZ^i)$. Continue this process by concatenating $D$ normal boxes downward, until reaching the bottom two qutrit wires, where $D_{a_n,b_n}$ is uniquely determined by $(XZ^i) \otimes \omega^cQ_n =  (XZ^i) \otimes (\omega^cX^{a_n}Z^{b_n})$. Thus, $(XZ^i) \otimes \omega^cQ_n$ is mapped to $I \otimes (\omega^c XZ^i)$ by this last $D$ box.  Then, $E_i$ is uniquely determined by $XZ^i$, and it maps $\omega^cXZ^i$ to $\omega^c X$. Finally, $F_c$ is uniquely determined by $\omega^c X$, and it maps $\omega^c X$ to $X$. This gives us an $X$-normal circuit $M$ that is displayed below. It is uniquely determined by $Q$ and $M\bullet Q = I \otimes \cdots \otimes I \otimes X$. 

\[
\scalebox{0.9}{\tikzfig{figures/NormalForm/XNormalReduction}}\qedhere
\]
\end{proof}

Given an $X$-normal circuit $M$, we can read off its preimage of $I \otimes \cdots \otimes I \otimes X$ by reading the indices of each normal box in $M$ from right to left.

\begin{T3}
  \lemxnormalpropagateZ
\end{T3}   

\begin{proof}
This follows from the additional actions of $D$, $E$, and $F$ boxes on certain Pauli operators (see the third column of \cref{fig:normal-box-auto}). In particular, every $D$ box maps $Z \otimes I$ to $I \otimes Z$, every $E$ and $F$ box maps $Z$ to $Z$. It is visualized below.
    \[
    \scalebox{0.9}{\tikzfig{figures/NormalForm/XNormalZCommute}}\qedhere
    \]
\end{proof}

Intuitively, an $X$-normal circuit $M$ propagates a Pauli $Z$ from the top wire to the bottom wire.

\begin{T4}
  \lemnormalcircuitcompo
\end{T4}  
\begin{proof}
    Since $PQ = \omega QP$, $P \notin \{I, \omega I, \omega^2 I\}$. By \cref{lem:-Z-normal-reduction}, there exists a unique $Z$-normal circuit $L$ such that $L \bullet P = Z \otimes I \otimes \cdots \otimes I$. Moreover, $(L\bullet P) (L\bullet Q) = L\bullet (PQ) = L\bullet (\omega QP) = \omega (L\bullet Q)  (L\bullet P)$. Hence $(Z \otimes I \otimes \cdots \otimes I) (L\bullet Q) = \omega (L\bullet Q)  (Z \otimes I \otimes \cdots \otimes I)$. By \cref{lem:-X-normal-reduction}, there exists a unique $X$-normal circuit $M$ such that $M \bullet (L\bullet Q) = I \otimes \cdots \otimes I \otimes X$. By \cref{lem:-X-normal-propagate-Z}, $M\bullet(L\bullet P) = I \otimes \cdots \otimes I \otimes Z$. It follows that
    \begin{equation}
        ML\bullet P = I \otimes \cdots \otimes I \otimes Z, \qquad  ML\bullet Q = I \otimes \cdots \otimes I \otimes X.
        \label{eq:condition}
    \end{equation}
    
    This proves the existence of the desired $Z$-normal circuit $L$ and $X$-normal circuit $M$. Now suppose towards contradiction that there exist another $Z$-normal circuit $L'$ and $X$-normal circuit $M'$ satisfying  \eqref{eq:condition}. Since $M'\bullet (L'\bullet P) = I \otimes \cdots \otimes I \otimes Z$, \cref{lem:-X-normal-propagate-Z} implies that $L'\bullet P = Z\otimes I \cdots \otimes I$. By the uniqueness of the $Z$-normal circuit in \cref{lem:-Z-normal-reduction}, $L=L'$. Then $M'\bullet(L'\bullet Q) = M'\bullet(L\bullet Q)=I \otimes \cdots \otimes I \otimes X$. By the uniqueness of the $X$-normal circuit in \cref{lem:-X-normal-reduction}, $M'=M$.
\end{proof}

\begin{T5}
\propnormalform
\end{T5}

\begin{proof}
We proceed by induction on $n$. When $n=0$, the Pauli operators are scalars of the form $\omega^t$, $t \in \Z_3$. In this case, $\phi$ is the identity. Setting $C = 1$, uniqueness up to scalar follows from the fact that when $n=0$, the Clifford operators are scalars of the form $(-\omega)^t, t\in \Z_6$.

Now suppose that our claim is true for $n-1$ and consider the case of $n$. First we prove existence. Since $\phi$ is bijective, there exist $P, Q \in \Pauli_n$ such that $\phi(P) = I \otimes \cdots \otimes I \otimes Z$ and $\phi(Q) = I \otimes \cdots \otimes I \otimes X$. That is, $P = \phi^{-1}(I \otimes \cdots \otimes I \otimes Z)$ and $Q =\phi^{-1}(I \otimes \cdots \otimes I \otimes X)$. Then $PQ =\phi^{-1}(I \otimes \ldots \otimes I \otimes ZX)$. Since $I \otimes \ldots \otimes I \otimes Z$ $\omega$-anticommutes with $I \otimes \ldots \otimes I \otimes X$, $PQ=\omega QP$. By \cref{lem:-normal-circuit-compositon}, there exist a unique $X$-circuit $M$ and a unique $Z$-circuit $L$ such that $ ML \bullet P = I \otimes \ldots \otimes I \otimes Z = \phi(P)$ and $ML \bullet Q = I \otimes \ldots \otimes I \otimes X = \phi(Q)$. Let $\phi ': \Pauli_{n} \rightarrow \Pauli_{n}$ be a new automorphism and $\phi'$ fixes scalars.
\begin{equation}
    \phi '(U) = \phi\left((ML)^{-1} \bullet U\right),
    \label{eq:new-auto}
\end{equation}

Then $I \otimes \cdots \otimes I \otimes Z$ and $I \otimes \cdots \otimes I \otimes X$ are fixed points of $\phi'$, since for $T\in \{X,Z\}$,

\begin{align*}
\phi '(I \otimes \dots \otimes I \otimes T) &= \phi\left((ML)^{-1} \bullet (I \otimes \dots \otimes I \otimes T)\right) \\
&= \phi\left((ML)^{-1} \bullet (ML) \bullet T\right) = \phi(T) = I \otimes \dots \otimes I \otimes T.
\end{align*}

For any $R \in \Pauli_{n-1}$, since $R \otimes I$ commutes with $I \otimes \dots \otimes I \otimes Z$ and $I \otimes \dots \otimes I \otimes X$, $\phi '(R \otimes I)$ commutes with $\phi '(I \otimes \dots \otimes I \otimes Z)= I \otimes \dots \otimes I \otimes Z$ and $\phi '(I \otimes \dots \otimes I \otimes X)= I \otimes \dots \otimes I \otimes X$. This implies that $\phi '(R\otimes I) = S \otimes I$ for some $S \in \Pauli_{n-1}$. Then there exists an automorphism $\phi '': \Pauli_{n-1} \rightarrow \Pauli_{n-1}$ such that $\phi''(R) = S$. Since $\phi '$ fixes $I \otimes \dots \otimes I \otimes Z$ and $I \otimes \dots \otimes I \otimes X$, 

\begin{equation}
    \phi ' = \phi '' \otimes I.
    \label{eq:equality}
\end{equation}

By the induction hypothesis, for all $R \in \pauli{n-1}$, there exists $C' \in \Clifford_{n-1}$ in normal form such that 

\begin{equation}
    C' \bullet R = \phi ''(R),
    \label{eq:IH}
\end{equation}

Then $C = (C'\otimes I)  ML$ is in normal form. Next, we show that $C \bullet U = \phi(U)$ for all $U \in \Pauli_n$. By  \eqref{eq:new-auto},

\begin{equation}
    (ML)^{-1}\bullet U = \phi^{-1}  \phi'(U).
    \label{eq:map1}
\end{equation}

By taking the inverse of both sides of \eqref{eq:map1}, we have

\begin{equation}
    ML = (\phi^{-1}  \phi')^{-1} = \phi'^{-1}   \phi.
    \label{eq:map2}
\end{equation}

It follows that 
\[
 C \bullet U =((C' \otimes I)  ML) \bullet U \xlongequal[]{\eqref{eq:map2}} ((C' \otimes I)\bullet\left(\phi'^{-1}\left(\phi(U)\right)\right)\xlongequal[]{\eqref{eq:equality}} (C' \otimes I)\bullet (\phi''^{-1}\otimes I)\left(\phi(U)\right)\xlongequal[]{\eqref{eq:IH}} \phi(U).
\]

To prove uniqueness, suppose towards contradiction that $D \in \Clifford_n$ is another Clifford circuit in normal form such that $D \bullet U = \phi(U)$ for all $U \in \pauli{n}$.  By induction, $D = (D' \otimes I)  M'L'$, where $M'$ is an $X$-normal circuit, $L'$ is a $Z$-normal circuit, and $D'$ is a normal Clifford circuit on $n-1$ qutrits. Since $D \bullet P = \phi(P) = I \otimes \dots \otimes I \otimes Z$, $((D' \otimes I)  (M'L'))\bullet P = I \otimes \dots \otimes I \otimes Z$. Then

\[
  (M'L') \bullet P= (D' \otimes I)^{-1}\bullet(I \otimes \dots \otimes I \otimes Z) = (D'^{-1} \otimes I) \bullet(I \otimes \dots \otimes I \otimes Z)=I \otimes \dots \otimes I \otimes Z.
\]

Similarly, $D\bullet Q = I \otimes \dots \otimes I \otimes X$ implies that $(M'L') \bullet Q = I \otimes \dots \otimes I \otimes X$. By the uniqueness of the $X$- and $Z$-normal circuits in \cref{lem:-normal-circuit-compositon}, $M'=M$ and $L'=L$. By induction hypothesis, $C'=D'$, up to $(-\omega)^t, t\in \Z_6$. Thus, the same is true for $C$ and $D$. This completes the proof.
\end{proof}

\subsection{Proof of Corollary \ref{cor:cardinality}}
\label{subsec:cardinality}

\begin{T6}
  \corcardinality
\end{T6}   

\begin{proof}
By the uniqueness of the normal form and the correspondence established in \cref{prop:-normal-form}, it is sufficient to count the total number of distinct normal forms. Up to scalar, consider the normal form of an arbitrary $n$-qutrit Clifford operator:

 \[
    \tikzfig{figures/NormalForm/cardinality2}
    \]

By induction, our problem is reduced to counting different ways of constructing $L^{(n)}$ and $M^{(n)}$.
\begin{itemize}
    \item For $L^{(n)}$, according to \cref{fig:Z-and-X-Normal-Boxes}, there are $8$ choices for an $A$ box and $3$ choices for a $C$ box. For $B$ boxes, we can start concatenating them upward from any qutrit wire. According to \cref{fig:Z-normal}, we proceed by cases.
    \begin{description}
        \item[In one extreme case: ] The ladder of $B$ boxes starts from the bottom qutrit wire, as shown below. There are $n-1$ $B$ boxes in $L^{(n)}$. According to \cref{fig:Z-and-X-Normal-Boxes}, there are $9$ choices for a $B$ box. Since each choice of a normal box is independent of each other, there are $8 \cdot 3 \cdot 9^{(n-1)} = 24\cdot 9^{(n-1)}$ distinct $L^{(n)}$ when it expands over all $n$ qutrits.
        \[
            \tikzfig{figures/NormalForm/ZNormalExtreme}
        \]
        \item[In the other extreme case: ]The ladder of $B$ boxes starts from the top qutrit wire, as shown below. There is no $B$ box in $L^{(n)}$. Since each choice of a normal box is independent of each other, there are $8 \cdot 3  = 24$ distinct $L^{(n)}$ when it expands over $1$ qutrit.
        \[
            \tikzfig{figures/NormalForm/ZNormalExtreme3}
        \]
        \item[In all other remaining cases: ] The ladder of $B$ boxes starts from qutrit wire $k$, $1 < k < n$. In the illustration below, there are $(k-1)$ $B$ boxes in $L^{(n)}$. Reasoning analogously as before, there are $8 \cdot 3 \cdot 9^{(k-1)} = 24 \cdot 9^{(k-1)}$ distinct $L^{(n)}$ when it expands over $k$ qutrits.
        \[
            \tikzfig{figures/NormalForm/ZNormalExtreme2}
        \]
    \end{description}

    Considering all cases of concatenating B boxes in $L^{(n)}$, the number of distinct $L^{(n)}$ is
    \[
    \sum_{k = 0}^{n-1}(24 \cdot 9^{k}) = 24 \cdot \sum_{k = 0}^{n-1}9^{k} = 3\cdot(9^n -1).
    \]
    \item For $M^{(n)}$, we consider all possible ways of concatenating $D$, $E$, and $F$ boxes. According to \cref{fig:Z-and-X-Normal-Boxes}, there are $3$ choices for $E$ and $F$ boxes respectively. For each $D$ box, there are $9$ choices. According to \cref{fig:X-normal}, there are $n-1$ D boxes in $M^{(n)}$. Since each choice of a normal box is independent of each other, there are $3 \cdot 3 \cdot 9^{(n-1)} = 9 \cdot 9^{(n-1)} = 9^n$ distinct $M^{(n)}$.
\end{itemize}
Since the choice of $L^{(n)}$ and $M^{(n)}$ is independent of each other, there are $(3\cdot(9^n -1)) \cdot 9^n = 3\cdot 9^n (9^n - 1)$ distinct $M^{(n)}L^{(n)}$. By \cref{cor:cardinality}, there are exactly $6$ scalars in $\Clifford_n$. According to the normal form outlined in \cref{fig:macro-normal-circuit}, we can calculate the total number of distinct $N^{(n)}$:
\[
6\cdot \prod_{k=1}^n 3(9^k - 1)9^k.
\]
Since they are in one-to-one correspondence with the elements of the $n$-qutrit Clifford group, this completes the proof.
\end{proof}
\section{A Complete Set of Box Relations}
\label{app:relations}

In \cref{sec:normal-form,sec:normalization-with-relations}, we define a normal form for $\Clifford_n$ and show how to normalize an arbitrary $n$-qutrit Clifford operator. Then, our problem of finding a complete set of Clifford relations reduces to finding a complete set of rewrite rules such that we can push all possible dirty gates through the normal boxes of \cref{fig:Z-and-X-Normal-Boxes}. Since we only need to consider the cases where multi-qutrit (derived) generators acting on adjacent qutrits, each box relation involves at most three qutrits. Therefore, the $2$ zero-qutrit phase relations in \cref{subsec:zero-qutrit-relations}, the $34$ single-qutrit box relations in \cref{subsec:single-qutrit-relations}, the $173$ two-qutrit box relations in \cref{subsec:two-qutrit-relations}, and the $171$ three-qutrit box relations in \cref{subsec:three-qutrit-relations} together form a complete set of box relations for $\Clifford_n$.

\subsection{Zero-Qutrit Box Relations}
\label{subsec:zero-qutrit-relations}
For $n=0$, all Pauli operators are phases. By \cref{cor:phases}, the Clifford operators are phases of the form $(-\omega)^t$, $t \in \Z_6$. Let $-1 = (-\omega)^3$ and $\omega = (-\omega)^4$, then \eqref{eq:scalar-relations} gives the complete set of box relations of $\Clifford_0$.

\begin{equation}
    (-1)^2 = 1, \qquad \omega^3 = 1.
    \label{eq:scalar-relations}
\end{equation}

In \cref{subsec:single-qutrit-relations,subsec:two-qutrit-relations,subsec:three-qutrit-relations}, we express box relations parametrically. This reveals some redundancy in the box relations, allowing for more efficient relation reduction in the supplement~\cite{Supplement2024}. This is also an effective approach for streamlining the derivation of multiple box relations at once.

\subsection{Single-Qutrit Box Relations}
\label{subsec:single-qutrit-relations}

\begin{figure}[H]
    \centering
    \[
    \scalebox{0.9}{\tikzfig{figures/BoxRelations/H.A}}
    \]
    \caption{Pushing $H$ through an $A$ box.}
    \label{fig:H-A-relations}
\end{figure}

\begin{figure}[H]
    \centering
    \[
    \scalebox{0.9}{\tikzfig{figures/BoxRelations/S.A}}
    \]
    \caption{Pushing $S$ through an $A$ box, $b \in \Z_3$.}
    \label{fig:S-A-relations}
\end{figure}

\begin{figure}[H]
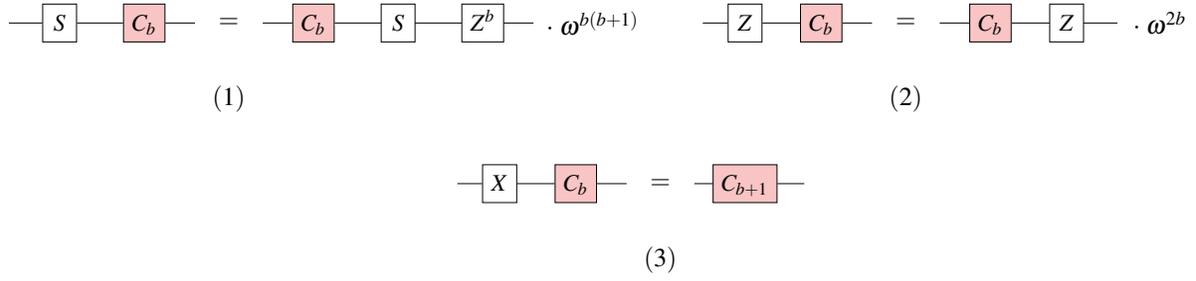

    \centering
    \[
    \scalebox{0.9}{\tikzfig{figures/BoxRelations/SXZ.CNew}}
    \]
    \caption{Pushing $S$, $Z$, or $X$ through a $C$ box, $b \in \Z_3$.}
    \label{fig:S-Z-X-C-relations}
\end{figure}

\begin{figure}[H]
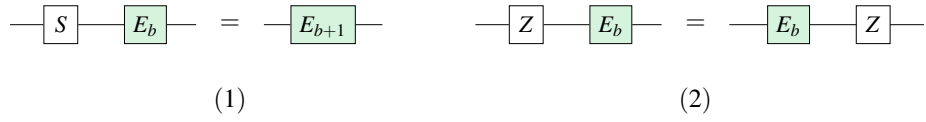

    \centering
    \[
    \scalebox{0.9}{\tikzfig{figures/BoxRelations/SZ.ENew}}
    \]
    \caption{Pushing $S$ or $Z$ through an $E$ box, $b\in \Z_3$.}
    \label{fig:S-Z-E-relations}
\end{figure}

\begin{figure}[H]
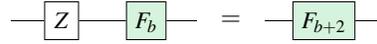

    \centering
    \[
    \scalebox{0.9}{\tikzfig{figures/BoxRelations/Z.FNew}}
    \]
    \caption{Pushing $Z$ through an $F$ box, $b\in \Z_3$.}
    \label{fig:Z-F-relations}
\end{figure}

\subsection{Two-Qutrit Box Relations}
\label{subsec:two-qutrit-relations}

\begin{figure}[H]
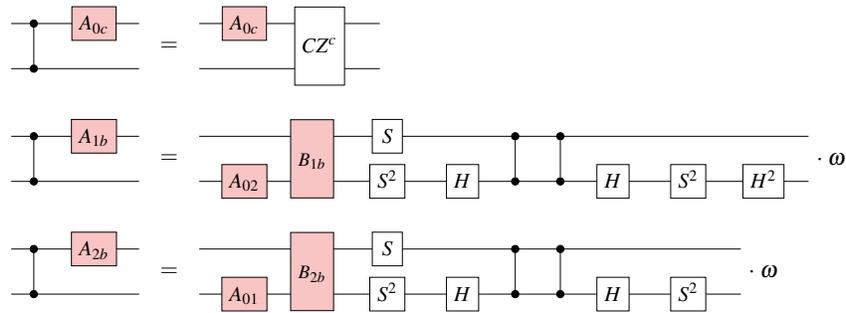

\centering
    \[
    \scalebox{0.8}{\tikzfig{figures/BoxRelations/CZ.ANewNew}}
    \]
    \caption{Pushing $\CZ$ through an $A$ box, $c \in \{1, 2\}$, $b \in \Z_3$.}
    \label{fig:CZ-A-relations}
\end{figure}

\clearpage
\newgeometry{margin=2cm}

\begin{figure}[H]
    \centering
    \[
    \scalebox{0.8}{\tikzfig{figures/BoxRelations/CZ.A01BNew}}
    \]
    \caption{Pushing $\CZ$ through $A_{01}$ and a $B$ box, $b\in \Z_3$.}
    \label{fig:CZ-A01-B-relations}
\end{figure}

\begin{figure}[H]
    \centering
    \[
    \scalebox{0.8}{\tikzfig{figures/BoxRelations/CZ.A02BNew}}
    \]
    \caption{Pushing $\CZ$ through $A_{02}$ and a $B$ box, $b\in \Z_3$.}
    \label{fig:CZ-A02-B-relations}
\end{figure}

\clearpage
\restoregeometry

\clearpage
\newgeometry{margin=2cm}

\begin{figure}[H]
    \centering
    \[
    \scalebox{0.8}{\tikzfig{figures/BoxRelations/CZ.A1bBNew}}
    \]
    \caption{Pushing $\CZ$ through $A_{1b}$ and a $B$ box, $b,d\in \Z_3$.}
    \label{fig:CZ-A1b-B-relations}
\end{figure}

\begin{figure}[H]
    \centering
    \[
    \scalebox{0.8}{\tikzfig{figures/BoxRelations/CZ.A2bBNew}}
    \]
    \caption{Pushing $\CZ$ through $A_{2b}$ and a $B$ box, $b,d\in \Z_3$.}
    \label{fig:CZ-A2b-B-relations}
\end{figure}

\clearpage
\restoregeometry

\begin{figure}[H]
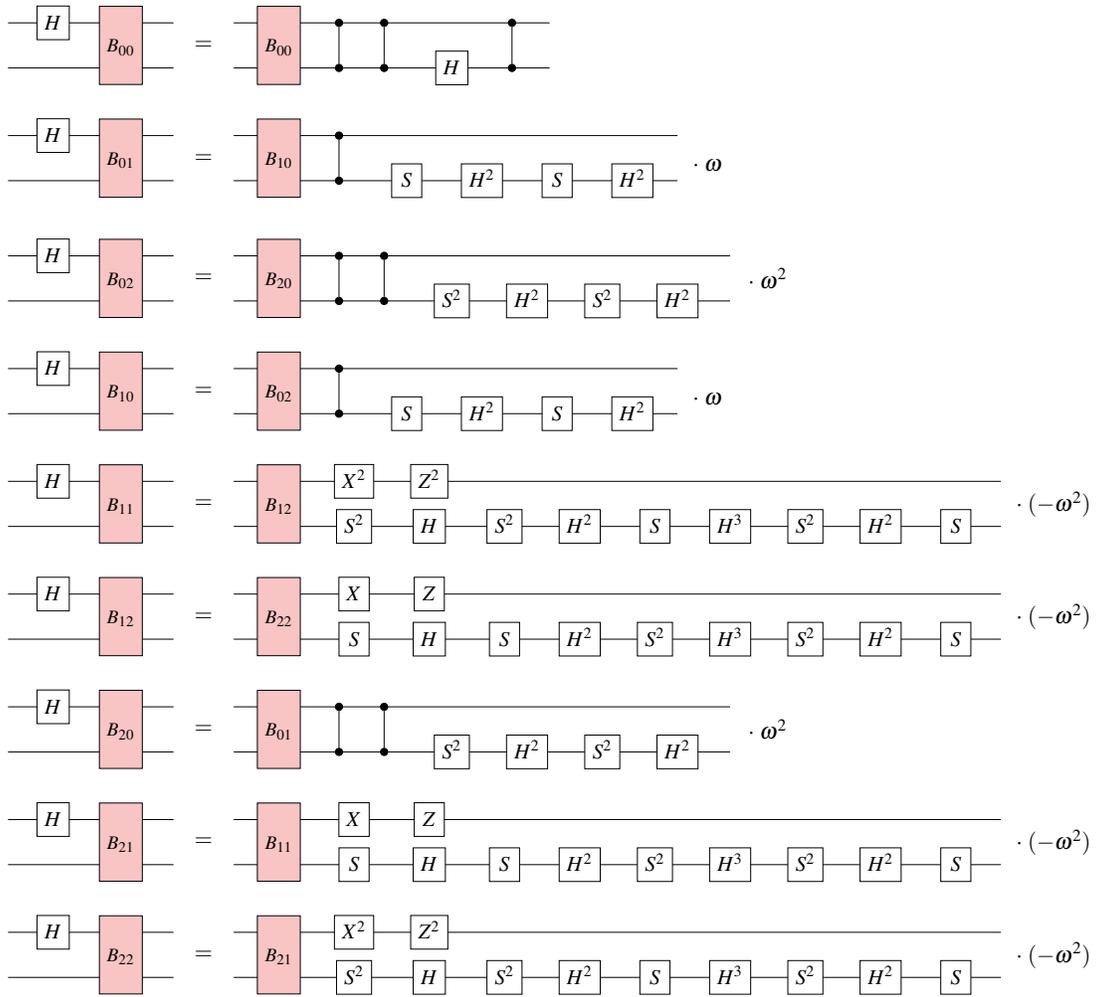

\centering
    \[
    \scalebox{0.8}{\tikzfig{figures/BoxRelations/HxI.BNew}}
    \]
    \caption{Pushing $H\otimes I$ through a $B$ box.}
    \label{fig:HxI-B-relations}
\end{figure}

\begin{figure}[H]
\centering
    \[
    \scalebox{0.8}{\tikzfig{figures/BoxRelations/SxI.BNew}}
    \]
    \caption{Pushing $S \otimes I$ through a $B$ box, $b \in \Z_3$.}
    \label{fig:SxI-B-relations}
\end{figure}

\begin{figure}[H]
\centering
    \[
    \scalebox{0.8}{\tikzfig{figures/BoxRelations/IxS.B}}
    \]
    \caption{Pushing $I \otimes S$ through a $B$ box, $(a,b) \in \Z_3 \times \Z_3 \setminus \{(0,0)\}$.}
    \label{fig:IxS-B-relations}
\end{figure}

\begin{figure}[H]
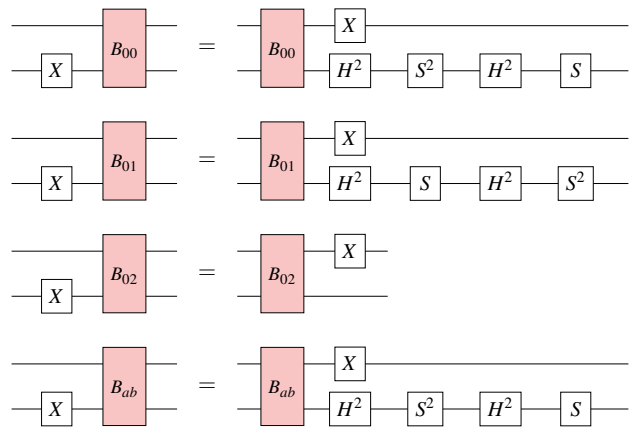

\centering
    \[
    \scalebox{0.8}{\tikzfig{figures/BoxRelations/IxX.B}}
    \]
    \caption{Pushing $I \otimes X$ through a $B$ box, $a \in \{1, 2\}$, $b \in \Z_3$.}
    \label{fig:IxX-B-relations}
\end{figure}

\clearpage
\newgeometry{margin=2cm}

\begin{figure}[H]
\centering
    \[
    \scalebox{0.8}{\tikzfig{figures/BoxRelations/IxZ.B}}
    \]
    \caption{Pushing $I \otimes Z$ through a $B$ box, $(a,b) \in \Z_3 \times \Z_3 \setminus \{(0,0)\}$.}
    \label{fig:IxZ-B-relations}
\end{figure}

\begin{figure}[H]
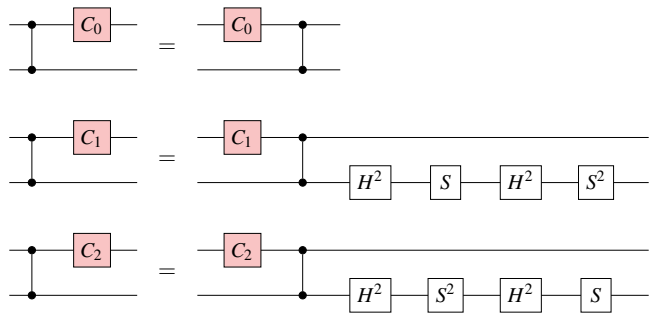

\centering
    \[
    \scalebox{0.8}{\tikzfig{figures/BoxRelations/CZ.C}}
    \]
    \caption{Pushing $\CZ$ through a $C$ box.}
    \label{fig:CZ-C-relations}
\end{figure}

\clearpage
\restoregeometry

\begin{figure}[H]
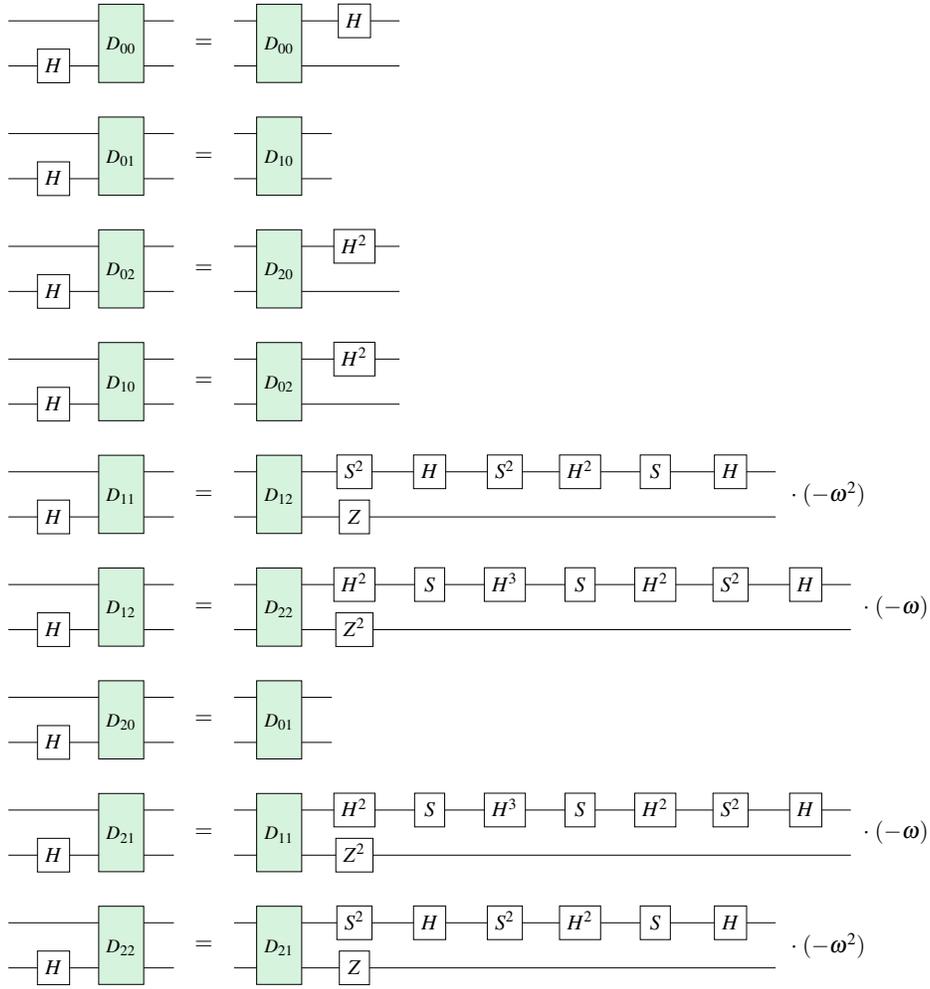

\centering
    \[
    \scalebox{0.8}{\tikzfig{figures/BoxRelations/IxH.D}}
    \]
    \caption{Pushing $I \otimes H$ through a $D$ box.}
    \label{fig:IxH-D-relations}
\end{figure}

\begin{figure}[H]
\centering
    \[
    \scalebox{0.8}{\tikzfig{figures/BoxRelations/IxS.D}}
    \]
    \caption{Pushing $I \otimes S$ through a $D$ box, $b \in \Z_3$.}
    \label{fig:IxS-D-relations}
\end{figure}

\clearpage
\newgeometry{margin=2cm}
\begin{figure}[H]
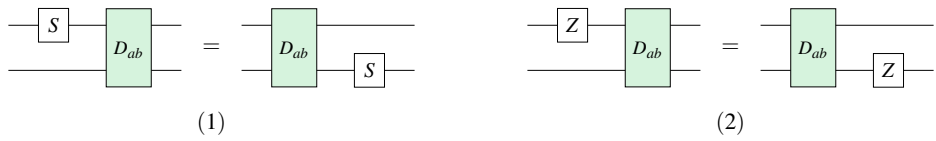

\centering
    \[
    \scalebox{0.8}{\tikzfig{figures/BoxRelations/SZxI.D}}
    \]
    \caption{Pushing $S \otimes I$ or $Z \otimes I$ through a $D$ box, $a, b \in \Z_3$.}
    \label{fig:SZxI-D-relations}
\end{figure}

\begin{figure}[H]
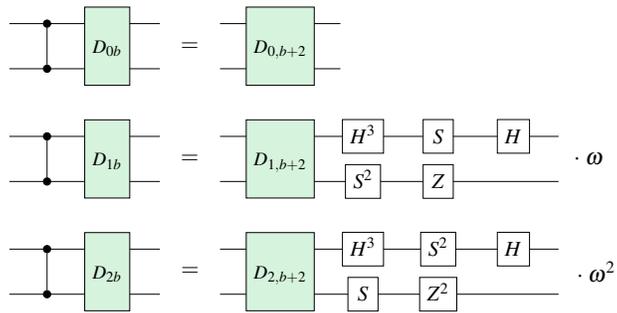

\centering
    \[
    \scalebox{0.8}{\tikzfig{figures/BoxRelations/CZ.D}}
    \]
    \caption{Pushing $\CZ$ through a $D$ box, $b \in \Z_3$.}
    \label{fig:CZ-D-relations}
\end{figure}

\subsection{Three-Qutrit Box Relations}
\label{subsec:three-qutrit-relations}

\begin{figure}[H]
\centering
    \[
    \scalebox{0.8}{\tikzfig{figures/BoxRelations/CZ.B00.BNew}}
    \]
    \caption{Pushing $\CZ \otimes I$ through two $B$ boxes—$B_{00}$ followed by another $B$ box, $b \in \Z_3$.}
    \label{fig:CZ-B00-B-relations}
\end{figure}

\clearpage
\restoregeometry

\clearpage
\newgeometry{margin=2cm}

\begin{figure}[H]
\centering
    \[
    \scalebox{0.8}{\tikzfig{figures/BoxRelations/CZ.B01.BNew}}
    \]
    \caption{Pushing $\CZ \otimes I$ through two $B$ boxes—$B_{01}$ followed by another $B$ box, $b \in \Z_3$.}
    \label{fig:CZ-B01-B-relations}
\end{figure}

\begin{figure}[H]
\centering
    \[
    \scalebox{0.8}{\tikzfig{figures/BoxRelations/CZ.B02.BNew}}
    \]
    \caption{Pushing $\CZ \otimes I$ through two $B$ boxes—$B_{02}$ followed by another $B$ box, $b \in \Z_3$.}
    \label{fig:CZ-B02-B-relations}
\end{figure}

\clearpage
\restoregeometry

\clearpage
\newgeometry{margin=2cm}

\begin{figure}[H]
\centering
    \[
    \scalebox{0.8}{\tikzfig{figures/BoxRelations/CZ.B1b.BNew}}
    \]
    \caption{Pushing $\CZ \otimes I$ through two $B$ boxes—$B_{1b}$ followed by another $B$ box, $b,d \in \Z_3$.}
    \label{fig:CZ-B1b-B-relations}
\end{figure}

\begin{figure}[H]
\centering
    \[
    \scalebox{0.8}{\tikzfig{figures/BoxRelations/CZ.B2b.BNew}}
    \]
    \caption{Pushing $\CZ \otimes I$ through two $B$ boxes—$B_{2b}$ followed by another $B$ box, $b,d \in \Z_3$.}
    \label{fig:CZ-B2b-B-relations}
\end{figure}

\clearpage
\restoregeometry

\clearpage
\newgeometry{margin=2cm}

\begin{figure}[H]
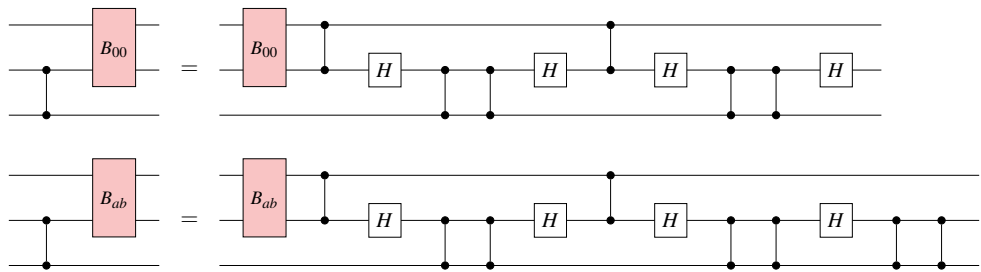

\centering
    \[
    \scalebox{0.8}{\tikzfig{figures/BoxRelations/CZ.B}}
    \]
    \vspace{-.3 cm}
    \caption{Pushing $\CZ$ through a $B$ box, $(a,b) \in \Z_3 \times \Z_3 \setminus \{(0,0)\}$.}
    \label{fig:CZ-B-relations}
\end{figure}

\vspace{-.7 cm}

\begin{figure}[H]
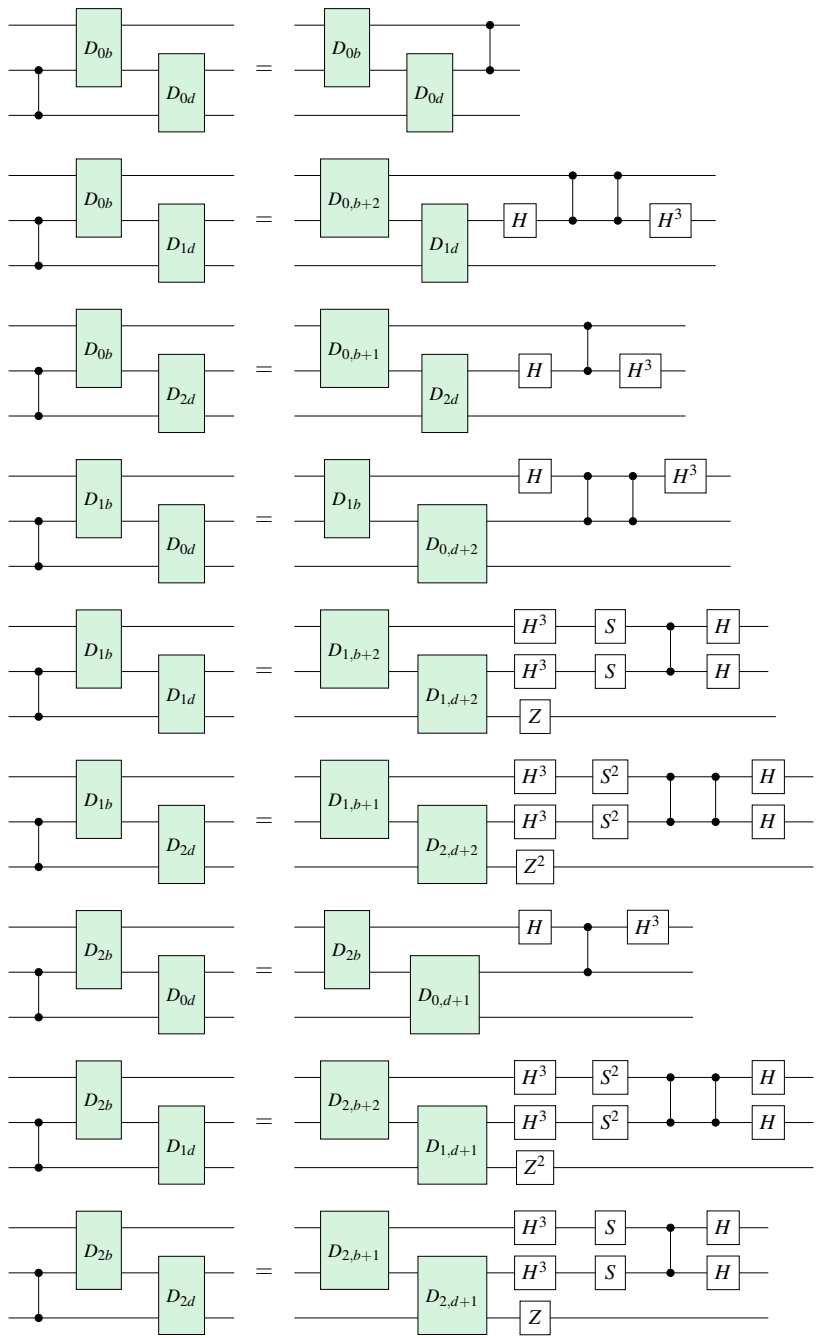

\centering
    \[
    \scalebox{0.8}{\tikzfig{figures/BoxRelations/CZ.D.D}}
    \]
    \vspace{-.3 cm}
    \caption{Pushing $I \otimes \CZ$ through two $D$ boxes, $b,d \in \Z_3$.}
    \label{fig:CZ-D-D-relations}
\end{figure}

\clearpage
\restoregeometry
\section{Actions of the Clifford Generators}
\label{app:generator-actions}

According to \cref{subsec:action}, we can describe a (derived) Clifford generator by its conjugation on the Pauli generators. That is, let $C\in\Clifford_n$ and $P\in\mathcal{B}_n$, $CPC^\dagger = Q$, for some $Q\in \Pauli_n$. For our purposes, it is also convenient to describe their inverse conjugation on the Pauli generators: $C^\dagger PC = Q'$, for some $Q' \in \Pauli_n$.

\begin{definition}
    $H$, $S$, $X$, and $Z$ can be described by their actions on $X$ and $Z$.
    \[
    \tikzfig{figures/Preliminaries/SingleQutritAuto}
    \]
    \label{def:Single-qutrit-action}
\end{definition}

\begin{definition}
    $CZ$ and $SWAP$ can be described by their actions on $X\otimes I$, $Z\otimes I$, $I \otimes X$, and $I \otimes Z$.
    \[
    \tikzfig{figures/Preliminaries/TwoQutritAuto}
    \]
    \label{def:Two-qutrit-action}
\end{definition}

\newpage
\section{Designing the Normal Boxes}
\label{sec:design}

The multi-qutrit Clifford normal form introduced in \cref{sec:normal-form} generalizes the normal forms designed for qubit Clifford operators~\cite{makary2021generators,selinger2015generators}. We adopt a similar methodology and leverage the underlying mechanics of the normal form to handle the increased degrees of freedom for qutrits. Note that the choice of normal boxes is neither unique nor arbitrary. Some choices are much nicer for the Clifford normalization, because they produce simpler residual dirty gates. In this section, we outline the key ideas behind the design of these normal boxes, paving the way for generalizing multi-qutrit Clifford completeness to other odd prime dimensions.

Recall that $\Clifford_n$ is the collection of $n$-qutrit Clifford operators generated by $-\omega$, $H$, $S$, and $\CZ$ gates through matrix multiplication and tensor product. Let $\mathcal{A}=\{-\omega,H^{(j)},S^{(j)},\CZ^{(i,i+1)};\; 1\leq j\leq n,\; 1\leq i\leq n-1\}$, where $-\omega$, $H^{(j)}$, $S^{(j)}$, and $\CZ^{(i,i+1)}$ denote the $n$-qutrit unitary operators:

\begin{align*}
    -\omega &= (-\omega)\cdot I_{3^n},\quad I_{3^n} \; \mbox{is the}\; {3^n}\times {3^n} \; \mbox{identity matrix}.\\
    H^{(j)} &= \bigotimes_{\ell=1}^n C_\ell,\quad C_\ell = H \mbox{ if } \ell = j, \quad  C_\ell = I_3 \; \mbox{otherwise}.\\
    S^{(j)} &= \bigotimes_{\ell=1}^n C_\ell,\quad C_\ell = S \mbox{ if } \ell = j, \quad C_\ell = I_3 \; \mbox{otherwise}.\\
    CZ^{(i,i+1)} &= I_{3^{i-1}} \otimes \CZ \otimes I_{3^{n-i-1}}.
\end{align*}

Every element of $\mathcal{A}$ is called a \emph{syllable}. We use $\mathcal{A}^*$ to denote the set of all words formed by the syllables in $\mathcal{A}$. These words are composed in diagrammatic order. For $1\leq k\leq m$, if $\mathbf{W}=A_1\ldots A_m$ with $A_k \in \mathcal{A}$, we call $\mathbf{W}$ a \emph{Clifford word over $\mathcal{A}$}, and note that $\mathbf{W} \in \mathcal{A}^*$. Let $\llbracket A_k\rrbracket$ denote the matrix representation of $A_k$. Then the word $\mathbf{W}$ can be \emph{interpreted} as a unitary in $\Clifford_n$ by multiplying $\llbracket A_k\rrbracket$ in the reverse order of the syllable composition,

\[
\llbracket \mathbf{W}\rrbracket= \left\llbracket A_1\ldots A_m\right\rrbracket=\llbracket A_m\rrbracket\cdots \llbracket A_1\rrbracket.
\]

To normalize $\mathbf{W}$, we first append it on the right with the normal form of the identity operator, as given in \eqref{eq:n-qutrit-identity-normal}. \cref{fig:dirty-normal-form,fig:normalization1} show that Clifford normalization is reduced to one of the two cases: pushing an $H$ or $S$ gate into a Clifford normal form; or pushing a $\CZ$ gate into a Clifford normal form. Locally, it suffices to know how to push a Clifford gate through each of the normal boxes specified in \cref{fig:Z-and-X-Normal-Boxes}.


\begin{figure}[!htb]
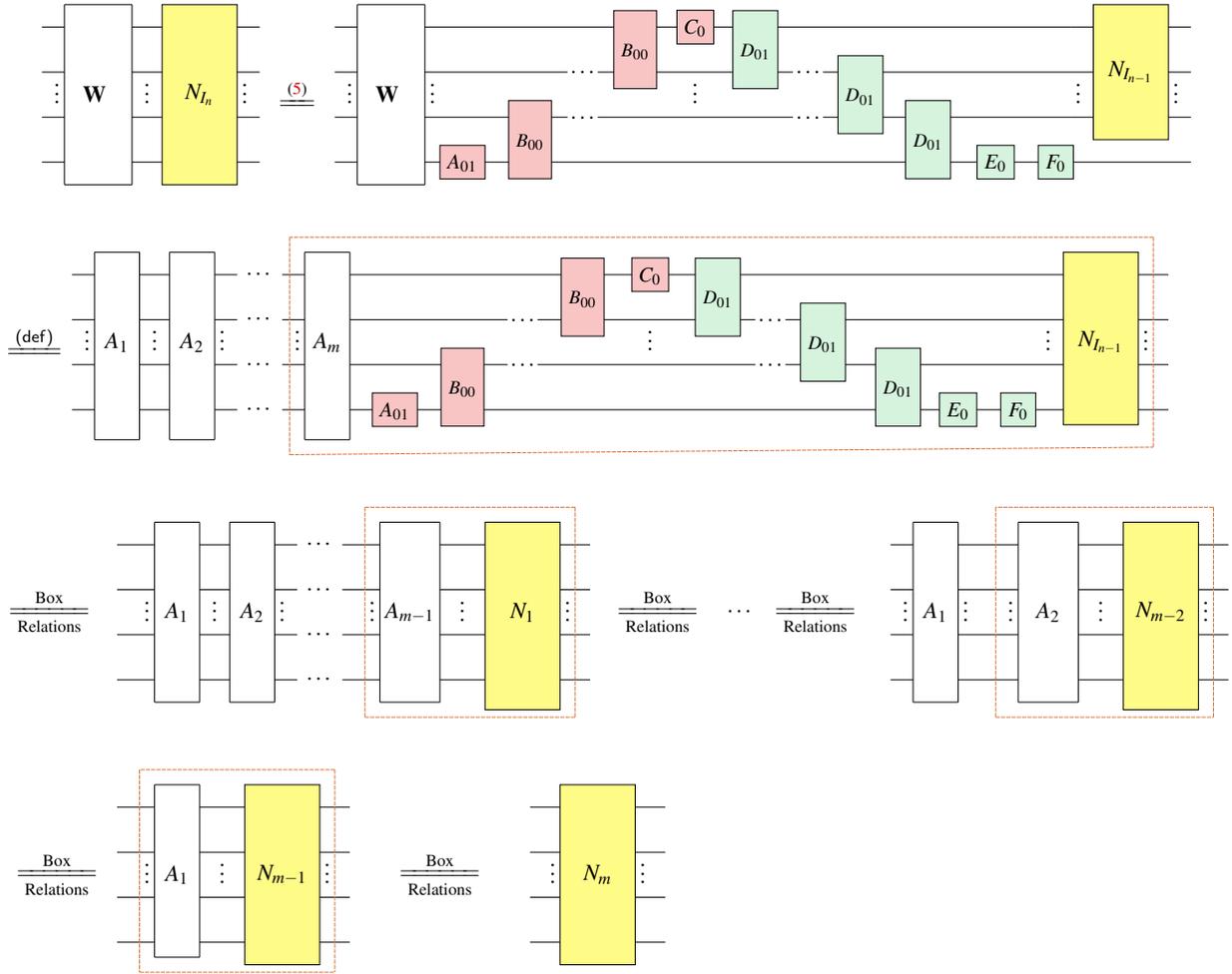

    \centering
    \[
 \scalebox{0.8}{\tikzfig{figures/Normalization/normalization1}}
\]
    \caption{$\mathbf{W}=A_1\ldots A_m$ is a Clifford word composed of $m$ syllables. Each syllable is either a single-qutrit $H$ or $S$ gate, or a two-qutrit $\CZ$ gate acting on adjacent qutrits. After pushing $A_m$ into the normal form of the $n$-qutrit identity operator, $N_{I_n}$, this normal form is updated to a new normal form $N_1$. Then, $A_{m-1}$ is pushed into $N_1$ and so on and so forth. The normalization process reduces to pushing each Clifford syllable $A_k$ into the updated Clifford normal form.}
    \label{fig:normalization1}
\end{figure}

Based on \cref{fig:micro-normal-circuit,fig:Z-and-X-Normal-Boxes}, \cref{fig:dirty-gate-restrictions-1,fig:dirty-gate-restrictions-2,fig:dirty-gate-restrictions-3} summarize the `allowed' Clifford operators before $B$, $C$, $D$, $E$, and $F$ boxes. This, on the other hand, introduces the dirty gate restrictions when pushing through a normal box. To ensure the Clifford normalization will eventually terminate, we need to account for these restrictions when designing the normal boxes.

As discussed in \cref{sec:normalization-with-relations}, after pushing a Clifford operator $g$ through a normal box $M$, the updated normal box $M'$ is uniquely determined by $M$ and $g$. This is due to the unique box actions specified in the second column of \cref{fig:normal-box-auto}. $M'$ is followed by a sequence of Clifford gates, referred to as the residual dirty gates, $\dir$, and it is determined by $g$, $M$, and $M'$ collectively. More generally, when $M$ is a certain combination of normal boxes, \cref{proc:find-d} provides a recipe for finding $\dir$ by tracking the Clifford conjugation on the Pauli generators.

\begin{procedure}
Let $n \in \N$ and $g \in \Clifford_n$. Let $M$ be some combination of normal boxes and $M'$ be the updated normal boxes after pushing $g$ through $M$. Based on the dirty normal form illustrated in \cref{fig:dirty-normal-form}, $n$ can be either $1$, $2$, or $3$, depending on the choice of $M$. We can find the residual dirty gates $\dir$ as follows. 

\begin{equation}
    \tikzfig{figures/Generalizability/finddnew}
    \label{eq:find-d}
\end{equation}
    
\begin{enumerate}
    \item Let $T \in \{X^{(j)},Z^{(j)};\;1\leq j\leq n\}$. Find the preimage of $T$ in the lefthand side of \eqref{eq:find-d}. Let it be $P$.
    \[
    \tikzfig{figures/Generalizability/findd1new}
    \]
    \item Find the image of $P$ in $M'$. Let it be $Q$.
    \[
    \tikzfig{figures/Generalizability/findd2new}
    \]
    \item Then the automorphism of $\dir$ is given by $Q$ and $T$.
    \[
    \tikzfig{figures/Generalizability/findd3new}
    \]
\end{enumerate}
    \label{proc:find-d}
\end{procedure}

Next, we discuss the design principles behind the specific implementation of the normal boxes in \cref{fig:Z-and-X-Normal-Boxes}. In \cref{subsec:normalize-single}, we consider the case of normalizing a single-qutrit Clifford operator, which boils down to pushing an $H$ or $S$ gate through the $A$, $C$, $E$, and $F$ boxes. In \cref{subsec:normalize-multi}, we consider the case of normalizing a multi-qutrit Clifford operator, which boils down to pushing an $H$, $S$ or $\CZ$ gate through the $A$, $B$, $C$, and $D$ boxes. In both cases, our choice of normal boxes satisfy the dirty gate restrictions outlined in \cref{fig:dirty-gate-restrictions-1,fig:dirty-gate-restrictions-2,fig:dirty-gate-restrictions-3}.

\subsection{Designing the Single-Qutrit Normal Boxes}
\label{subsec:normalize-single}

In \cref{sec:normal-form}, we introduce four types of single-qutrit normal boxes. \cref{fig:single-qutrit-normal-boxes-auto} summarizes their unique actions on Pauli operators. These properties completely determine all of the boxes, except for the $A$ boxes.
In particular, since a $C_b$ box maps $X$ to $X$ and $\omega^b Z$ to $Z$, it must be a power of $X$ gates. Since an $E_b$ box maps $XZ^b$ to $X$ and $Z$ to $Z$, it must be a power of $S$ gates. For all $i\in \Z_3$, since an $F_b$ box maps $\omega^bXZ^i$ to $XZ^i$ and $Z$ to $Z$, it maps $\omega^bX$ to $X$. Hence, an $F_b$ box must be a power of $Z$ gates. By \cref{def:Single-qutrit-action}, $C_b = X^b$, $E_b = S^b$, and $F_b = Z^{2b}$, which gives us the construction of $C$, $E$, and $F$ boxes in \cref{fig:Z-and-X-Normal-Boxes}. 

\begin{figure}[!htb]
    \centering
    \[
    \tikzfig{figures/Generalizability/single-qutritNormalBoxesAuto}
\]
    \caption{Single-qutrit normal boxes and their unique actions on Pauli operators. The second column shows the required actions. The third column shows the additional actions that we choose to simplify the normalization process.}
    \label{fig:single-qutrit-normal-boxes-auto}
\end{figure}

Based on this, \cref{fig:single-qutrit-normal} presents a generic construction of a single-qutrit Clifford normal form $N$, and \cref{fig:single-qutrit-restriction} summarizes the dirty gate restrictions when pushing an $H$ or $S$ gate through $N$.

\begin{figure}[!htb]
    \centering
    \[
\tikzfig{figures/Generalizability/Single-qutritNormal2}
\]
    \caption{For $(a,b) \in \Z_3 \times \Z_3 \setminus \{(0,0)\}$, $c,\nu, u \in \Z_3$, and $t\in \Z_6$, a generic construction of the normal form of a single-qutrit Clifford operator. This is based on the close-up view of the Clifford normal form in \cref{fig:micro-normal-circuit} and the discussion above.}
    \label{fig:single-qutrit-normal}
\end{figure}

\begin{figure}[!htb]
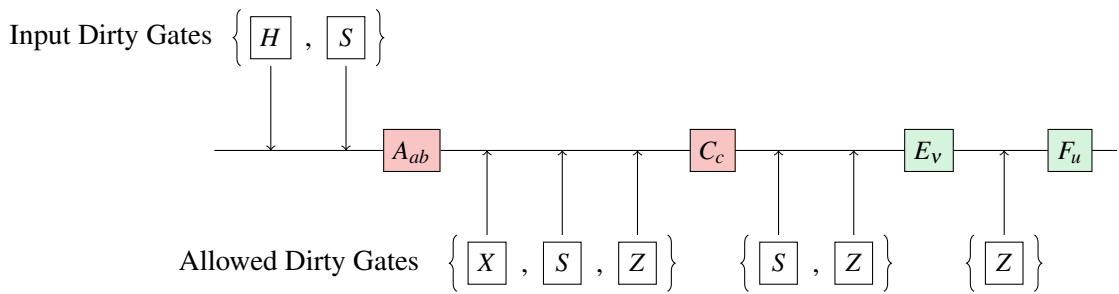

    \centering
    \[
\tikzfig{figures/Generalizability/Single-qutritNormal}
\]
    \caption{Dirty gate restrictions when pushing an $H$ or $S$ gate through a single-qutrit normal form.}
    \label{fig:single-qutrit-restriction}
\end{figure}

We thus see that $A$ boxes must be able to handle arbitrary dirty gates at the input, and the residual dirty gates must not contain solitary $H$ gates—these can only be absorbed by the subsequent normal boxes if they appear as part of a Pauli $X$ or $Z$ gate. Therefore, to find $A$ boxes that are suitable for normalization, we need to consider the case when pushing an $H$ or $S$ gate through $A_{ab}$. The updated $A$ box is uniquely determined by the tuple $(a, b, g)$, $g \in \{H, S\}$.

\begin{lemma}
    For all $(a,b)\in\Z_3\times \Z_3\setminus\{(0,0)\}$, there exist $\dir,\;\dir' \in \Clifford_1$ such that
    \[
    \tikzfig{figures/NormalForm/HSA}
    \] 
    \label{lem:HA}
\end{lemma}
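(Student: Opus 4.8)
The plan is to use that a single-qutrit Clifford is determined, up to a scalar $(-\omega)^t$, by its conjugation action on the generators $X$ and $Z$ (\cref{cor:Clifford-stabilisers,cor:phases}). Each of the two claimed identities then reduces to matching a stabilizer tableau and reconciling one global phase. Since $(a,b)$ ranges over the eight nonzero pairs of $\Z_3\times\Z_3$ and $g$ over $\{H,S\}$, this is a finite verification; the goal is to arrange it so the updated index and the residual gate emerge from a single uniform argument rather than sixteen separate computations. The ingredients I would use are the defining $A$-box action $A_{ab}\bullet(X^aZ^b)=Z$, the complete $A$-box tableaus of \cref{fig:auto-normal-box} together with the explicit decompositions of \cref{fig:Z-and-X-Normal-Boxes} (recall that, unlike the other boxes, the $A$ boxes are not pinned down by their required action alone), and the generator actions of $H$ and $S$ recorded in \cref{def:Single-qutrit-action}.

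I would first fix the updated box. Writing circuits left to right, the claim $gA_{ab}=A_{a'b'}\dir$ reads, as a product of matrices, $A_{ab}g=\dir A_{a'b'}$, so $\dir=A_{ab}\,g\,A_{a'b'}^{-1}$. The gates $S$, $Z$, $X$ --- the only dirty gates admissible immediately after an $A$ box --- each fix the line spanned by $Z$ up to a phase, whereas $H$ and $H^2$ do not; hence the requirement that $\dir$ contain no solitary $H$ is exactly the requirement $\dir\bullet Z\in\{Z,\omega Z,\omega^2 Z\}$. Unwinding this through $A_{a'b'}^{-1}\bullet Z=X^{a'}Z^{b'}$ and $A_{ab}\bullet(X^aZ^b)=Z$ forces $X^{a'}Z^{b'}$ to be the $g$-preimage of $X^aZ^b$, i.e.\ $g^\dagger\bullet(X^aZ^b)$. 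A short computation with \cref{def:Single-qutrit-action} then yields $(a',b')=(2b,a)$ for $g=H$ and $(a',b')=(a,a+b)$ for $g=S$, and in both cases $(a',b')\neq(0,0)$, so the updated box is a legitimate $A$ box.

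With $A_{a'b'}$ determined, I would read off $\dir$ (and likewise $\dir'$) by \cref{proc:find-d}: for each $T\in\{X,Z\}$ take the preimage $P$ of $T$ under $gA_{ab}$, set $Q=A_{a'b'}\bullet P$, and identify $\dir$ as the unique single-qutrit Clifford with $\dir\bullet Q=T$ on both generators. By construction $\dir$ fixes the $Z$-line up to phase, so it lies in $\langle S,X,Z\rangle$ and exhibits no solitary $H$; I would write it out explicitly in each case to match the figure. Finally I would settle the $(-\omega)^t$ ambiguity left by \cref{cor:phases}, either by comparing a single entry of the two matrices over $\Tw$ or by tallying the $\omega$-exponents introduced by the $H$- and $S$-actions, which upgrades the tableau-level agreement to the exact circuit identity drawn in the figure.

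The step I expect to be the main obstacle is the phase and exponent bookkeeping. Qutrit Paulis carry intrinsic factors of $\omega$, and both $H$ and $S$ inject their own phases, so tracking the precise $\omega$-power on $\dir$ and $\dir'$ across all eight values of $(a,b)$ is where sign and exponent errors are easiest to make. The structurally delicate point --- easy to state but needing verification in every branch --- is that no bare $H$ survives in the residue; this is precisely the behaviour the $A$-box design was engineered to enforce, and confirming it uniformly (through the $Z$-line condition above) rather than case by case is the real content of the lemma.
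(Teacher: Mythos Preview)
Your approach is the paper's: both arguments reduce to computing the preimage $g^\dagger\bullet(X^aZ^b)$ and reading off the updated $A$-box index from it. There is one computational slip to fix. Although you correctly set up the condition $X^{a'}Z^{b'}\propto g^\dagger\bullet(X^aZ^b)$, the indices you then report are those of the \emph{forward} image $g\bullet(X^aZ^b)$: for $H$ you get $(2b,a)=(-b,a)$ and for $S$ you get $(a,a+b)$, whereas the preimages give $(a',b')=(b,-a)$ for $H$ (from $H^\dagger XH=Z^{-1}$, $H^\dagger ZH=X$) and $(a',b')=(a,b-a)$ for $S$ (from $S^\dagger XS=XZ^{-1}$). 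The method is right; only the direction of conjugation was swapped in the final step.

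One further remark on scope: the paper's proof of this lemma is lighter than your plan. It stops after identifying the updated indices, since the existence of $\dir,\dir'\in\Clifford_1$ is then immediate from the group structure (take $\dir=A_{ab}\,g\,A_{a'b'}^{-1}$). Your additional programme---verifying via the $Z$-line condition that $\dir$ lies in $\langle S,X,Z\rangle$ and reconciling the global $(-\omega)^t$ phase---is correct and is exactly what is needed for the normalisation argument, but in the paper that work is deferred to the explicit box-relation catalogue in \cref{app:relations} rather than carried out inside this lemma.
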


\begin{proof}
    By the uniqueness of normal form and the required action of $A$ boxes in \cref{fig:single-qutrit-normal-boxes-auto}, it suffices to find the preimage of $X^aZ^b$ under the action of $H$ and $S$ gates.

    \[
        H^\dagger (X^aZ^b) H = (H^\dagger X^a H)(H^\dagger Z^b H)\xlongequal[]{\cref{def:Single-qutrit-action}}Z^{-a}X^b\xlongequal[]{\cref{def:Single-qutrit-action}}\omega^{-ab}X^bZ^{-a}.
    \]

    Hence, after pushing $H$ gate through an $A_{ab}$ box, the updated $A$ box must be of the form $A_{b,-a}$.
    \[
    \tikzfig{figures/Generalizability/H.A}
    \]

    \begin{align*}
         S^\dagger (X^aZ^b) S = (S^\dagger X^a S)(S^\dagger Z^b S)&\xlongequal[]{\cref{def:Single-qutrit-action}}\left((XZ^{-1})\cdots(XZ^{-1})\right)Z^b\\
         &\xlongequal[]{\cref{def:Single-qutrit-action}}\omega^{\frac{-a(a-1)}{2}}X^aZ^{-a}Z^b=\omega^{\frac{-a(a-1)}{2}}X^aZ^{b-a}.
    \end{align*}
    Hence, after pushing an $S$ gate through an $A_{ab}$ box, the updated $A$ box must be of the form $A_{a,b-a}$.
    \[
        \scalebox{.9}{\tikzfig{figures/Generalizability/S.A}}\qedhere
    \] 
\end{proof}

Combining \cref{lem:HA,proc:find-d} as well as the choice of $A$ boxes in \cref{fig:Z-and-X-Normal-Boxes}, we can find all single-qutrit $A$ box relations. They are demonstrated in \cref{fig:H-A-relations,fig:S-A-relations}. The dirty gates after pushing an $H$ or $S$ through an $A$ box satisfy the restrictions outlined in \cref{fig:single-qutrit-restriction}.

Putting everything together, \cref{fig:normalize single-qutrit Clifford} demonstrates how to normalize a single-qutrit Clifford operator $\mathbf{W}$, as shown below. We begin by appending $\mathbf{W}$ on the right with the normal form of the identity operator given in \eqref{eq:identities}. Then we iteratively push every Clifford gate through the normal form, by applying the rules in \cref{subsec:single-qutrit-relations}.

\vspace{-.5 cm}

\[
        \scalebox{.9}{\tikzfig{figures/Generalizability/example}}
    \]

\begin{figure}[!htb]
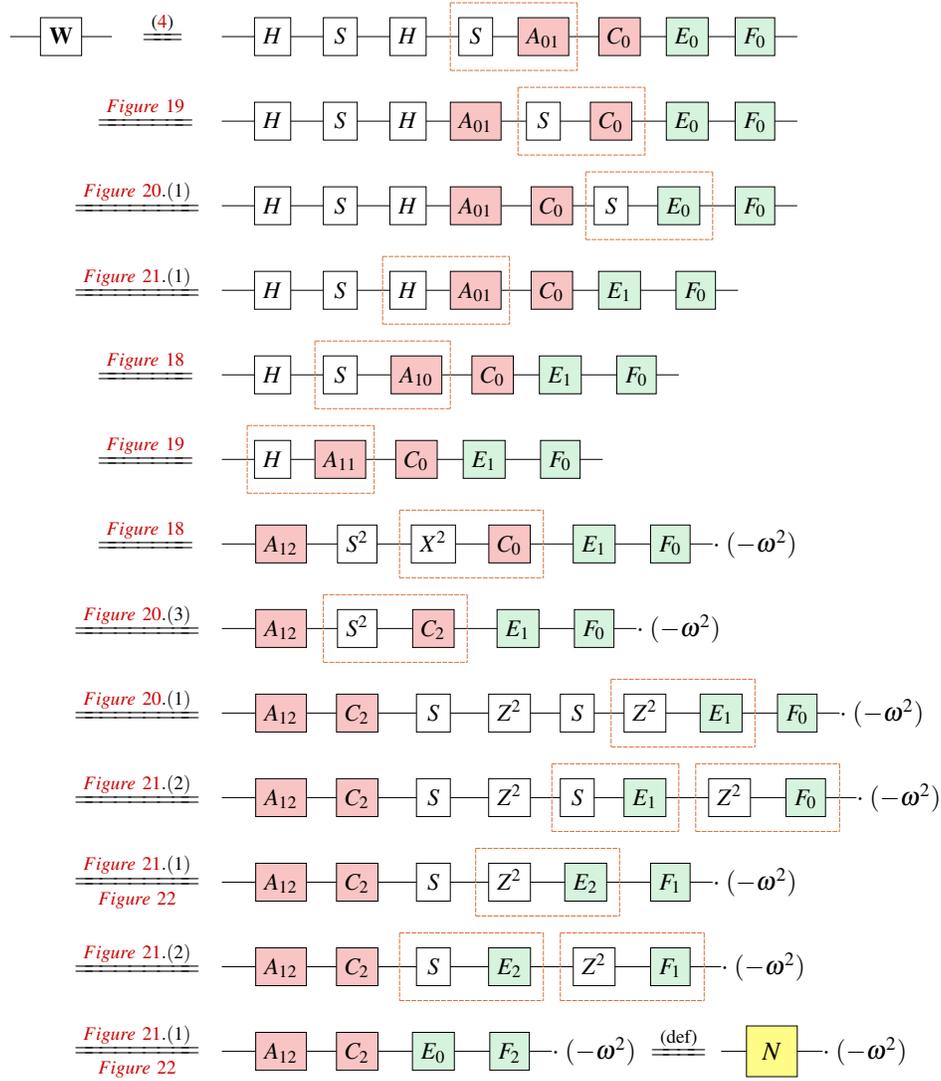

    \centering
    \[
        \scalebox{.9}{\tikzfig{figures/Generalizability/example1}}
    \]
    \caption{Normalize a single-qutrit Clifford operator. Each orange dashed box highlights the subcircuit where a box relation is applied.}
    \label{fig:normalize single-qutrit Clifford}
\end{figure}

\subsection{Designing the Two-Qutrit Normal Boxes}
\label{subsec:normalize-multi}

It is much more involved in finding the appropriate constraints for designing the two-qutrit normal boxes. In \cref{fig:normal-box-auto}, we introduce two types of such boxes, whose distinct actions on Pauli operators are summarized in \cref{fig:two-qutrit-normal-boxes-auto}. Both actions defined for the $D$ box are essential to ensure the normal form behaves correctly. For the $B$ box, however, the second action is identified as a desirable property, as we will explain later. Note that, in all cases, we specify at most two Pauli actions per box. As a result, the stabilizer tableaus are underspecified, leaving a large design space for the normal boxes. To make it more tractable, we introduce a few guiding principles.

\begin{figure}[!htb]
    \centering
    \[
    \tikzfig{figures/Generalizability/two-qutritNormalBoxesAuto}
\]

    \caption{Two-qutrit normal boxes and their unique actions on Pauli operators. The second column shows the required actions. The third column shows the additional actions that we choose to simplify the normalization process.}
    \label{fig:two-qutrit-normal-boxes-auto}
\end{figure}

Recall that in \cref{fig:single-qutrit-normal-boxes-auto}, an $A_{ab}$ box sends $X^aZ^b$ to $Z$. This is the same as the Pauli evolution happening on the top input wire of $B$ boxes. To not reinvent the wheel, it is desirable to construct $B$ boxes using $A$ boxes. Moreover, the required action of $B$ and $D$ boxes on Pauli operators is almost vertically symmetric of each other. Ideally, we would like to design $D$ boxes based on how we design $B$ boxes. For example, we also want to use $A$ boxes to construct $D$ boxes.

Finally, among all possible box relations, \cref{fig:bottle-neck} shows the most complicated case where a $\CZ$ is pushed through two consecutive $B$ or $D$ boxes. In each case, there are $81$ box relations. It is desirable to choose appropriate $B$ and $D$ boxes such that the dirty gates in each one of the $162$ box relations are as simple as possible.

\begin{figure}[!htb]
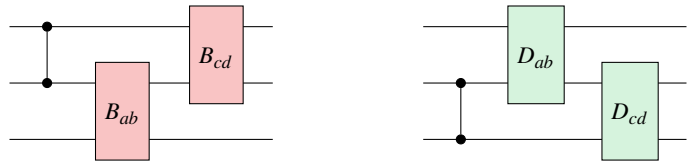

    \centering
    \[
    \tikzfig{figures/Normalization/CZ.BB.DD}
\]
    \caption{For all $a,b,c,d\in\Z_3$, push a $\CZ$ through two consecutive B or D boxes.}
    \label{fig:bottle-neck}
\end{figure}

As before, we can uniquely determine the updated $B$ and $D$ boxes after pushing a $\CZ$ through them.

\begin{figure}[!htb]
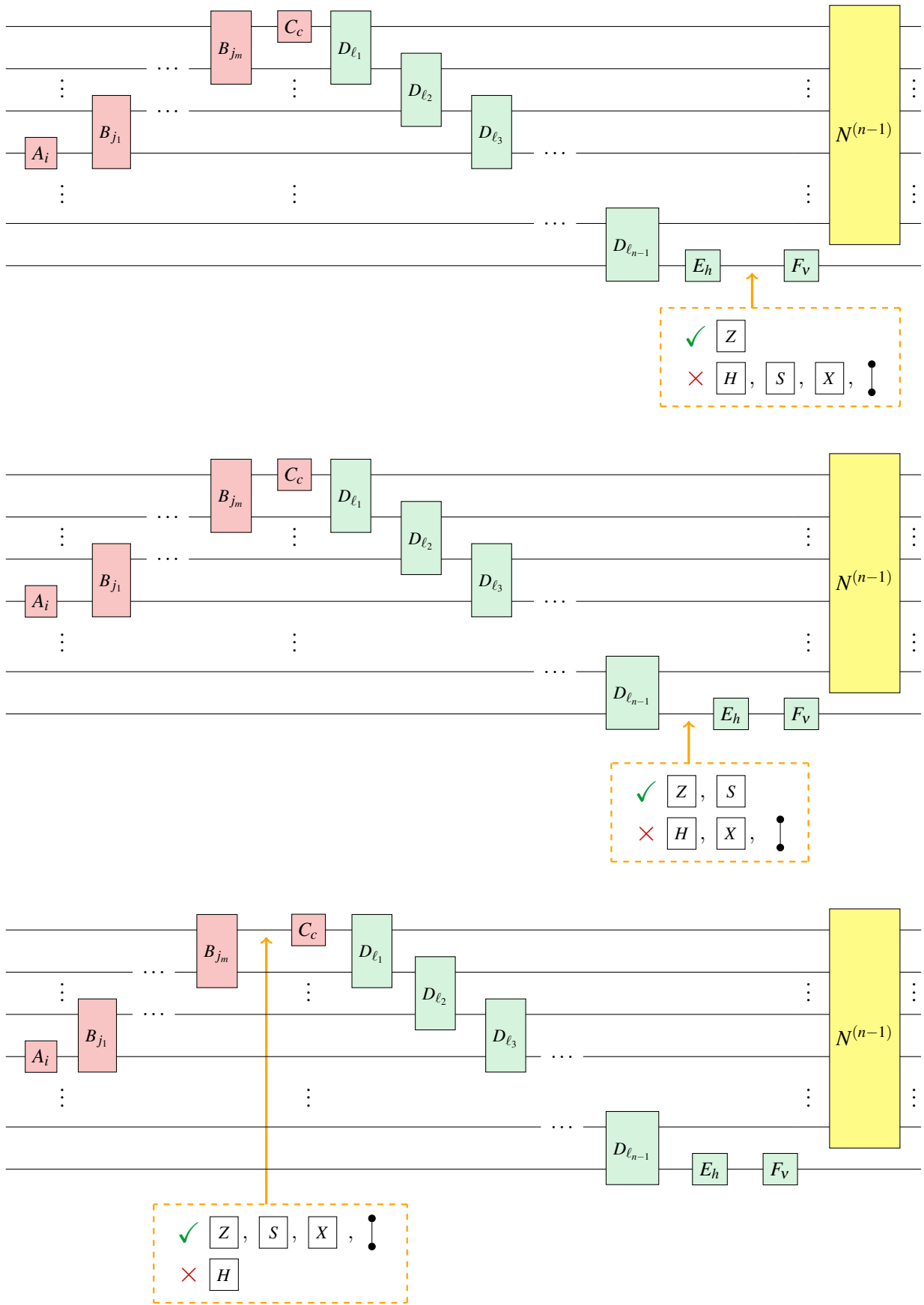

\begin{subfigure}{1\textwidth}
  \centering
   \scalebox{1}{\tikzfig{figures/Generalizability/push1}}
\end{subfigure}%

\vspace{2 em}

\begin{subfigure}{1\textwidth}
  \centering
    \scalebox{1}{\tikzfig{figures/Generalizability/push2}}
\end{subfigure}%

\vspace{2 em}

\begin{subfigure}{1\textwidth}
  \centering
    \scalebox{1}{\tikzfig{figures/Generalizability/push4}}
\end{subfigure}%
\caption{Allowed Clifford operators before $C$, $E$, and $F$ boxes.}
\label{fig:dirty-gate-restrictions-1}
\end{figure}

\begin{figure}[!htb]
\begin{subfigure}{1\textwidth}
  \centering
   \scalebox{1}{\tikzfig{figures/Generalizability/push5}}
\end{subfigure}

\vspace{4 em}

\begin{subfigure}{1\textwidth}
  \centering
    \scalebox{1}{\tikzfig{figures/Generalizability/push3}}
\end{subfigure}
\caption{Allowed Clifford operators before $D$ boxes.}
\label{fig:dirty-gate-restrictions-2}
\end{figure}

\begin{figure}[!htb]
\begin{subfigure}{1\textwidth}
  \centering
   \scalebox{1}{\tikzfig{figures/Generalizability/push6New}}
\end{subfigure}

\vspace{4 em}

\begin{subfigure}{1\textwidth}
  \centering
    \scalebox{1}{\tikzfig{figures/Generalizability/push7}}
\end{subfigure}
\caption{Allowed Clifford operators before $B$ boxes.}
\label{fig:dirty-gate-restrictions-3}
\end{figure}

\begin{lemma}
    For $a, b, c, d \in \Z_3$ and $\dir \in \Clifford_3$,
    \begin{align}
        \scalebox{1}{\tikzfig{figures/Generalizability/CZ.B.BUpdated}}\label{eq:BB}\\
        \scalebox{1}{\tikzfig{figures/Generalizability/CZ.D.DUpdated}}\label{eq:DD}
    \end{align}
    \label{lem:CZ.B.B-CZ.D.D-updated}
\end{lemma}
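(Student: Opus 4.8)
The plan is to prove \eqref{eq:BB} and \eqref{eq:DD} by exactly the method used for the single-qutrit boxes in \cref{lem:HA}, now tracking conjugation on three qutrits. Two facts drive the argument. First, by \cref{cor:Clifford-stabilisers} and \cref{cor:phases}, two Clifford circuits on three qutrits that act identically on every Pauli generator of $\Pauli_3$ coincide up to a scalar $(-\omega)^t$, so it suffices to match the two sides of each equation on the generators $X^{(j)},Z^{(j)}$ and then fix the phase. Second, by the uniqueness of the required box actions recorded in \cref{fig:two-qutrit-normal-boxes-auto}, the updated boxes on the right-hand sides of \eqref{eq:BB} and \eqref{eq:DD} are \emph{forced}: once we know which input Pauli a given box must now send to its canonical output, its index is determined. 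This reduces the whole statement to a conjugation computation followed by an application of \cref{proc:find-d} to read off $\dir$.

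For the $B$-box relation \eqref{eq:BB}, I would first use \cref{def:Two-qutrit-action} to compute the inverse conjugation of $\CZ\otimes I$ on the input Paulis of the leading $B$ box. Since $B_{ab}$ is characterized by sending $X^aZ^b\otimes Z$ to $Z\otimes I$, prepending $\CZ$ modifies this input Pauli: the $X$-content of $X^aZ^b$ picks up a matching power of $Z$ on the neighbouring wire (recall $\CZ\,(X\otimes I)\,\CZ^\dagger = X\otimes Z$). Propagating the modified preimages through the ladder yields updated $B$-box indices that are explicit affine functions of $(a,b,c,d)$ over $\Z_3$, matching the boxes drawn in \eqref{eq:BB}. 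With the updated boxes in hand, \cref{proc:find-d} produces $\dir$: for each generator $T\in\{X^{(j)},Z^{(j)}\}$ I trace its preimage $P$ through the left-hand circuit, then its image $Q$ through the updated boxes, and the automorphism $T\mapsto Q$ determines $\dir\in\Clifford_3$.

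The $D$-box relation \eqref{eq:DD} is handled identically, exploiting the vertical symmetry between $B$ and $D$ boxes noted in \cref{sec:other-definitions} (in particular $D_{02}=B_{00}$ and $D_{01}=B_{00}^{-1}$): the conjugation of $I\otimes\CZ$ through two $D$ boxes mirrors the $B$ computation with the roles of the top and bottom wires interchanged. Since $\CZ$ and all normal boxes are Clifford, $\dir\in\Clifford_3$ is automatic; the parametric computation merely exhibits it explicitly. Because each equation ranges over all $(a,b,c,d)\in\Z_3^{4}$, i.e.\ the $81$ instances of \cref{fig:bottle-neck}, the derivation is carried out symbolically in the parameters so that one calculation settles all cases at once.

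The main obstacle is not the existence of $\dir$, which is guaranteed by closure of $\Clifford_3$, but showing that $\dir$ has the \emph{restricted form} demanded by the dirty-gate constraints in \cref{fig:dirty-gate-restrictions-3,fig:dirty-gate-restrictions-2}: it must contain no solitary $H$ gate and only the permitted $X$, $Z$, $S$, and $\CZ$ gates in the allowed positions. This is exactly where the \emph{additional} (desirable) action imposed on the $B$ and $D$ boxes in the third column of \cref{fig:two-qutrit-normal-boxes-auto} is used, since that extra symmetry is what forces the residual $\dir$ to avoid Hadamards and thereby keeps the circuit a dirty normal form that can be reduced further. Verifying this shape uniformly across the $81$ parameter values is the delicate part of the argument.
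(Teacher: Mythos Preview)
Your approach for the stated lemma is essentially the paper's: use the required action of $B_{ab}$ (resp.\ $D_{ab}$) from \cref{fig:two-qutrit-normal-boxes-auto} to reduce the determination of the updated indices to a single $\CZ$ conjugation. The paper computes $\CZ^\dagger(X^cZ^d\otimes X^aZ^b)\CZ=\omega^{2ac}X^cZ^{d+2a}\otimes X^aZ^{b+2c}$, which forces the new $B$ indices to be $(c,d+2a)$ and $(a,b+2c)$; the $D$ case is the same calculation with the roles swapped. Your plan to trace preimages through the left-hand side and then read off the indices is exactly this.

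Where you overreach is the last paragraph. As stated, \cref{lem:CZ.B.B-CZ.D.D-updated} only claims that the updated indices are as drawn and that \emph{some} $\dir\in\Clifford_3$ makes the equation hold; you correctly observe this is automatic by Clifford closure, so there is no ``main obstacle'' here at all. The paper separates out the shape analysis of $\dir$---separability and Paulis only on the outer wire---into the distinct \cref{lem:CZ-BB-dir,lem:CZ-DD-dir}, and it is only \emph{there} that the additional actions from the third column of \cref{fig:two-qutrit-normal-boxes-auto} are invoked. Your discussion of the dirty-gate restrictions, and your worry about verifying the restricted form ``uniformly across the $81$ parameter values'', belongs to those later lemmas, not to this one. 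For the present statement the proof is short: one conjugation computation per case.
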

\begin{proof}
To determine the updated indices of the $B$ boxes in the righthand side of \eqref{eq:BB}, it suffices to find the preimage of $Z\otimes I \otimes I$ in the lefthand side of \eqref{eq:BB}. By the uniqueness of normal form and the required action of $B$ boxes in \cref{fig:two-qutrit-normal-boxes-auto}, our problem is reduced to finding the preimage of $X^cZ^d \otimes X^aZ^b$ in the $\CZ$ circuit.
\begin{align*}
    \CZ^\dagger(X^cZ^d \otimes X^aZ^b)CZ &= \CZ^\dagger\left((X^c \otimes I)(Z^d \otimes I)(I \otimes X^a)(I \otimes Z^b)\right)\CZ\\
    &=\left(\CZ^\dagger(X^c \otimes I)\CZ\right)\left(\CZ^\dagger(Z^d \otimes I)\CZ\right)\left(\CZ^\dagger(I \otimes X^a)\CZ\right)\left(\CZ^\dagger(I \otimes Z^b)\CZ\right)\\
    &=\left(\CZ^\dagger(X \otimes I)\CZ\right)^c\left(\CZ^\dagger(Z \otimes I)\CZ\right)^d\left(\CZ^\dagger(I \otimes X)\CZ\right)^a\left(\CZ^\dagger(I \otimes Z)\CZ\right)^b\\
    &\xlongequal[]{\cref{def:Two-qutrit-action}}(X \otimes Z^2)^c(Z \otimes I)^d(Z^2 \otimes X)^a(I \otimes Z)^b\\
    &=(X^c \otimes Z^{2c})(Z^d \otimes I)(Z^{2a} \otimes X^a)(I \otimes Z^b)\\
    &=X^cZ^dZ^{2a} \otimes Z^{2c} X^aZ^b\xlongequal[]{\cref{def:Single-qutrit-action}}\omega^{2ac}X^cZ^{d+2a} \otimes X^aZ^{b+2c}.
\end{align*}
   \[
    \scalebox{.8}{\tikzfig{figures/Generalizability/CZ.B.B1}}
    \]
    To determine the updated indices of the $D$ boxes in the righthand side of \eqref{eq:DD}, it suffices to find the preimage of $I\otimes I \otimes XZ^i$, for $i\in \Z_3$, in the lefthand side of \eqref{eq:DD}. By the uniqueness of normal form and the required action of $D$ boxes in \cref{fig:two-qutrit-normal-boxes-auto}, our problem is reduced to finding the preimage of $X^aZ^b \otimes X^cZ^d$ in the $\CZ$ circuit.
    
    \begin{align*}
    \CZ^\dagger(X^aZ^b\otimes X^cZ^d)CZ &= \CZ^\dagger\left((X^a \otimes I)(Z^b \otimes I)(I \otimes X^c)(I \otimes Z^d)\right)\CZ\\
    &=\left(\CZ^\dagger(X^a \otimes I)\CZ\right)\left(\CZ^\dagger(Z^b \otimes I)\CZ\right)\left(\CZ^\dagger(I \otimes X^c)\CZ\right)\left(\CZ^\dagger(I \otimes Z^d)\CZ\right)\\
    &=\left(\CZ^\dagger(X \otimes I)\CZ\right)^a\left(\CZ^\dagger(Z \otimes I)\CZ\right)^b\left(\CZ^\dagger(I \otimes X)\CZ\right)^c\left(\CZ^\dagger(I \otimes Z)\CZ\right)^d\\
    &\xlongequal[]{\cref{def:Two-qutrit-action}}(X \otimes Z^2)^a(Z \otimes I)^b(Z^2 \otimes X)^c(I \otimes Z)^d\\
    &=(X^a \otimes Z^{2a})(Z^b \otimes I)(Z^{2c} \otimes X^c)(I \otimes Z^d)\\
    &=X^aZ^bZ^{2c} \otimes Z^{2a} X^cZ^d\xlongequal[]{\cref{def:Single-qutrit-action}}\omega^{2ac}X^aZ^{b+2c}\otimes X^cZ^{d+2a}.
\end{align*}
    \[
    \scalebox{.8}{\tikzfig{figures/Generalizability/CZ.D.D1}}
    \]
\end{proof}

According to the dirty gate restrictions in \cref{fig:dirty-gate-restrictions-2,fig:dirty-gate-restrictions-3}, our proposed $B$ box construction is suitable for Clifford normalization if there is no single $H$ gate acting on the top wire of $\dir$. One way to do this is to show that $\dir$ is not entangling the first qutrit with the other qutrits, and that only Pauli gates appear on the first qutrit. To this end, let us consider the notion of \emph{separable operators}. We will use their properties to decide if the normal boxes are designed appropriately.

\begin{definition}
Let $k_1,k_2\in \N^{>0}$. When a Clifford operator $C \in \Clifford_{k_1 + k_2}$ is a tensor product of two Clifford operators acting on the first $k_1$ and the remaining $k_2$ qutrits respectively, we say $C$ is \emph{separable}.
    \label{def:separable}
\end{definition}

When $C$ is separable, it cannot propagate Paulis from the first $k_1$ qutrits to the other $k_2$ qutrits. In \cref{prop:untangled}, we show that the converse is true. 

\begin{definition}\label{def:commutant}
    For any subset $\mathcal{A}\subset \mathcal{M}_{n}(\C)$, its commutant is defined as
\begin{equation*}
   \mathcal{A}'=\{B\in \mathcal{M}_{n}(\C);\; [B,A]=0 \text{ for all }A\in \mathcal{A} \}.
\end{equation*}
\end{definition}

Note that the commutation of all matrices is just the scalars: $\mathcal{M}_{n}(\C)'=\{cI_{n}:c\in \C\}$. In a composite system $\mathcal{M}_{n_1}(\C)\otimes \mathcal{M}_{n_2}(\C)$,
\begin{equation*}
    \big(\mathcal{M}_{n_1}(\C)\otimes I_{n_2} \big)'=I_{n_1}\otimes \mathcal{M}_{n_2}(\C).
\end{equation*}
Here $\mathcal{M}_{n_1}(\C)\otimes I_{n_2}$ denotes the set $\{A\otimes I_{n_2}:A\in \mathcal{M}_{n_1}(\C)\}$. The following lemma is well known; we include a proof to keep this appendix self-contained.

\begin{lemma}\label{lemma:untangled}
    Let $U$ be a unitary operator in $\mathcal{M}_{n_1}(\C)\otimes \mathcal{M}_{n_2}(\C)$. Suppose there is a unitary operator $U_1\in \mathcal{M}_{n_1}(\C)$ such that 
    \begin{equation}
        U(A\otimes I_{n_2})U ^\dagger=U_1AU_1^\dagger\otimes I_{n_2}\label{eq:U1}
    \end{equation}
   for all $A\in \mathcal{M}_{n_1}(\C)$. Then there is unitary operator $U_2\in \mathcal{M}_{n_2}(\C)$ such that $U=U_1\otimes U_2$.
\end{lemma}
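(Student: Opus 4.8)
The plan is to reduce the statement to the commutant identity already recorded just above, namely $\big(\mathcal{M}_{n_1}(\C)\otimes I_{n_2}\big)' = I_{n_1}\otimes \mathcal{M}_{n_2}(\C)$. First I would introduce the auxiliary operator $V = (U_1^\dagger\otimes I_{n_2})\,U$, which is unitary since it is a product of unitaries. The goal is then to show that $V$ has the form $I_{n_1}\otimes U_2$, from which the conclusion follows immediately.

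To see this, I would rewrite the hypothesis \eqref{eq:U1}. Multiplying on the left by $U_1^\dagger\otimes I_{n_2}$ and on the right by $U_1\otimes I_{n_2}$ turns it into $(U_1^\dagger\otimes I_{n_2})\,U\,(A\otimes I_{n_2})\,U^\dagger\,(U_1\otimes I_{n_2}) = A\otimes I_{n_2}$, that is, $V(A\otimes I_{n_2})V^\dagger = A\otimes I_{n_2}$ for every $A\in\mathcal{M}_{n_1}(\C)$. Equivalently, $V$ commutes with $A\otimes I_{n_2}$ for all such $A$, so $V$ lies in the commutant $\big(\mathcal{M}_{n_1}(\C)\otimes I_{n_2}\big)'$. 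By the identity quoted above, this commutant equals $I_{n_1}\otimes\mathcal{M}_{n_2}(\C)$, hence $V = I_{n_1}\otimes U_2$ for some $U_2\in\mathcal{M}_{n_2}(\C)$. Because $V$ is unitary and $V^\dagger V = I_{n_1}\otimes U_2^\dagger U_2$, comparing factors forces $U_2^\dagger U_2 = I_{n_2}$, so $U_2$ is unitary. Finally, unwinding the definition of $V$ gives $U = (U_1\otimes I_{n_2})\,V = (U_1\otimes I_{n_2})(I_{n_1}\otimes U_2) = U_1\otimes U_2$, as desired.

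There is no serious obstacle here, since the commutant computation is the engine of the argument and is taken as given. The only two points that deserve a sentence of care are, first, that the hypothesis must hold for \emph{all} $A\in\mathcal{M}_{n_1}(\C)$ (it is exactly this that makes the commutant as small as $I_{n_1}\otimes\mathcal{M}_{n_2}(\C)$ rather than something larger), and second, the verification that the factor $U_2$ inherits unitarity from $V$, which follows by comparing $V^\dagger V = I$ with the tensor form. Everything else is bookkeeping with the definition of $V$.
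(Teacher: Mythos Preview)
Your proof is correct and essentially identical to the paper's: both introduce the auxiliary unitary $V=(U_1^\dagger\otimes I_{n_2})U$ (the paper calls it $W$), observe that the hypothesis forces $V$ into the commutant $(\mathcal{M}_{n_1}(\C)\otimes I_{n_2})'=I_{n_1}\otimes\mathcal{M}_{n_2}(\C)$, and conclude $U=U_1\otimes U_2$ with $U_2$ unitary. The only cosmetic difference is that you spell out the unitarity of $U_2$ explicitly, whereas the paper leaves it implicit.
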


\begin{proof}
Let $W=(U_1^\dagger\otimes I_{n_2})U$. \Cref{eq:U1} implies that $A\otimes I_{n_2}=W^\dagger(A\otimes I_{n_2})W$ for all $A\in \mathcal{M}_{n_1}(\C)$. So $W\in \big(\mathcal{M}_{n_1}(\C)\otimes I_{n_2} \big)'=I_{n_1}\otimes \mathcal{M}_{n_2}(\C)$. Since $W$ is unitary, we conclude that $W=I_{n_1}\otimes U_2$ for some unitary $U_2\in \mathcal{M}_{n_2}(\C)$. It follows that $U=U_1\otimes U_2$.
\end{proof}

Recall from \cref{sec:preliminaries} that the set $\mathcal{B}_k$ contains all $k$-qutrit Pauli operators that generate the $k$-qutrit Pauli group $\mathcal{P}_k$, and that the set $\Clifford_n$ of $n$-qutrit Clifford circuits is generated by $H$, $S$, $\CZ$, and $-\omega$ via matrix multiplication and tensor product. 
For a composite system of $k_1+k_2$ qutrits, we identify $\mathcal{M}_{3^{k_1+k_2}}(\C)$ with $\mathcal{M}_{3^{k_1}}(\C)\otimes \mathcal{M}_{3^{k_2}}(\C)$. Under this identification, it is easy to see that $C\otimes I_{3^{k_2}}\in \mathcal{C}_{k_1+k_2}$ if and only if $C\in \mathcal{C}_{k_1}$.

\begin{proposition} \label{prop:untangled}
     Let $k_1, k_2\in \N$, and let $C\in \mathcal{C}_{k_1+k_2}$. Suppose that for every $B \in \mathcal{B}_{k_1}$ there exists a $B' \in \mathcal{P}_{k_1}$ such that
     \begin{equation}
         C(B \otimes I_{3^{k_2}})C^\dagger = B' \otimes I_{3^{k_2}}. \label{eq:CBC}
     \end{equation}
     Then there exist $C_1 \in \Clifford_{k_1}$ and $C_2 \in \Clifford_{k_2}$ such that $C = C_1\otimes C_2$, up to scalar.
\end{proposition}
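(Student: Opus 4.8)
The plan is to reduce the statement to the untangling \cref{lemma:untangled} by first upgrading the hypothesis, which is stated only on the Pauli generators $\mathcal{B}_{k_1}$, to a statement about the whole matrix algebra $\mathcal{M}_{3^{k_1}}(\C)$. Since conjugation by $C$ is multiplicative and $\mathcal{B}_{k_1}$ generates $\Pauli_{k_1}$, \eqref{eq:CBC} immediately gives $C(P \otimes I_{3^{k_2}})C^\dagger = P' \otimes I_{3^{k_2}}$ with $P' \in \Pauli_{k_1}$ for every $P \in \Pauli_{k_1}$. Because the Paulis $\Pauli_{k_1}$ span $\mathcal{M}_{3^{k_1}}(\C)$ (as used in the proof of \cref{prop:scalars}), linearity then yields $C(A \otimes I_{3^{k_2}})C^\dagger \in \mathcal{M}_{3^{k_1}}(\C) \otimes I_{3^{k_2}}$ for all $A$. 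This lets me define a map $\psi \colon \mathcal{M}_{3^{k_1}}(\C) \to \mathcal{M}_{3^{k_1}}(\C)$ by $C(A \otimes I_{3^{k_2}})C^\dagger = \psi(A) \otimes I_{3^{k_2}}$.

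Next I would check that $\psi$ is a unital $*$-automorphism: multiplicativity and $*$-preservation are inherited from conjugation by the unitary $C$, and injectivity (hence, in finite dimension, bijectivity) follows since $\psi(A) = 0$ forces $A \otimes I = 0$. By the Skolem--Noether theorem every automorphism of $\mathcal{M}_{3^{k_1}}(\C)$ is inner, and since $\psi$ is a $*$-automorphism the implementing operator may be chosen unitary, so $\psi(A) = U_1 A U_1^\dagger$ for some unitary $U_1$. Now $C$ satisfies exactly the hypothesis \eqref{eq:U1} of \cref{lemma:untangled} with $(n_1, n_2) = (3^{k_1}, 3^{k_2})$, and that lemma delivers a unitary $U_2$ with $C = U_1 \otimes U_2$.

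It then remains to show $U_1 \in \Clifford_{k_1}$ and $U_2 \in \Clifford_{k_2}$. For $U_1$ this is direct: for each generator $B \in \mathcal{B}_{k_1}$, comparing $C(B \otimes I)C^\dagger = (U_1 B U_1^\dagger) \otimes I$ with \eqref{eq:CBC} and cancelling the $\otimes I$ gives $U_1 B U_1^\dagger = B' \in \Pauli_{k_1}$; multiplicativity propagates this to all of $\Pauli_{k_1}$, so $U_1$ normalizes $\Pauli_{k_1}$. For $U_2$ I would take any $Q \in \Pauli_{k_2}$ and use that $C$ is Clifford: $C(I \otimes Q)C^\dagger = I \otimes (U_2 Q U_2^\dagger)$ is a Pauli in $\Pauli_{k_1+k_2}$. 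Writing this Pauli as $\omega^c(P_1 \otimes \cdots \otimes P_{k_1+k_2})$ and invoking uniqueness of tensor factorization, the first $k_1$ tensor factors must collapse to a scalar multiple of $I_{3^{k_1}}$, which forces each $P_j = I$ for $j \leq k_1$ and pushes the entire phase $\omega^c$ onto the remaining factors; hence $U_2 Q U_2^\dagger = \omega^c (P_{k_1+1} \otimes \cdots \otimes P_{k_1+k_2}) \in \Pauli_{k_2}$, so $U_2 \in \Clifford_{k_2}$.

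I expect the main obstacle to be the bookkeeping in this last factorization step: one must argue not merely that $U_2 Q U_2^\dagger$ is a scalar times a Pauli, but that the scalar is a power of $\omega$, so that it genuinely lands in $\Pauli_{k_2}$ rather than in a larger set of scaled Paulis; this is where the structure of $\Pauli_{k_1+k_2}$ (phases restricted to cube roots of unity) is essential. A secondary subtlety is the promotion of the algebra automorphism $\psi$ to a unitary implementer $U_1$; I would either cite Skolem--Noether together with the standard fact that $*$-automorphisms are unitarily implemented, or prove it by hand. Finally, I would dispose of the degenerate cases $k_1 = 0$ or $k_2 = 0$ separately, where the claim is immediate (the hypothesis is vacuous and $C = 1 \otimes C$ or $C = C \otimes 1$), and note that the scalar freedom in the statement reflects the usual $U_1 \otimes U_2 = (\lambda U_1)\otimes(\lambda^{-1}U_2)$ ambiguity.
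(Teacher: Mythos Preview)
Your proof is correct and follows the same overall architecture as the paper's: extend the hypothesis from $\mathcal{B}_{k_1}$ to all of $\Pauli_{k_1}$ and then to $\mathcal{M}_{3^{k_1}}(\C)$, find a unitary implementer on the first factor, invoke \cref{lemma:untangled}, and finally check that both tensor factors are Clifford. The difference lies in how the implementing unitary is obtained. You appeal to Skolem--Noether to produce a unitary $U_1$ with $\psi(A)=U_1AU_1^\dagger$, and then separately verify $U_1\in\Clifford_{k_1}$ by checking it normalizes $\Pauli_{k_1}$. The paper instead observes that the map $B\mapsto B'$ on $\Pauli_{k_1}$ is a scalar-fixing automorphism of the Pauli group, and invokes its own \cref{prop:-normal-form} to obtain $C_1\in\Clifford_{k_1}$ directly, skipping both Skolem--Noether and the subsequent Clifford check. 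Likewise, for the second factor the paper dispatches $C_2\in\Clifford_{k_2}$ in one line via the remark (stated just before the proposition) that $C\otimes I_{3^{k_2}}\in\Clifford_{k_1+k_2}$ iff $C\in\Clifford_{k_1}$: since $C$ and $C_1\otimes I$ are Clifford, so is $I\otimes C_2$, hence $C_2$. Your hands-on argument for $U_2$ via the tensor factorization of Paulis works fine, but the paper's route avoids the phase bookkeeping you flagged as a potential obstacle. In short: your proof trades the paper's internal machinery for a standard operator-algebra tool, at the cost of two extra verification steps.
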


\begin{proof}
We denote by $\tr_2$ the partial trace $id\otimes \tr$, which traces out the second system $\mathcal{M}_{3^{k_2}}(\C)$. Since $\mathcal{B}_{k_1}$ generates $\mathcal{P}_{k_1}$, it follows from \Cref{eq:CBC} that the mapping 
\begin{align*}
    \Phi: \mathcal{B}_{k_1}&\rightarrow \mathcal{M}_{3^{k_1}}(\C)\\
    B &\mapsto \frac{1}{3^{k_2}}\tr_{2}\big( C(B \otimes I_{3^{k_2}})C^\dagger \big)
\end{align*}
extends to an automorphism $\Phi$ of $\mathcal{P}_{k_1}$ such that
\begin{equation*}
     C(B \otimes I_{3^{k_2}})C^\dagger = \Phi(B) \otimes I_{3^{k_2}}.
\end{equation*}
Hence, there exists a Clifford operator $C_1 \in \Clifford_{k_1}$ such that $\Phi(B)=C_1 B C_1^\dagger$ for all $B\in \mathcal{P}_{k_1}$. Since $\mathcal{P}_{k_1}$ spans $\mathcal{M}_{3^{k_1}}(\C)$, we conclude that 
\begin{equation*}
    C(B \otimes I_{3^{k_2}})C^\dagger = C_1 B C_1^\dagger \otimes I_{3^{k_2}}
\end{equation*}
for all $B\in \mathcal{M}_{3^{k_1}}(\C)$. Then by \Cref{lemma:untangled}, $C=C_1\otimes C_2$ for some unitary $C_2\in \mathcal{M}_{3^{k_2}}(\C)$. Since $C$ and $C_1$ are both Clifford operators, it follows that $C_2\in \Clifford_{k_2}$.
\end{proof}

\begin{lemma}
The dirty gate resulting from pushing a CZ through two consecutive $B$ boxes is separable and only has Paulis on the first qutrit:
\begin{equation}
    \scalebox{1}{\tikzfig{figures/Generalizability/CZBBdir}}
    \label{eq:dir-CZBB}
\end{equation}
where $\dir = C_1\otimes C_2$ and $C_1$ is a Pauli.
\label{lem:CZ-BB-dir}
\end{lemma}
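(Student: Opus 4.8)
The plan is to apply \cref{prop:untangled} with $k_1=1$ and $k_2=2$ to obtain separability, and then to sharpen the resulting first factor to a Pauli. Write $M$ for the two consecutive clean $B$ boxes on the left-hand side of \eqref{eq:dir-CZBB} and $M'$ for the updated $B$ boxes determined in \cref{lem:CZ.B.B-CZ.D.D-updated}, so that the box relation reads $\CZ^{(1,2)} M = M'\,\dir$. As automorphisms of $\Pauli_3$ this gives $\dir\bullet R = M\bullet\bigl(\CZ^{(1,2)}\bullet\bigl((M')^{-1}\bullet R\bigr)\bigr)$, which is exactly the computation prescribed by \cref{proc:find-d}. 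I would evaluate it on the two first-qutrit generators $R\in\{X\otimes I\otimes I,\;Z\otimes I\otimes I\}$.

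The heart of the argument is to show that these two generators are sent by $\dir$ to $(\text{scalar})\,X\otimes I\otimes I$ and $(\text{scalar})\,Z\otimes I\otimes I$ respectively. The driving fact is \cref{def:Two-qutrit-action}: conjugation by $\CZ$ fixes $Z\otimes I$ and sends $X\otimes I$ to $X\otimes Z^2$, so it never alters the $X$-exponent on the control qutrit and only deposits a $Z$-type correction on the target qutrit. The decisive point is that this $Z$-type correction on the second qutrit is precisely what the updated boxes $M'$ are designed to absorb; this is the content of the exponent bookkeeping already performed in \cref{lem:CZ.B.B-CZ.D.D-updated}, combined with the additional $B$-box action recorded in the third column of \cref{fig:two-qutrit-normal-boxes-auto}. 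Tracking $R$ through $(M')^{-1}$, then $\CZ^{(1,2)}$, then $M$, the corrections on qutrits $2$ and $3$ cancel, and what remains on the first qutrit is a power of $X$ (resp.\ $Z$) times a scalar, with no propagation onto the other two wires.

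Granting this, the two claims follow quickly. Since $\dir\bullet(B\otimes I\otimes I)=B'\otimes I\otimes I$ with $B'\in\Pauli_1$ for each $B\in\mathcal{B}_1$, \cref{prop:untangled} with $(k_1,k_2)=(1,2)$ yields $\dir=C_1\otimes C_2$ for some $C_1\in\Clifford_1$ and $C_2\in\Clifford_2$, up to a scalar; this is separability. For the Pauli claim, separability gives $C_1\bullet X = \dir\bullet(X\otimes I\otimes I)$ on the first factor, which we showed is a scalar multiple of $X$, and likewise $C_1\bullet Z$ is a scalar multiple of $Z$. Hence $C_1$ acts on the single-qutrit generators $X$ and $Z$ exactly as some Pauli operator does, so by \cref{cor:Clifford-stabilisers} it equals that Pauli up to a scalar; that is, $C_1$ is a Pauli.

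The main obstacle is the middle paragraph: verifying that the $Z$-type correction dumped onto the second qutrit by the $\CZ$ is entirely consumed by $M'$, leaving a genuine Pauli---and only a Pauli---on the first wire. This is not automatic but is exactly the property that dictated the particular construction of the $B$ boxes (the ``desirable'' second action of \cref{fig:two-qutrit-normal-boxes-auto}). I expect to discharge it by reusing the explicit index updates of \cref{lem:CZ.B.B-CZ.D.D-updated}, whose exponents show that the first-qutrit $X$-content is preserved and that the only datum handed to the second qutrit is the $Z$-power that $M'$ is built to swallow; consequently no $X$-$Z$ mixing and no cross-qutrit entanglement can appear on the first wire.
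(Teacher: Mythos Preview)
Your proposal is correct and follows essentially the same route as the paper: apply \cref{prop:untangled} with $(k_1,k_2)=(1,2)$ after verifying, via \cref{proc:find-d}, that $\dir$ sends each of $X\otimes I_9$ and $Z\otimes I_9$ to a scalar multiple of itself, using the required $B$-box action for the $Z$ generator and the additional action in the third column of \cref{fig:two-qutrit-normal-boxes-auto} for the $X$ generator, together with the index updates of \cref{lem:CZ.B.B-CZ.D.D-updated}. The only cosmetic difference is that the paper phrases the tracking as computing \emph{preimages} of $Z\otimes I_9$ and $X\otimes I_9$ through the left-hand side of \eqref{eq:dir-CZBB}, and carries out the explicit exponent arithmetic (your ``middle paragraph'') in equations \eqref{eq:BB-dir-1}--\eqref{eq:BB-dir-4}, arriving at $\dir\bullet(Z\otimes I_9)=\omega^{2ac}Z\otimes I_9$ and $\dir\bullet(X\otimes I_9)=\omega^{-2ac}X\otimes I_9$.
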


\begin{proof}
Based on \cref{lem:CZ.B.B-CZ.D.D-updated}, we can find the residual dirty gates after pushing a $\CZ$ through two consecutive $B$ boxes by left-appending $\left(I\otimes B_{a,b+2c}^{-1}\right)$ and $\left(B_{c,d+2a}^{-1}\otimes I\right)$ to both sides of~\eqref{eq:BB} as we see in~\eqref{eq:dir-CZBB}.

By \cref{prop:untangled} and the single-qutrit Clifford actions in \cref{def:Single-qutrit-action}, $\dir$ is separable and only has Paulis on the top qutrits when $\dir$ maps $X \otimes I_9$ to $\omega^{t_1}X \otimes I_9$ and $Z \otimes I_9$ to $\omega^{t_2}Z \otimes I_9$, for $t_1,t_2 \in \Z_3$.
  
First, let us find the preimage of $Z\otimes I_9$ in the lefthand side of \eqref{eq:dir-CZBB}. Note that

\begin{align}
    \left(B_{a,b+2c}^{-1}\right)^\dagger(X^aZ^{b+2c} \otimes Z)\left(B_{a,b+2c}^{-1}\right)&=\left(B_{a,b+2c}\right)(X^aZ^{b+2c} \otimes Z)\left(B_{a,b+2c}\right)^{\dagger}\xlongequal[]{\cref{fig:two-qutrit-normal-boxes-auto}}Z\otimes I.
    \label{eq:BB-dir-1}\\[2 em]
    \left(B_{c,d+2a}^{-1}\right)^\dagger(\omega^{2ac}X^cZ^{d+2a} \otimes Z)\left(B_{c,d+2a}^{-1}\right)&=\omega^{2ac}\left(B_{c,d+2a}\right)(X^cZ^{d+2a} \otimes Z)\left(B_{c,d+2a}\right)^{\dagger}\xlongequal[]{\cref{fig:two-qutrit-normal-boxes-auto}}\omega^{2ac}Z\otimes I.
    \label{eq:BB-dir-2}
\end{align}

\eqref{eq:BB-dir-1} and \eqref{eq:BB-dir-2} imply that (reading the updating of the Paulis from right-to-left):

\begin{equation}
\scalebox{1}{\tikzfig{figures/NormalForm/CZ.B.B}}
    \label{eq:preimage-CZ.BB-ZII}
\end{equation}
    
Next, let us find the preimage of $X\otimes I_9$ in the lefthand side of \eqref{eq:dir-CZBB}. According to the third column of \cref{fig:two-qutrit-normal-boxes-auto}, we have

\[
    \scalebox{1}{\tikzfig{figures/Generalizability/CZBBdir2}}
\]

Since $2(b+2c)+a^2 - 1 = 2b + 4c + a^2 - 1 \equiv_3 2b+c+a^2-1$, we have
    
\begin{align}
&\left(B_{a,b+2c}^{-1}\right)^\dagger\left(\omega^{a(b+2c)}X^{2a}Z^{c+2b+a^2-1} \otimes X\right)\left(B_{a,b+2c}^{-1}\right)\notag\\
&=\left(B_{a,b+2c}\right)\left(\omega^{a(b+2c)}X^{2a}Z^{2(b+2c)+a^2-1} \otimes X\right)\left(B_{a,b+2c}\right)^{\dagger}\xlongequal[]{\cref{fig:two-qutrit-normal-boxes-auto}}X\otimes I.
    \label{eq:BB-dir-3}
\end{align}

Since $2(d+2a)+c^2 - 1 = 2d + 4a + c^2 - 1 \equiv_3 2d+a+c^2-1$, we have

    \begin{align}
&\left(B_{c,d+2a}^{-1}\right)^\dagger\left(\omega^{-2ac}\omega^{c(d+2a)}X^{2c}Z^{a+2d+c^2-1} \otimes X\right)\left(B_{c,d+2a}^{-1}\right)\notag\\
&=\omega^{-2ac}\left(B_{c,d+2a}\right)\left(\omega^{c(d+2a)}X^{2c}Z^{2(d+2a)+c^2 - 1} \otimes X\right)\left(B_{c,d+2a}\right)^{\dagger}\xlongequal[]{\cref{fig:two-qutrit-normal-boxes-auto}}\omega^{-2ac}X\otimes I.
    \label{eq:BB-dir-4}
\end{align}

\eqref{eq:BB-dir-3} and \eqref{eq:BB-dir-4} imply that

\begin{equation}
    \scalebox{1}{\tikzfig{figures/Generalizability/CZ.B.B2}}
    \label{eq:preimage-CZ.BB-XII}
\end{equation}

Putting \eqref{eq:dir-CZBB}, \eqref{eq:preimage-CZ.BB-ZII}, and \eqref{eq:preimage-CZ.BB-XII} together, we have

\begin{equation}
\scalebox{1}{\tikzfig{figures/Generalizability/CZBBdir3}}
    \label{eq:CZ.BB-dir}
\end{equation}
    \label{obs:B-boxes}
\end{proof}

\begin{lemma}
The dirty gate resulting from pushing a CZ through two consecutive $D$ boxes is separable and only has Paulis on the third qutrit:
\begin{equation}
    \scalebox{1}{\tikzfig{figures/Generalizability/CZDDdir}}
    \label{eq:dir-CZDD}
\end{equation}
where $\dir = C_1\otimes C_2$ and $C_2$ is a Pauli.
\label{lem:CZ-DD-dir}
\end{lemma}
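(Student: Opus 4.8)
The plan is to mirror the proof of \cref{lem:CZ-BB-dir} under the top--bottom symmetry that exchanges $B$ and $D$ boxes and swaps the roles of the first and third qutrits. First I would isolate $\dir$ exactly as in the $B$-box case: starting from \eqref{eq:DD} of \cref{lem:CZ.B.B-CZ.D.D-updated}, which records the updated $D$-box indices after pushing $I\otimes\CZ$ through two consecutive $D$ boxes, I left-append the inverses of the two updated $D$ boxes to both sides. This yields the composite on the left-hand side of \eqref{eq:dir-CZDD} and presents $\dir$ explicitly in terms of the original $\CZ$, the original $D$ boxes, and the inverse updated $D$ boxes.

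Next I would reduce the claim to two Pauli computations. By \cref{prop:untangled}, applied with the subsystems in reversed order (tracing out the first two qutrits instead of the last) together with the single-qutrit actions of \cref{def:Single-qutrit-action}, the operator $\dir$ is separable with a Pauli on the third qutrit precisely when it maps $I_9\otimes X$ to $\omega^{t_1}\,I_9\otimes X$ and $I_9\otimes Z$ to $\omega^{t_2}\,I_9\otimes Z$ for some $t_1,t_2\in\Z_3$. I would establish these two facts by computing preimages, reusing the $\CZ$-conjugation identity $\CZ^\dagger(X^aZ^b\otimes X^cZ^d)\CZ=\omega^{2ac}X^aZ^{b+2c}\otimes X^cZ^{d+2a}$ from \cref{lem:CZ.B.B-CZ.D.D-updated} and the two actions of the $D$ box recorded in \cref{fig:two-qutrit-normal-boxes-auto}. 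The $Z$-preimage is governed by the clean propagation rule $Z\otimes I\mapsto I\otimes Z$ and, as in \eqref{eq:BB-dir-1}--\eqref{eq:preimage-CZ.BB-ZII}, should return a pure third-qutrit $Z$ up to the phase $\omega^{2ac}$. The $X$-preimage is the analogue of \eqref{eq:BB-dir-3}--\eqref{eq:preimage-CZ.BB-XII}: threading $I_9\otimes X$ back through both $D$ boxes and the $\CZ$ using the remaining $D$-box action, and checking that the result is a pure third-qutrit $X$ up to a phase.

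The main obstacle is this $X$-preimage bookkeeping. As in the $B$-box proof, the $X$-content picks up $\omega$-phases each time it is reordered past the $D$-box structure and the $\CZ$ (including the $\omega^{2ac}$ correction), and the resulting exponents must be reduced modulo $3$ (the analogue of the reduction $2(b+2c)+a^2-1\equiv_3 2b+c+a^2-1$ used for $B$ boxes) to confirm that neither $X$- nor $Z$-content leaks onto the first two qutrits. Once both preimages are shown to have the stated form, \cref{prop:untangled} gives $\dir=C_1\otimes C_2$ with $C_2$ a single-qutrit Pauli, which is the assertion of the lemma.
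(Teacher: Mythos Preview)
Your proposal is correct and follows essentially the same approach as the paper's proof: isolate $\dir$ by inverting the updated $D$ boxes from \eqref{eq:DD}, then verify via \cref{prop:untangled} that $\dir$ sends $I_9\otimes Z$ and $I_9\otimes X$ to phase multiples of themselves using the $D$-box actions in \cref{fig:two-qutrit-normal-boxes-auto}. One minor note: the $D$-box case is actually cleaner than the $B$-box analogue you invoke---because the required $D$-box action already reads $X\otimes X^aZ^b\mapsto I\otimes X$ (combine the second and third columns of \cref{fig:two-qutrit-normal-boxes-auto} at $i=0$), the $X$-preimage needs no modular reduction of exponents, and the $Z$-preimage propagates with no phase rather than $\omega^{2ac}$.
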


\begin{proof}
    Based on \cref{lem:CZ.B.B-CZ.D.D-updated}, we can find the residual dirty gates after pushing a $\CZ$ through two consecutive $D$ boxes by left-appending $\left(D_{a,b+2c}^{-1}\otimes I\right)$ and  $\left(I\otimes D_{c,d+2a}^{-1}\right)$ to both sides of \eqref{eq:DD} as we see in \eqref{eq:dir-CZDD}.

According to the dirty gate restrictions in \cref{fig:dirty-gate-restrictions-1,fig:dirty-gate-restrictions-2}, our proposed $D$ box construction is suitable for Clifford normalization if there is no single $H$ gate acting on the bottom wire of $\dir$. One way to do this is to show that $\dir$ is a separable operator and the bottom wire is only acted on by Pauli $X$ or $Z$ gates. By \cref{prop:untangled} and the single-qutrit Clifford actions in \cref{def:Single-qutrit-action}, it suffices to show that $\dir$ maps $I_9 \otimes X$ to $I_9 \otimes \omega^{t_1}X$ and $I_9 \otimes Z$ to $I_9 \otimes \omega^{t_2}Z$, for $t_1,t_2 \in \Z_3$.

First, let us find the preimage of $I_9 \otimes Z$ in the lefthand side of \eqref{eq:dir-CZDD}. According to the third column of \cref{fig:two-qutrit-normal-boxes-auto}, we have

\begin{equation}
\scalebox{1}{\tikzfig{figures/Generalizability/CZDDdir2}}
    \label{eq:preimage-CZ.DD-IIZ}
\end{equation}

Next, let us find the preimage of $I_9 \otimes X$ in the lefthand side of \eqref{eq:dir-CZDD}. According to the second and third columns of \cref{fig:two-qutrit-normal-boxes-auto}, a $D_{ab}$ box maps $X \otimes X^aZ^b$ to $I \otimes X$, for all $a,b\in \Z_3$. It follows that

\begin{equation}
\scalebox{1}{\tikzfig{figures/Generalizability/CZDDdir3}}
    \label{eq:preimage-CZ.DD-IIX}
\end{equation}

Putting \eqref{eq:dir-CZDD}, \eqref{eq:preimage-CZ.DD-IIZ}, and \eqref{eq:preimage-CZ.DD-IIX} together, we have

\begin{equation}
\scalebox{1}{\tikzfig{figures/Generalizability/CZDDdir4}}
    \label{eq:CZ.DD-dir}
\end{equation}

\end{proof}

\begin{remark}
    For the proofs of \cref{lem:CZ-BB-dir,lem:CZ-DD-dir} to work, we need the partial automorphism of $B$ and $D$ boxes specified in the second and third columns of \Cref{fig:two-qutrit-normal-boxes-auto}.
\end{remark}

\begin{lemma}
    For $a, b, c, d \in \Z_3$ and $\dir' \in \Clifford_2$,
    \begin{align}
       \tikzfig{figures/Generalizability/CZ.B.BLHS}\quad&=\quad\tikzfig{figures/Generalizability/CZ.B.BRHS}\label{eq:CZ.B.B} \\[2 em]
       \tikzfig{figures/Generalizability/CZ.D.DLHS}\quad&=\quad\tikzfig{figures/Generalizability/CZ.D.DRHS}\label{eq:CZ.D.D}
    \end{align}
    \label{lem:CZ.B.B-lem:CZ.D.D}
\end{lemma}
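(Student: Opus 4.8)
The plan is to assemble the two claimed identities directly from results already in hand: the updated-index relation of \cref{lem:CZ.B.B-CZ.D.D-updated} together with the separability statements of \cref{lem:CZ-BB-dir,lem:CZ-DD-dir}. The combinatorial data of each rewrite—the subscripts of the two updated boxes—is already fixed by \eqref{eq:BB} and \eqref{eq:DD}, so all that remains is to show that the three-qutrit residual dirty gate appearing there reduces, after peeling off a single permitted Pauli, to a genuine two-qutrit Clifford operator $\dir'$.

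I would first establish \eqref{eq:CZ.B.B}. By \cref{lem:CZ.B.B-CZ.D.D-updated}, pushing $\CZ$ through the two consecutive $B$ boxes returns the two boxes with the updated indices of \eqref{eq:BB}, together with a residual $\dir\in\Clifford_3$. By \cref{lem:CZ-BB-dir}, this $\dir$ is separable, $\dir=C_1\otimes C_2$, with $C_1\in\Pauli_1$ a Pauli acting on the first qutrit and $C_2$ a Clifford on the remaining two qutrits, hence $C_2\in\Clifford_2$. A solitary Pauli on the top wire is precisely one of the dirty gates permitted by \cref{def:dirty-normal-form} at that location, so I split $\dir=(C_1\otimes I_9)(I_3\otimes C_2)$, record $C_1$ as an explicit dirty Pauli on the first wire, and set $\dir'=C_2\in\Clifford_2$ on the bottom two wires. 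This is exactly the diagram \eqref{eq:CZ.B.B}.

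The identity \eqref{eq:CZ.D.D} follows by the vertically mirrored argument. Here \cref{lem:CZ.B.B-CZ.D.D-updated} supplies the updated $D$ boxes of \eqref{eq:DD} and a residual $\dir\in\Clifford_3$, and \cref{lem:CZ-DD-dir} gives the factorization $\dir=C_1\otimes C_2$ in which now $C_2\in\Pauli_1$ is a Pauli on the \emph{third} qutrit and $C_1$ is a Clifford on the first two, so $C_1\in\Clifford_2$. Peeling off the permitted dirty Pauli $C_2$ on the bottom wire leaves $\dir'=C_1\in\Clifford_2$ on the top two wires, which is \eqref{eq:CZ.D.D}.

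The substantive work sits entirely in \cref{lem:CZ-BB-dir,lem:CZ-DD-dir}, whose proofs track the Clifford conjugation on the Pauli generators and invoke \cref{prop:untangled} to deduce separability; granting these, the present lemma is essentially bookkeeping. The one point I would verify carefully is that the extracted single-qutrit Pauli always lands at a location sanctioned by the dirty-normal-form constraints—the top wire in the $B$ case and the bottom wire in the $D$ case—so that the rewrite keeps the circuit within the class of dirty normal forms and never produces a forbidden solitary $H$ or $S$ gate.
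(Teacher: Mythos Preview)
Your argument is correct and follows essentially the same route as the paper: invoke \cref{lem:CZ.B.B-CZ.D.D-updated} for the updated box indices, then \cref{lem:CZ-BB-dir,lem:CZ-DD-dir} (with \cref{prop:untangled} behind them) to factor the residual $\dir$ as a Pauli on the extremal wire tensored with a two-qutrit Clifford $\dir'$. The one point where the paper is more explicit is that it reads off the \emph{specific} Pauli from the conjugation data already computed in \eqref{eq:CZ.BB-dir} and \eqref{eq:CZ.DD-dir}---namely $Z^{2ac}X^{2ac}$ on the top wire in the $B$ case and $Z^{ac}$ on the bottom wire in the $D$ case---whereas you leave it as an abstract Pauli $C_1$ (resp.\ $C_2$); since the right-hand sides of \eqref{eq:CZ.B.B} and \eqref{eq:CZ.D.D} display these particular gates, you would need that last identification (immediate from \cref{def:Single-qutrit-action}) to match the stated diagrams exactly.
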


\begin{proof}
    By \cref{lem:CZ.B.B-CZ.D.D-updated}, we can find the updated normal boxes in \eqref{eq:BB} and \eqref{eq:DD}. When pushing a $\CZ$ through two consecutive $B$ boxes, by \cref{lem:CZ-BB-dir,prop:untangled}, $\dir$ is a separable operator. According to \cref{app:generator-actions} and \eqref{eq:CZ.BB-dir}, $\dir = Z^{2ac}X^{2ac}\otimes \dir'$, with some $\dir' \in \Clifford_2$. When pushing a $\CZ$ through two consecutive $D$ boxes, by \cref{lem:CZ-DD-dir,prop:untangled}, $\dir$ is a separable operator. According to \cref{app:generator-actions} and \eqref{eq:CZ.DD-dir}, $\dir = \dir'\otimes Z^{ac}$, with some $\dir' \in \Clifford_2$.
\end{proof}

Based on the partial Pauli automorphisms specified in \cref{fig:two-qutrit-normal-boxes-auto}, we find concrete implementations of $B$ and $D$ boxes, as shown in \cref{fig:Z-Normal-Boxes,fig:X-Normal-Boxes}. Moreover, the dirty gates after pushing a $\CZ$ through two consecutive $B$ and $D$ boxes satisfy the restrictions outlined in \cref{fig:dirty-gate-restrictions-1,fig:dirty-gate-restrictions-2,fig:dirty-gate-restrictions-3}.

\paragraph{Summary}

For the whole procedure (i.e., Clifford normalization) to work, we need to derive all box relations following \cref{proc:find-d}, and check that the residual dirty gates satisfy the restrictions given in \cref{fig:dirty-gate-restrictions-1,fig:dirty-gate-restrictions-2,fig:dirty-gate-restrictions-3}. This can be done by inspecting every box relation listed in \cref{app:relations}. With these specific designs of normal boxes, we complete the task of designing the normal form for an arbitrary qutrit Clifford operator. As a result, the proof of completeness follows from all the box relations.
\end{document}